\newtheorem{proposition}{Proposition}
\newtheorem{lemma}{Lemma}
\newtheorem{remark}{Remark}
\newcommand{\RomanNumeralCaps}[1]{\MakeUppercase{\romannumeral #1}}
\begin{document}
	\title{STARS-assisted Near-field ISAC: Sensor Deployment and Beamforming Design}
 
\author{ 
	Na~Xue,~\IEEEmembership{Graduate Student Member,~IEEE}, Xidong~Mu,~\IEEEmembership{Member,~IEEE},   Yue~Chen,~\IEEEmembership{Senior Member,~IEEE},
	Yuanwei~Liu,~\IEEEmembership{Fellow,~IEEE} 
	\thanks{Part of this paper has been accepted at the IEEE Global Communications Conference (GLOBECOM), Cape Town, South Africa, 8–12 Dec, 2023~\cite{Na_conf3}. } 
	\thanks{N. Xue, Y. Chen are with Queen Mary University of London, London, UK (email:\{n.xue,yue.chen\}@qmul.ac.uk). X. Mu is with Queen's University Belfast, Belfast, UK (email: x.mu@qub.ac.uk). Y. Liu are with the University of Hong Kong, Hong Kong (email: yuanwei@hku.hk). }  
	}
	\maketitle
	\vspace{-2.5cm}
	
	\begin{abstract} 
	A simultaneously transmitting and reflecting surface (STARS) assisted near-field (NF) integrated sensing and communication (ISAC) framework is proposed, where the radio sensors are installed on the STARS to directly conduct the distance-domain sensing by exploiting the characteristic spherical wavefront. A new squared position error bound (SPEB) expression is derived to reveal the dependence on beamforming (BF) design and sensor deployment. To balance the trade-off between the SPEB and the sensor deployment cost, a cost function minimization problem, a cost function minimization problem is formulated to jointly optimize the sensor deployment, the active and passive BF, subject to communication and power consumption constraints. For the sensor deployment optimization, a joint sensor deployment algorithm is proposed by invoking the successive convex approximation. Under a specific relationship between the sensor numbers and BF design, we derive the optimal sensor interval in a closed-form expression. For the joint BF optimization, a penalty-based method is invoked. Simulation results validated that the derived SPEB expression is close to the exact SPEB, which reveals the Fisher information Matrix of position estimation in NF can be approximated as a diagonal matrix. Furthermore, the proposed algorithms achieve the best SPEB performance than the benchmark schemes accompanying the lowest deployment cost.
	\end{abstract}
	
	\begin{IEEEkeywords}
		Integrated sensing and communication (ISAC), Reconfigurable intelligent surfaces (RISs), Near-field (NF), sensor deployment, beamformer design
	\end{IEEEkeywords}
	
	\section{Introduction} 
	Wireless sensing has been deemed as an essential functionality for the sixth generation (6G) wireless networks to support the various emerging applications, such as extended reality (XR), Internet of Vehicles, and Metaverse~\cite{Saad20Network,Metaverse}. To fulfill the ubiquitous connectivity of the environment-aware applications, integrated sensing and communication (ISAC) utilizes the shared spectrum resources and infrastructures to unify the dual functionalities, where the electromagnetic wave simultaneously conveys the modulated information stream and conducts the wireless sensing (i.e., position, localization, tracking) via the reflected echoes~\cite{FanLiu18TWC,Fanliu_20TCOM}. Thus, the wireless propagation environment plays a vital role in the performance of ISAC. Due to the densely deployed infrastructure and obstacles (i.e. buildings and trees) in 6G wireless networks, it is hard to maintain the effective line-of-sight (LoS) links between the base station (BS) and users (i.e., communication user or sensing target), which inevitably degrades the performance of ISAC. 
	
    To overcome these limitations, reconfigurable intelligent surfaces (RISs) were proposed to reconstruct the wireless propagation environment by the phase shift control of the passive reflective elements~\cite{QingqingWu_21TCOM,QingqingWu_19TWC}. However, RISs only reflect the induced signal in one half-space without any coverage in another half-space. Fortunately, the simultaneously transmitting and reflecting RIS (STARS) were proposed to realize the full-space transmission~\cite{Yuanwei_21WC}. By adjusting the electric and magnetic currents of the STAR element, STARS split the induced signal into the reflecting signal and transmitting signal towards two sides to support full-space coverage. Considering the significant increase in the scale of RIS (STARS) and high frequencies in 6G, the near-field (NF) region has been extended to the range of hundreds of meters, where the electromagnetic (EM) waves are modelled as spherical wavefronts to precisely depict its characteristic. For a specific example of a single antenna at the BS, the cascaded links are under NF propagation as long as the transmitter–RIS distance or the RIS–receiver distance is less than $\frac{2 {D}_{{\rm{RIS}}}^{2}}{\lambda}$, where ${D}_{{\rm{RIS}}}$ and ${\lambda}$ denote the aperture size of RIS and the wavelength of the carrier frequency, respectively~\cite{MingyaoCui_Magzine,Xidong_NF_Mag}. For the massive antennas at the BS, the NF effect cannot be negligible in RIS-assisted 6G networks due to the further extended Rayleigh distance. Compared with the far-field (FF) channel, the direction and the distance of the signal can be analysed from the NF channel with the help of the antenna array~\cite{NF_yw}, which has the potential to elevate the ISAC performance.
    
	\subsection{Prior Work}
	
	 \begin{table*}[!ht]\large 
	 \caption{Our contributions in contrast to the state-of-the-art. }  \label{Table:Comparition}
		\begin{center}
			\centering
			\resizebox{\textwidth}{!}{
				\begin{tabular}{ !{\vrule width1.4pt}l !{\vrule width1.4pt}c !{\vrule width1.4pt}c!{\vrule width1.4pt}c!{\vrule width1.4pt}c!{\vrule width1.4pt}c!{\vrule width1.4pt}c!{\vrule width1.4pt} }
					\Xhline{1.4pt}
					\centering
					& ~\cite{XiaodanShan_22JSAC} & ~\cite{Yinghui_22JSAC} & ~\cite{Zhaolin_22STAR} & ~\cite{Adham_22TWC} & ~\cite{Haocheng_NF-ISAC} & \bf{Proposed} \\
					\Xhline{1.4pt}
					\centering 
					ISAC functionality & $\times$, radio sensing-only & $\surd$ & $\surd$ &$\times$ & $\surd$ & $\surd$ \\
					\hline
					\centering
					NF propagation & $\times$ & $\times$  &$\times$ & $\surd$ & $\surd$  &$\surd$ \\
					\hline
					\centering 
					Dedicated sensing beam & $\times$ & $\times$ & $\surd$ &$\surd$, sensing-only & $\times$ & $\surd$ \\
					\hline
					\centering
					RIS deployment & $\surd$ &$\surd$, separate RISs &$\surd$, STARS & $\times$  &$\times$  &$\surd$, STARS \\
					\hline
					\centering
					Sensor deployment & $\surd$,predefined & $\times$  &$\surd$, predefined & $\surd$, BS & $\times$ &$\surd$, ULA, optimized \\
					\hline
					\centering
					The radio sensing metric  & averaged received signal power & detection performance & {\rm{CRB}} & {\rm{CRB}} & {\rm{CRB}} & {\rm{SPEB}} \\
					\Xhline{1.4pt}
			\end{tabular}  }
		\end{center}
		\label{table:structure2}
	\end{table*}
	
	\subsubsection{STARS-assisted FF ISAC} 
     Benefiting from the characteristic of manufacturing the propagation environment, the interplay of RIS and ISAC has drawn great attention~\cite{XiaodanShan_22JSAC,Yinghui_22JSAC,XiaoMeng_22Conf}. By establishing a virtual link in the dead zone for both the communication users (CUs) and sensing targets (STs), the RIS-assisted ISAC provides an additional degree of freedom (DoF) to suppress the inter-functionality interference (i.e. the interference of radio sensing function on communication)~\cite{XiaoMeng_22Conf,Yinghui_22JSAC}. To further extend the coverage to support the ubiquitous connections, the STARS was proposed~\cite{JiaqiXu_21Letter,STAR_XD}. By manufacturing the transmission and reflection coefficients, the STAR elements work in energy splitting mode and are able to simultaneously support the reflecting and transmitting of the induced signal over different time slots. For example, a STARS-split ISAC framework is proposed, where the STs in the sensing half-space and the CUs in the communication half-space can be served at the same time~\cite{Zhaolin_22STAR}. To further support the CUs and STs among two sides of STARS, a cluster-based non-orthogonal multiple access (NOMA) assisted full-space STARS-ISAC framework is proposed~\cite{Na_24TWC}.

     \subsubsection{The Role of NF in ISAC}
     Benefiting from the embedded distance information in the phase of EM wavefronts, the research of NF has flourished in radio sensing and ISAC area~\cite{Liana_21TSP,Alessio_22JSTSP,Haocheng_3DLocalization,Yuhua_23TWC,Adham_22TWC}. For the radio sensing aspect, the conditional and unconditional Cram{\'e}r-Rao bound (CRB) under the NF propagation environment were derived in~\cite{Liana_21TSP}. The authors of~\cite{Alessio_22JSTSP} investigated the NF radio sensing based on a multiple-input-multiple-output (MIMO) testbed, which achieves a high degree of spatial resolution even with lower bandwidth. To fully exploit the embedded information in the NF channel modelling, the authors of~\cite{Haocheng_3DLocalization} analyzed the CRB of three dimension (3D) localization in a NF MIMO radar. To circumvent the blocked LoS link between the BS and the ST, a focal scanning method was adopted in a RIS-assisted system to roughly estimate the sensing target~\cite{Yuhua_23TWC}. By exploiting the additional angle information at the RIS, the authors of~\cite{Adham_22TWC} proposed a joint localization and synchronization algorithm under NF conditions to improve the accuracy of a single-antenna receiver. Furthermore, the NF propagation has also shown its advantages in beamfocusing to elevate communication performance~\cite{ZidongWu_23JSAC}. The authors in~\cite{Zhaolin23_NF} investigated the precoder design in NF ISAC, where the separate CRB for angle and distance was given.  
     
\subsection{Motivation and Contributions}
The motivations of our work can be summarised in the following three folds. 

\emph{Firstly}, conventional RIS-assisted ISAC cannot achieve a high estimation precision due to the multi-hop transmission links (i.e., BS$\to$STARS$\to$ST$\to$STARS$\to$BS). With the long round-trip distance during the multi-hop transmission links, the reflected echo signal at the BS induces high signal attenuation due to path loss. To combat this disadvantage, we employ the sensing-at-STARS architecture to elevate the radio sensing performance.

\emph{Secondly}, conventional FF ISAC in STARS-assisted systems cannot guarantee the estimation precision on distance. The multiple unknown parameters during the round-trip distance among the multi-hop transmission bring a great challenge to the precise estimation of distance. Benefiting from the spherical wavefronts modelling in NF propagation, the distance information can be directly estimated via the sensor arrays rather than the round-trip estimation. To the best of our knowledge, there is no existing work investigating the potential of NF-ISAC with the sensing-at-STARS architecture.   

\emph{Thirdly}, the densely embedded sensors at the STARS in NF ISAC system inevitably require high power consumption and hardware cost to work at a high frequency. Some prior works conduct qualitative analysis about the required number of sensor elements at STARS in FF~\cite{Zheng_24TWC,Zhaolin_22STAR}. How to evaluate the impact of sensor deployment on radio sensing performance and balance the radio sensing performance and the sensor deployment cost in a STARS-assisted NF-ISAC system is still open to be addressed. 

Motivated by these three points, we propose a STARS-assisted NF-ISAC framework, where the sensor elements are embedded at the STARS to simultaneously estimate the angle and distance. To evaluate the quantity impact of sensor deployment (i.e., sensor interval and the number of sensor elements), we first derive a new squared position error bound (SPEB). Inspired by the derived SPEB expression, the joint sensor deployment and beamforming (BF) design are investigated to balance the radio sensing performance and the sensor deployment cost. The main contributions of this work are summarized below. 

\begin{itemize} 
	\item We consider a STARS-assisted NF ISAC system, where the ISAC BS conduct the dual functionality via a STARS-enabled NF propagation link. To avoid signal attenuation among multi-hop sensing links, the `sensing-at-STARS' structure is employed. A new NF SPEB expression is derived to unveil the impact of sensor deployment on SPEB, which provides a guideline to investigate the sensor deployment and BF design problem to balance the SPEB performance and the sensor deployment cost. 

    \item We derive a new expression of NF SPEB by exploiting the property of the steering vector. Based on the derived SPEB expression, a weighted normalized SPEB and deployment cost minimization problem is proposed. For the sensor deployment optimization, both the sensor interval and the number of radio sensors are optimized to balance the SPEB performance and the sensor deployment cost. A joint successive convex approximation geometric programming algorithm is proposed to solve the mixed integer non-linear programming. By investigating the relationship between the number of sensors and BF design, a closed-form solution with a specific condition is given to reveal the aperture size compatibility among the transmitter and receiver.     

    \item We propose a penalty-based iterative algorithm to alternatively optimize the active BF at the BS and the passive BF at the STARS with the fixed sensor deployment. To deal with the rank one constraints of active BF, the Schur complement and semidefinite relaxation (SDR) approach is employed. While the successive convex approximation (SCA) is employed to deal with the non-convex constraints of passive BF at the STARS. 

    \item Our numerical results validate that the derived SPEB with the guideline of sensor deployment expression reaches the same performance compared with the state-of-art expression. Furthermore, the simulation results also verify the effectiveness of our proposed scheme. By the optimization of the sensor deployment at STARS, the performance of SPEB and weighted cost are both elevated. 
   
\end{itemize}
	
\subsection{Organization and Notation} 
\begin{table*}[!th]
	\caption{Summary of Key Symbols and their Definitions. } \label{table:notation}
	\begin{center}
		\begin{tabular}{|c|c|c|c|}
			\cline{1-4}
			{\textbf{Symbols}} & {\textbf{Definition}} ($^{*}$ denotes the optimized variable ) & {\textbf{Symbols}} & {\textbf{Definition}} ($^{*}$ denotes the optimized variable ) \\ \cline{1-4}
			$N$	&  Number of transmit antennas at the BS    &
			$M$   &  Number of the STAR elements    \\ \cline{1-4}
			${M}_{r}$ & Number of the sensor elements on STARS $^{*}$   & ${d}_{{\rm{B}}}$  & The antenna interval at the BS\\ \cline{1-4} 
			$d_{{\rm{R}}} $ & The interval between two adjacent STAR elements & 
			${d}_{s}$ & The interval between adjacent sensor elements  $^{*}$ \\ \cline{1-4} 
			${\mathbf{H}}_{{\rm{BR}}}$ & The LoS channel between the BS and the STARS & ${{\mathbf{g}}}_{c}$ & The channel between the STARS and the CU \\  \cline{1-4}
			${\bm{\alpha}}_{r}$ & The steering vector of ST $\to$ STARS & ${\bm{\alpha}}_{t} $ & The steering vector of STARS $\to$ ST \\ \cline{1-4}
			${\mathbf{w}}_{s}$ & The active BF vector for dedicated sensing signal & ${\mathbf{v}}_{c}$ & The active BF vector for joint C\&S beam \\ \cline{1-4}
			${\mathbf{x}}\left[n\right] $ & The transmit signal from the BS & ${\mathbf{x}}_{l} \left[n\right] $ & The reflected signal from the STARS \\ \cline{1-4} 
			${\mathbf{A}} \left( {r}_{s}, {\theta}_{s} \right) $  & The correlation matrix of the steering vector &
			${\dot{{\mathbf{A}}}}_{x}, x \in \left\{{r}_{s}, {\theta}_{s}\right\} $ & The partial derivatives of ${\mathbf{A}}$ w.r.t. $x$  \\ \cline{1-4}
			${\mathbf{v}}_{r,1} ({\mathbf{v}}_{t,1} ) $ & The auxiliary vector of the sensor (STAR element) for $ \frac{ \partial{{{\bm{\alpha}}}_{r}} } {{\partial{{r}_{s}}}}$ & ${\bm{\eta}} $ & The Cardesian coordinate of ST   \\ \cline{1-4}
			${\mathbf{v}}_{r,2} ({\mathbf{v}}_{t,2} ) $ &  The auxiliary vector of the sensor (STAR element) for $\frac{ \partial{{{\bm{\alpha}}}_{r}} } {{\partial{{\theta}_{s}}}}$ & ${\bm{\gamma}}$ & The polar coordinate of ST \\ \cline{1-4} 
			${{\bm{\Theta}}}_{l}$ & The diagonal coefficient matrix of STARS for $l, l \in \left\{r,t\right\}  $ space  & ${\mathbf{R}}_{x} $
			& Covariance matrix of ${\mathbf{x}}\left[n\right] $ at the BS \\ \cline{1-4}
			${\mathbf{R}}_{{\bar{{\mathbf{X}}}}_{r}} $ & Covariance matrix of ${\mathbf{x}}_{r}\left[n\right] $ at the STARS  & ${\mathbf{Y}}_{s}$
			& received echo signal at the sensors \\ \cline{1-4}   
		\end{tabular}
	\end{center}
\end{table*}

The rest of this paper is organized as follows. In Section \RomanNumeralCaps{2}, a STARS-ISAC system under NF propagation is presented, where the approximate NF SPEB expression is given. A weighted SPEB and deployment cost minimization problem is formulated by jointly optimizing the sensor deployment, active and passive BF design. In Section \RomanNumeralCaps{3}, the joint sensor deployment algorithm is first developed for the sensor deployment problem. After the decoupling of active and passive BF, the penalty-based BF design algorithm is developed. In Section \RomanNumeralCaps{4}, numerical results are provided to validate the precision of the derived SPEB expression and evaluate the characteristics of our proposed algorithms. Finally, we conclude this paper in Section \RomanNumeralCaps{5}. Table~\ref{table:notation} summarises the key parameters and their definitions throughout the whole paper.\\

\emph{Notations:} Scalars, vectors, and matrices are denoted by lower-case, bold-face lower-case, and bold-face upper-case letters, respectively; ${\mathbb{C}}^{N \times 1}$ denotes the $N \times 1$ complex-valued vectors; Symbol $\odot$ denotes the Hadamard (element-wise) product between two matrices (vectors);  $x^{*}$ denotes the complex conjugate of $x$; ${\mathbf{a}}^{H}$, $\left\| {\mathbf{a}} \right\|_{l}$ represent the conjugate transpose, the $l, l \in \left\{1,2\right\}$ norm of vector $\mathbf{a}$; Rank(${\mathbf{A}}$) and Tr(${\mathbf{A}}$) denote the rank and trace of matrix ${\mathbf{A}}$, respectively. ${\mathbf{A}} \succeq 0$ means that ${\mathbf{A}}$ is a positive semidefinite matrix. 
	
\section{System Model and Problem Formulation}
	\begin{figure}[t]
		\centering
		\includegraphics[width=3.5in]{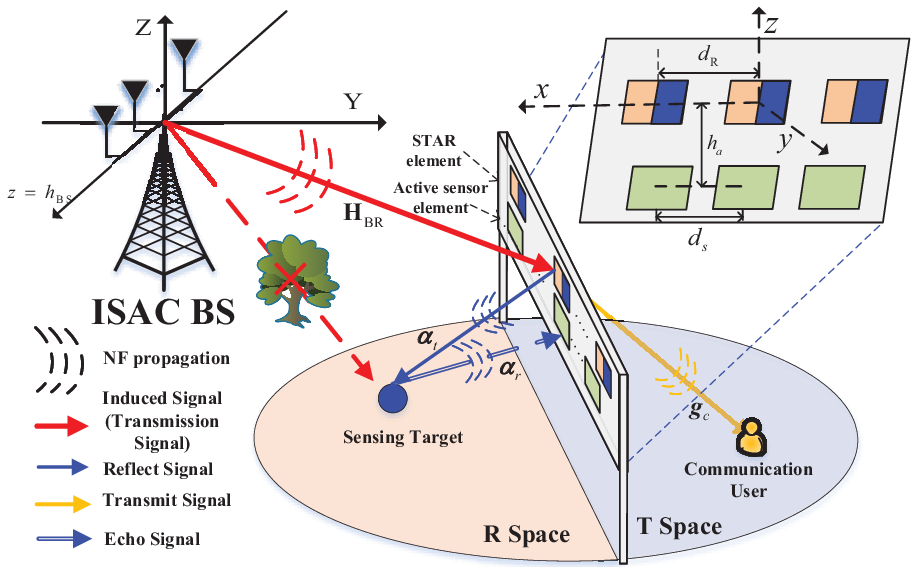}\\
		\caption{The STARS-assisted NF-ISAC System.}\label{Fig:System_NearField_ISAC_STAR}
	\end{figure}
	A STARS-assisted NF ISAC system is shown in Fig.~\ref{Fig:System_NearField_ISAC_STAR}, where an ISAC BS is equipped with $N$-antenna uniform linear array (ULA). The STARS is composed with STAR elements, sensor elements and a central controller, where the STAR elements are explored for the transmission, the sensor elements are explored for radio sensing functionality, and the central controller are equipped to monitor the status of the other elements. The single-antenna communication (CU) is located in the T space while the sensing target (ST) is located in the R space\footnote{It is noted that the proposed framework and algorithm can be extended to the scenario where STs are located in T space.}. To further illustrate the system model, we successively introduce the STARS model, NF propagation model and ISAC model in the following parts. 
	
	\subsection{STARS Model}
	By manipulating the electric and magnetic current on the surface, the STAR elements are exploited to split the induced signal into the reflecting and transmitting signals. The STARS is equipped with $M$ uniform linear distributed STAR elements with the equal interval ${d}_{{\rm{R}}}$. The STAR elements work in the energy split mode, which simultaneously transmits and reflects the induced signal. Let ${\beta}_{l,m},  l \in \left\{r,t\right\}, \forall m \in {\mathcal{M}}_{R}=\left\{-{\tilde{M}}, -{\tilde{M}}+1, ..., 0, ... , {\tilde{M}}-1, {\tilde{M}} \right\}, {\tilde{M}}=\frac{M}{2} $ denote the amplitude coefficients towards the reflecting and transmitting signal, respectively. The amplitude coefficients of $m$th STAR element are constrained by the energy conservation law ${\beta}_{r,m}^{2}+{\beta}_{t,m}^{2} \leq 1$. Let $x_{m}$ denote the induced signal on the $m$th element of STAR, the transmitting signal $t_{m}$ and the reflecting signal $r_{m}$ across STAR elements can be expressed as
	\begin{equation}
	   t_{m}=\left( {\beta}_{t,m} e^{j {\theta}_{t,m}} \right) x_{m}, \; 
		r_{m}=\left( {\beta}_{r,m} e^{j {\theta}_{r,m}} \right) x_{m}, 
	\end{equation}where $\theta_{t,m}, \theta_{r,m} \in \left[0,2\pi \right]$ denote the phase shift of the $m$th STAR element towards the transmitting signal and reflecting signal. To depict the whole transmitting and reflecting response of STARS, we introduce the transmitting coefficients matrix ${\mathbf{\Theta}}_{t}$ and the reflecting coefficients matrix ${\mathbf{\Theta}}_{r}$, which can be expressed as ${\mathbf{\Theta}}_{l}=\operatorname{diag}\left(\beta_{l, 1} e^{\jmath \theta_{l, 1}}, \ldots, \beta_{l, M} e^{\jmath \theta_{l, M}}\right) \in {\mathbb{C}}^{M \times M}, l \in\{t, r\} $. For the radio sensing functionality, the sensors are co-located at the STARS to collect the reflected echo signal to avoid the signal attenuation of the multi-hop transmissions (i.e., BS $\to$ STARS $\to$ ST $\to$ STARS $\to$ BS). Without loss of generality, we assume that the sensor elements keep the same distance ${h}_{a}$ meter (m) in the z-axis from the STAR elements along the uniform linear positions. Besides the reflected echo signal from the ST, the sensors also receive the transmit signal from the BS and the other STAR element. By exploiting the signal process of the sensors, this interference can be eliminated via offline training~\cite{XiaodanShan_22JSAC} or successive interference cancellation, where we assume there is no interference at the sensors to investigate the sensor deployment impact on radio sensing performance.
	
	\subsection{Near-field Propagation Models}
	Due to the high frequency and massive antenna and STAR elements, both the CU and ST are within the NF region among STARS. As the phase of EM wavefronts in NF simultaneously depends on the angle and distance, we successively illustrate the NF channel model among the STARS $\to$ CU/ST and the NF channel model among the BS $\to$ STARS. 
    \subsubsection{The NF propagation link between the BS and the STARS}
    As shown in Fig.~\ref{Fig:System_NearField_ISAC_STAR}, the BS is equipped with $N$ ULA antennas along $y=-r_{{\rm{BR}}} $, where ${r}_{{\rm{BR}}}$ denotes the distance between the BS and the STARS. We take the central antenna in BS ULA as a reference point with the coordinate ${\mathbf{p}}_{{\rm{BS}},0}=\left[0, -r_{{\rm{BR}}} \right]^{{\rm{T}}}$, where $r_{{\rm{BR}}}$ denotes the horizontal distance between the BS and the STARS in the y-axis. With the antenna interval $d_{{\rm{B}}}$, the coordinate of the $i$th transmit antenna is ${\mathbf{p}}_{{\rm{B}}}^{i}=\left( i d_{{\rm{B}}}, -r_{{\rm{BR}}}\right), i \in \left\{ -\frac{N}{2},-\frac{N}{2}+1, ..., \frac{N}{2}-1, \frac{N}{2} \right\} $. The distance between $i$th transmit antenna and $m$th STAR element is given by:
    \begin{equation}
    	r_{{\rm{BR}}}\left(m,i\right)= \left\|{ {\mathbf{p}}_{{\rm{B}}}^{i}- {\mathbf{p}}_{\rm{R}}^{m} }\right\|_{2}.
    \end{equation}The line-of-sight (LoS) channel $ {\mathbf{H}}_{{\rm{BR}}} \in {\mathbb{C}}^{M \times N}$ between the BS and STARS can be denoted as
    \begin{equation}
    	\left[ {\mathbf{H}}_{{\rm{BR}}} \right]_{ \left(m,i\right) }=\frac{ {\lambda}_{c} }{ 4\pi r_{{\rm{BR}}}\left(m,i\right) } e^{-j \left(r_{{\rm{BR}}}\left(m,i\right) -r_{{\rm{BR}}} \right)}.
    \end{equation}
    
    \begin{remark}
    	Compared with FF radio sensing, the NF radio sensing elevates radio sensing performance via the joint angular and distance estimation. The conventional FF radio sensing conducts the distance estimation with the help of wideband subcarriers~\cite{Alessio_22JSTSP}, which consumes more available spectrum resources. By exploiting the involved distance information in the NF steering vector, the distance ${r}_{s}$ from the STARS central can be directly estimated with the narrow band. Furthermore, the NF channel also enables more precise beam focusing, which alleviates the signal leakage to the undesired area to avoid the interference to the echo signals.
    \end{remark}
    
    \subsubsection{The NF propagation link between the STARS and ST/CU}
    As illustrated in Fig.~\ref{Fig:System_NearField_ISAC_STAR}, the STARS is equipped with $M$ uniformly linear deployment STAR elements in the $x-z$ plane. Without loss of generality, the central STAR element is chosen as the reference point with the coordinate ${\mathbf{p}}_{ {\rm{STAR}},0 }=\left[ 0,0 \right]^{{\rm{T}}}$. The coordinate of the CU in T space is given by ${\mathbf{p}}_{{\rm{CU}}}=\left[-{r}_{c} \cos \left({\theta}_{c}\right), -{r}_{c}\sin \left({\theta}_{c}\right) \right]^{{\rm{T}}}$ while the coordinate of the ST in R space is given by $\left[{r}_{s} \cos\left({\theta}_{s}\right), {r}_{s} \sin \left({\theta}_{s}\right) \right]^{{\rm{T}}} $, where ${r}_{z} \in \left(0, {r}_{{\rm{Ray}}}\right), z \in \left\{c,s\right\} $ denotes the distance between the STARS to the CU (ST) and ${\theta}_{z} \in \left( -\frac{\pi}{2}, \frac{\pi}{2} \right), z \in \left\{c,s\right\}$ denotes the azimuth angle between the transmit sides (reflect sides) of the STARS and the CU (ST). The adjacent STAR elements are equally spaced with $d_{\rm{R}}$. The propagation distance between the STAR element $m$ and the CU (ST) is given by: 
\begin{equation}
	\begin{split}
		\left\| { {\mathbf{r}}_{z}-{\mathbf{p}}_{\rm{R}}^{m} } \right\|_{2} =
		{r}_{z}^{ \left(m\right) }\left( r_{z},{\theta}_{z} \right)  \mathop  \approx \limits^{(a)} {r}_{z}-md_{{\rm{R}}}\cos{\theta_{z} }+\frac{ m^{2}d_{{\rm{R}}}^{2} } {2{r}_{z}}, 
	\end{split}
\end{equation}where $\mathop  \approx \limits^{(a)} $ comes from the Taylor expression $ \sqrt {1 + x} \approx 1 + \frac{1}{2}x + O({x^2})$.    
Considering the high frequency, the massive antennas and STAR elements array, the electromagnetic wave is depicted by the spherical wavefronts. The spherical wave propagation induces different amplitude and phase shifts among different channel links between each STAR element and the ST/CU. Thus, the corresponding steering vector ${\bm{\alpha}}\left( r_{z}, {\theta}_{z} \right) \in {\mathbb{C}}^{{M} \times 1 }, z \in \left\{ c,s \right\} $ between STAR element $m$ and the CU (ST) is given by: 
\begin{equation}\label{NearField_Steer}
	\left[{\bm{\alpha}} \left( r_{z},{\theta}_{z} \right)\right]_{m} 
		= e^{ -j\frac{2\pi}{ {\lambda}_{c} } \left( -m d_{\rm{R}}\cos{\theta_{z}}+\frac{ m^{2}d_{\rm{R}}^{2} } {2{r}_{z}} \right) }. 
\end{equation}Based on the steering vector expression in \eqref{NearField_Steer}, the communication channel $ {\mathbf{g}}_{c} \in {\mathbb{C}}^{M \times 1 }$ between STARS and the CU is given by 
\begin{equation}
	{\mathbf{g}}_{c}= {\bm{{\beta}}}\left( r_{z} \right) \odot  {\bm{\alpha}} \left( r_{z},{\theta}_{z} \right),
\end{equation} where ${\bm{{\beta}}}\left( r_{z} \right) \in {\mathbb{C}}^{ {M} \times 1 }$ denotes the complex channel gain with the expression $\left[ {\bm{{\beta}}}\left( r_{z} \right) \right]_{m} = \frac{ {\lambda}_{c} }{4\pi r_{z}^{m} \left( r_{z},{\theta}_{z} \right)}$~\cite{MingyaoCui_23TWC}. To investigate the lower bound of the radio sensing performance, we assume that the CSI of all channels (i.e., ${\mathbf{H}}_{{\rm{BR}}}$ and ${\mathbf{g}}_{c} $) is perfectly known at the BS, where the CSI can be estimated via the numerous efficient channel estimation schemes~\cite{NF_CSI_23TVT,NF_CSI_24JSAC}.

\subsubsection{The NF propagation link between the ST and the sensors at STARS}
As illustrated in Fig.~\ref{Fig:System_NearField_ISAC_STAR}, the sensor array at the STARS is composed of $M_{r}$ uniformly linear deployment sensors in the $x-z$ plane. The adjacent sensors are equally spaced with $d_{\rm{s}}$. The corresponding coordinates of the $m$th sensors $m, \forall m \in {\mathcal{M}}_{s}=\left\{-{\tilde{M}}_{r}, -{\tilde{M}}_{r}+1, ..., {\tilde{M}}_{r}-1,  {\tilde{M}}_{r} \right\} $ is given by ${\mathbf{p}}_{\rm{s}}^{m}=\left[md_{{\rm{s}}}, -{h}_{a}\right]^{T} $, where ${h}_{a}$ denotes the distance in along z axis between the STAR array and the sensor array. Similar to the NF propagation link between the STAR and the ST, the propagation distance between the ST and the sensor element $m$ can be expressed as $\left\| { {\mathbf{r}}_{s}-{\mathbf{p}}_{\rm{s}}^{m} } \right\|_{2} \approx {r}_{s}-md_{{\rm{s}}}\cos{\theta_{s} }+\frac{ m^{2}d_{{\rm{s}}}^{2} } {2{r}_{s}}$. As the ST is located in the NF region, the corresponding steering vector ${\bm{\alpha}}_{r}\left( {\tilde{r}}_{s}, {\theta}_{s} \right) \in {\mathbb{C}}^{{M} \times 1 }$ between the ST and sensor element $m$ is given by: 
	\begin{equation}\label{NF_STtoSTARS}
		\left[{\bm{\alpha}}_{r} \left( {\tilde{r}}_{s},{\theta}_{s} \right)\right]_{m} 
		= e^{ -j\frac{2\pi}{ {\lambda}_{c} } \left( -m d_{\rm{s}}\cos{\theta_{s}}+\frac{ m^{2}d_{\rm{s}}^{2} } {2{\tilde{r}}_{s}} \right) }. 
\end{equation} 

\vspace{-0.6cm}
	\subsection{ISAC Model}
	
    \subsubsection{Communication Model}
    For the communication functionality, the joint communication and sensing (C\&S) beam is exploited to convey the modulated information while the dedicated sensing beam is utilized to improve the radio sensing performance. Let $s_{1}\left[ n \right], c_{1}\left[ n \right] $ denote the dedicated sensing signal towards the ST and the joint communication and sensing (C\&S) signal towards CU, respectively. Then, the transmit signal from BS can be expressed as: 
    \begin{equation}\label{eq:transmit signal}
    	{\mathbf{x}}[n]={\mathbf{x}}_{sen}[n]+ {\mathbf{x}}_{{\rm{ISAC}}}[n] = {\mathbf{w}}_{s}s_{1}\left[ n \right] + {{\mathbf{v}}_c} c_{1}\left[ n \right], 
    \end{equation}where ${\mathbf{w}}_{s} \in {\mathbb{C}}^{N \times 1}$ and ${\mathbf{v}}_{c} \in {\mathbb{C}}^{N \times 1}$ denote the active BF vector for dedicated sensing signal and the active BF vector for CU, respectively. Without loss of generality, we assume that the dedicated sensing signal $s_{1}\left[ n \right] $ and joint C\&S signal $c_{1}\left[ n \right] $ are statistically independent with each other and all of them have zero mean and unit power. Thus, the total transmission power consumption for the integrated signal is given by \begin{equation}
    	P={\mathbb{E}}\left[ {{{\mathbf{x}}^H}\left[ n \right] {\mathbf{x}} \left[ n \right] } \right] = {{\mathbf{w}}_s^H} {{\mathbf{w}}_s} + {{\mathbf{v}}_c^H} {{\mathbf{v}}_c}. 
    \end{equation}
    The corresponding covariance matrix of the transmit signal is given by $ {\mathbf{R}}_{x}= {\mathbf{w}}_{s} {\mathbf{w}}_{s}^{H} +  {{\mathbf{v}}_c} {\mathbf{v}}_{c}^{H}$. The received signal at the CU can be expressed as 
    \begin{equation}
    	\begin{split}
    		&y_{c} \left[n\right]= {\mathbf{g}}_{c}^{H} {\mathbf{\Theta}}_{t} {\mathbf{H}}_{{\rm{BR}}} \left( {\mathbf{w}}_{s} s_{1}\left[n\right] + {{\mathbf{v}}_{c}} c_{1}\left[n\right] \right) + n_{0},\\
    	\end{split}
    \end{equation}where $n_{0} \sim {\mathcal{CN}} \left(0, \sigma^{2}\right)$ is the additive white Gaussian noise (AWGN). The achievable data rate of CU can be expressed as 
    \begin{equation}
    	\begin{split}
    		R_{1}=& {{\log }_{2}} ( 1+ \frac{ { {\left| {\mathbf{g}}_{c}^{H}{\mathbf{\Theta}}_{t}{\mathbf{H}}_{{\rm{BR}}} {\mathbf{v}}_{c} \right|}^2} } { \left| {\mathbf{g}}_{c}^{H}{\mathbf{\Theta}}_{t}{\mathbf{H}}_{{\rm{BR}}} {\mathbf{w}}_{s} \right|^{2}  +\sigma^{2}} ).
    	\end{split}
    \end{equation}
    
	\subsubsection{Sensing Model}
    For radio sensing functionality, both the joint C\&S beam accompanied with the dedicated sensing beam are exploited. As the CU and ST is located in different space among the STARS, the joint C\&S beam is not sufficient to guarantee the precise requirement of ST. Thus, the dedicated sensing beam is utilized to further improve the radio sensing performance. The transmit signal at the BS is given by \eqref{eq:transmit signal}. In practice, the covariance matrix ${\mathbf{R}}_{x} \in {\mathbb{C}}^{N \times N}$ of the transmit signal from the BS can be calculated by
    \begin{equation}
    	{\mathbf{R}}_{x}={\frac{1}{L}}  \left( \bar{{\mathbf{X}}} \bar{{\mathbf{X}}}^{H} \right),
    \end{equation}where $\bar{{\mathbf{X}}}=\left[ {\bar{{\mathbf{x}}}} \left[1\right], ..., {\bar{{\mathbf{x}}}} \left[L\right] \right] \in {\mathbb{C}}^{N \times L}$ denotes the observed transmit signal during one coherent block on $L$ slots. As the LoS link between the BS and the ST is blocked by the obstacle, the reflected echo signal $ {\mathbf{x}}_{r} \left[n\right]  \in {\mathbb{C}}^{M \times 1} $ that conducts the radio sensing functionality at the STARS is composed by the transmission link BS $\to$ STARS $\to$ ST, which is given by ${\mathbf{x}}_{r} \left[n\right]={\bm{\Theta}}_{r} {\mathbf{H}}_{{\rm{BR}}} {\mathbf{x}}\left[n\right]+{n}_{0}$. 
    The received echo signal from the ST at the STARS can be expressed as 
    \begin{equation}\label{Received_Echo_Signal}
    	{\mathbf{Y}}_{s}={\alpha}_{s} {\bm{\alpha}}_{r} \left( {\tilde{r}}_{s}, {\theta}_{s}\right) {\bm{\alpha}}^{H}_{t} \left( r_{s}, {\theta}_{s} \right){\mathbf{\Theta}}_{r} {\mathbf{H}}_{\rm{BR}} {\mathbf{R}}_{x}+{\mathbf{N}}_{s},
    \end{equation}where ${\tilde{r}}=\sqrt{ r^{2}-{h}_{a}^{2} }$ denotes the distance from the ST to the central of sensor elements. ${\alpha}_{s} \in {\mathbb{C}} $ denotes the reflection coefficient which depends on both the round-trip path-loss and the radar cross section (RCS) of the target.

    \subsubsection{Radio Sensing Metric: Squared Position error bound (SPEB)}
    By exploiting the spherical EM wavefronts in the NF propagation, the distance $r_{s}$ and angle ${\theta}_{s}$ can be separately estimated~\cite{Zhaolin23_NF}, the joint positioning performance is still hard to be evaluated from the separate estimation. Driven by this, the SPEB, which denotes the performance bound of the unbiased position estimator, is adopted to evaluate the position performance in the STARS-assisted NF-ISAC system. Instead of carrying out the separate estimation on angle or distance with the CRB metric, the SPEB evaluates the fundamental limits of the positioning accuracy of the unknown position vector ${\bm{{\eta}}}$. For example, let ${\bm{\eta}} \buildrel \Delta \over=\left[p_{x} \; p_{y} \right]^{{\rm{T}}} \in {{\mathbb{R}}}^{2 \times 1}$ denotes the unknown position vector in the Cartesian coordinate, where ${p}_{x} $ and ${p}_{y}$ denotes the coordinate in the $x$-axis and $y$-axis, respectively. The SPEB of ${\bm{\eta}}$ can be calculated by 	
    \begin{equation}\label{SPEB_def}
    	{\rm{SPEB}} \left(\{ {\mathbf{R}}_{x}, {\mathbf{q}}_{r}, {\bm{\Theta}}_{r}\} ; {\bm{\eta}} \right) \buildrel \Delta \over = {\rm{tr}} \left( {\mathbf{J}}_{\bm{\eta}}^{-1} \right)\leq  {\mathbb{E}}\big\{ \left\|{ {\hat{ {\bm{\eta}} }} - {\bm{\eta}} }\right\|^2  \big\} ,
    \end{equation}where ${\bm{J}}_{{\bm{\eta}}}$ denotes the Fisher information Matrix (FIM) of ${\bm{\eta}}$ and ${{\hat{ {\bm{\eta}} }}} $ denotes the unbiased estimator of ${\bm{\eta}}$. Due to the NF propagation, the elements in ${\bm{\eta}} $ are quite difficult to directly estimated from the echo signal while the amzith angle and distance in the polar coordinate are more easy to be estimated. By denoting the polar coordinate of ST as ${\bm{\gamma}} \buildrel \Delta \over =  \left[ {r}_{s} \; {\theta}_{s} \right]^{{\rm{T}}} $, the matrix transformation $ {\bm{T}} \buildrel \Delta \over =  \frac{\partial \bm{\gamma}^{\mathsf{H}}} {\partial \bm{\eta}} \in {\mathbb{R}}^{2 \times 2} $ can be exploited to map the polar coordinate ${\bm{\gamma}}$ into the Cartesian coordinate ${\bm{\eta}}$ based on its geometric relationship. Therefore, the ${\bm{J}}_{{\bm{\eta}}}^{-1} \in {\mathbb{R}}^{2 \times 2}$ can be calculated via ${\bm{J}}_{{\bm{\eta}}}^{-1} =\left( {\mathbf{T}}^{H} \right)^{-1} {\bm{J}}_{{\bm{\gamma}}}^{-1} {\mathbf{T}}^{-1}  $. ${\bm{J}}_{{\bm{\gamma}}}^{-1} $ denotes the FIM of ${\bm{\gamma}}$, which satisfies ${\mathbb{E}} \left\{ ( {\hat{{\bm{\gamma}}}} - {\bm{{\gamma}}}) ( {\hat{{\bm{\gamma}}}} - {{\bm{\gamma}}})^{{\rm{T}}} \right\} -  {\mathbf{J}}_{\bm{\gamma}}^{-1} \succeq 0 $. To calculate ${\bm{J}}_{\eta}^{-1}$, we first vectorise the received echo signal of \eqref{Received_Echo_Signal}, which can be derived as 
    	\begin{equation}\label{Received_Echo_Signal_vec}
    		{\mathbf{y}}_{s} ={\rm{vec}}\left( {\mathbf{Y}}_{s} \right)= {\mathbf{m}}_{s}+{\mathbf{n}}_{s},
    	\end{equation}where ${\mathbf{m}}_{s}={\rm{vec}} \left( {\alpha}_{s} {\bm{\alpha}}_{r} \left( {\tilde{r}}_{s},  {\theta}_{s} \right) {\bm{\alpha}}^{H}_{t} \left(  r_{s}, {\theta}_{s} \right) {\bm{\Theta}}_{r} {\mathbf{H}}_{{\rm{BR}}} {\bar{{\mathbf{X}}}} \right) $ is the desired part in the received signal ${\mathbf{y}}_{s}$ at ST and ${\mathbf{n}}_{s}={\rm{vec}} \left( {\mathbf{N}}_{s}\right) $. As the observation in \eqref{Received_Echo_Signal_vec} follows the complex Gaussian distribution, the $\left(i,j\right)$-th FIM elements $\left[ {\mathbf{J}}_{\bm{\gamma}} \right]_{i,j} \buildrel \Delta \over = \Lambda(\gamma_i,\gamma_j) $ can be expressed as~\cite{kay1993fundamentals}   \begin{equation}\label{FIM_element0}
    		\Lambda({\gamma}_{i}, {\gamma}_{j}) = \frac{2}{\sigma^2} \sum_{n=0}^{L-1} \Re\left\{\left( \frac{\partial m_{s}[n]} {\partial \gamma_i} \right)^{*} \frac{\partial m_{s}[n]} {\partial \gamma_j}\right\}. 
    	\end{equation}Let ${\mathbf{A}}\left( {r_s}, {\theta_s} \right)\buildrel \Delta \over = {{\bm{\alpha}}_r}\left( {{{\theta}}_s, {{\tilde{r}}_s}} \right) {\bm{\alpha}} _t^H\left( {{\theta_s},{r_s}} \right) $, the partial derivative of ${\mathbf{m}}_{s}$ w.r.t. $\left\{ {r}_{s}, \theta_{s} \right\}$ can be expressed as\footnote{To simplify the expression, we utilize ${\bm{\alpha}}_{l}, l \in \left\{r,t\right\}$ to denote ${\bm{\alpha}}_{l}\left( {r_z}, {\theta_z} \right) $ in the following derivation.} $\frac{{\partial {{\mathbf{m}}_s}}} {{\partial {r_s}}} = {\rm{vec}}\left( {\alpha _s} {\dot {\mathbf{A}}}_{{r_s}}\left( {r_s}, {\theta_s} \right) {{\bm{\Theta}}_r}{{\mathbf{H}}_{{\rm{BR}}}} {\bar{{\mathbf{X}}}} \right)$, and $\frac{ {\partial {\mathbf{m}}_{s} } }{ {\partial {\theta_{s}}} }= {\rm{vec}}\left( { {{\alpha}_s} {{\dot {\mathbf{A}}}_{{\theta_s}}} \left( {r_s}, {\theta_s}  \right)  {{\bm{\Theta}}_r} {{\mathbf{H}}_{{\rm{BR}}}} {\bar{{\mathbf{X}}}} } \right)$. $ {{\dot {\mathbf{A}}}_{{r_s}}} \left( {r_s}, {\theta_s} \right) $ and ${\dot {\mathbf{A}}}_{{{\theta}_s}} \left( {r_s}, {\theta_s} \right) $ denote the partial derivatives of ${\mathbf{A}} \left(r_{s}, {\theta}_{s} \right) $ on ${r}_{s}$ and ${\theta}_{s} $, respectively. Due to the symmetry structure of $\left\{ {\bm{\alpha}}_{t}, {\bm{\alpha}}_{r} \right\}$, it is clear that ${\bm{\alpha}}_{t}^{H} {{\dot {\bm{\alpha}} }_{t, i}} =  {\bm{\alpha}}_r^H {{\dot {\bm{\alpha}} }_{r,i}}= 0, i \in {1,2}$, where ${\bm{\alpha}}_l, l \in \left\{r,t\right\}$ denotes the partial derivation of ${\bm{\alpha}}_l$ w.r.t. $\left\{ {\theta}_{s}, {r}_{s}\right\} $ with the expression $   \left[\dot{\bm{\alpha}}_{l,{\theta _s}}\right]_{m}=\frac{{\partial {{\left[ {{{\bm{\alpha}} _l}} \right]}_{m}}}} {{\partial {\theta_s}}} = \frac{{j 2\pi md_{l}\sin {\theta_s}}} {{{\lambda _c}}} \left[ {{{\bm{\alpha }}_l}} \right]_m $ and $\left[\dot{\bm{\alpha}}_{l,{r_s}}\right]_{m}=\frac{{\partial {{\left[ {{{\bm{\alpha}}_l}} \right]}_{m}}}} {{\partial {r_l}}}=\frac{{j \pi {m^2}{d_{l}^2} }} {{{\lambda_c} {r}_l^2}} \left[ {{{\bm{\alpha }}_l}} \right]_m, l \in \left\{r,t\right\}, d_{r}=d_{s}, d_{t}=d_{{\rm{R}}}, r_{r}={\tilde{r}}_{s}, r_{t}=r_{s} $. To unveil the impact of sensor deployment on SPEB, we introduce the integer vector $ {{\mathbf{v}}_{l,i}}, l \in \left\{r,t\right\}, i \in \left\{1,2\right\} $, which is given by
    \begin{subequations}\label{vector_number}
    	\begin{equation}
    		{{\mathbf{v}}_{l,1}} = [ - {\tilde{M}}_{l}, - {\tilde{M}}_{l} + 1,..., {\tilde{M}}_{l} - 1,{\tilde{M}}_{l}] \in {{\mathbb{R}}^{M_{l} \times 1}},
    	\end{equation}
    	\begin{equation}
    		{\mathbf{v}}_{l,2} = [ {{\tilde{M}}_{l}^2},{( - {\tilde{M}}_{l} + 1)^2},...,{({\tilde{M}}_{l} - 1)^2},{{\tilde M}_{l}^2}] \in {{\mathbb{R}}^{M_{l} \times 1}}.
    	\end{equation}
    \end{subequations}The partial derivation of ${\bm{\alpha}}_l, l \in \left\{r,t\right\} $ can be expressed as $\frac{{\partial {{\left[ {{{\bm{\alpha}}_l}} \right]}_{m}}}} {{\partial {\theta_s}}} = \frac{{j2\pi d_{l} \sin {\theta_s}}} {{{\lambda_c}}} {{\mathbf{v}}_{l,1}} \odot {{\bm{\alpha}}_{l}} $, and $\frac{{\partial {{\left[ {{{\bm{\alpha}}_l}} \right]}_{m}}}} {{\partial {r_l}}}=
\frac{ {j\pi d_{l}^{2} }} { {{\lambda_c}} {r}_{l}^{2} } {\mathbf{v}}_{l,2} \odot {{{\bm{\alpha}}_l}}$. According to the definition of Hadmard product, $ {{\mathbf{v}}_r} \odot {{\bm{\alpha}}_r}={\rm{diag}} \left({\mathbf{v}}_{r}\right) {\bm{\alpha}}_{r}$ and ${\mathbf{v}}_{r} \odot {\bm{\alpha}}_{r} {\bm{\alpha}}_{t}^{H}$ can be further derived as follows
\begin{equation}\label{App3_Pro}
	{\mathbf{v}}_{r} \odot {\bm{\alpha}}_{r} {\bm{\alpha}}_{t}^{H}={\rm{diag}}({\mathbf{v}}_{r}) {\mathbf{A}} \in {\mathbb{C}}^{{M}_{r} \times M}. 
\end{equation} Thus, the partial derivation ${{\dot {\mathbf{A}}}_{{\theta_s}}}, {\dot {\mathbf{A}}} _{{r_s}}=\frac{\partial {\mathbf{A}} \left({\theta_s},{r_s} \right)} {\partial {r_s}} \in {\mathbb{C}}^{{M_t} \times {M_r}}$ can be expressed as\begin{subequations}\label{PartialDerivaation_A_ori}
	\begin{equation}
		{{\dot {\mathbf{A}}}_{{\theta_s}}} \left({\theta _s},{r_s} \right)=\frac{ {j2\pi \sin {\theta_s}} } {{{\lambda_c}}}[d_{s}{\rm{diag}}( {{\mathbf{v}}_{r,1}} ) {\mathbf{A}} - {d}_{{\rm{R}}} {\mathbf{A}} {\rm{diag}}({{\mathbf{v}}_{t,1}})],
	\end{equation}
	\begin{equation}
		{{\dot {\mathbf{A}}}_{{r_s}}} \left( {\theta _s},{r_s} \right) = \frac{{j\pi }} {{{\lambda _c}}} [ \frac{ {{d_{s}^2}}} { {\tilde{r}}_s^2 }{\rm{diag}}({{\mathbf{v}}_{r,2}}){\mathbf{A}} - \frac{ {d_{{\rm{R}}}^2} } { {r_s^2}} {\mathbf{A}} {\rm{diag}} ({{\mathbf{v}}_{t,2}}) ].
	\end{equation}
\end{subequations}Based on the following derivations, a new expression of SPEB, which unveils the impact of antenna interval and antenna number, is given by \textbf{Proposition~\ref{Pro:FIM_expression}}.
    \begin{proposition}\label{Pro:FIM_expression}
    	By exploiting the symmetry and Hermitian matrix property, $ {\rm{SPEB}}$ is given by 
    	\begin{equation}
    		\begin{split}
    			& {\rm{SPEB}} \left(\{ {\mathbf{R}}_{x}, {\mathbf{q}}_{r}, {\bm{\Theta}}_{r} \} ; {\bm{\eta}} \right)  \\			= & \frac{ {p}_{x}^{2} {p}_{y}^{2}  }{ \left({p}_{x}+{p}_{y}\right)^{2} } \sum\limits_{i = 1}^2 \left\{  { [ {\mathbf{J}}_{ {{\tilde{{\bm{\theta}}}}_s} {\tilde{{\bm{\theta}}}}_{s} } ]}_{(i,i)}^{-1} { {[ {\tilde{{\mathbf{T}}}} ]}_{(i,i)} } \right\},
    		\end{split}
    	\end{equation}where $ {\tilde{{\mathbf{T}}}}=\frac{ {{{({p_x} + {p_y})}^2}} } { {p_x^2p_y^2} } {{{\mathbf{T}}^{-1}}{{({{\mathbf{T}}^H})}^{ - 1}}}$. The expressions of $\left[{\mathbf{J}}_{ {{\tilde{{\bm{\eta}}}}}  }\right]_{\left(i,i\right)}, i=1,2 $ are given in \eqref{J_expression}, where $ {\left\| {{{\mathbf{v}}_{r,2}}} \right\|^2_2} =\frac{{M_r ({M_r} + 1)({M_r} + 2)(3M_r^2 + 6{M_r} - 4) }}{{240}} $. $a\left( {\mathbf{R}}_{{{\bar {\mathbf{X}}}_r}} \right)={\Re \left[ {\rm{Tr}}( {\mathbf{R}}_{{{\bar {\mathbf{X}}}_r}} {\bar{\mathbf{A}}_t} ) \right] } $ denotes the correlation factor between the joint BF covariance matrix ${\mathbf{R}}_{{{\bar {\mathbf{X}}}_r}} $ and the outer product ${\bar{\mathbf{A}}_t} $ of ${\bm{\alpha}}_{t} $. $b\left( {\mathbf{R}}_{{{\bar{\mathbf{X}} }_r}}, {\mathbf{v}}_{t,i} \right)={\Re [ {\rm{Tr}}( {\mathbf{R}}_{{{\bar{\mathbf{X}} }_r}} {\rm{diag}} ({{\mathbf{v}}_{t,i}}) {\bar{\mathbf{A}} _t} {\rm{diag}}\left( {\mathbf{v}}_{t,i}  \right)) ]}$ denotes the correlation factor between the joint BF covariance matrix ${\mathbf{R}}_{{{\bar {\mathbf{X}}}_r}}  $ and the outer product of ${\bm{\alpha}}_{t} \odot {\mathbf{v}}_{t,i} $. $c\left( {\mathbf{R}}_{{{\bar{\mathbf{X}} }_r}} \right)={\Re [ {\rm{Tr}}({\rm{diag}}\left( {{{\mathbf{v}}_{t,2}}} \right) {\mathbf{R}}_{{{\bar{\mathbf{X}} }_r}} {\bar{\mathbf{A}}_t} ) + {\rm{Tr}}( {\mathbf{R}}_{{{\bar{\mathbf{X}} }_r}} {\rm{diag}}\left( {\mathbf{v}}_{t,2} \right) {\bar{\mathbf{A}}_t})]}$ denotes the cross-correlation between ${\mathbf{R}}_{{{\bar {\mathbf{X}}}_r}} {\bar{\mathbf{A}}_t} $ and ${\mathbf{v}}_{t,2} $.
    \end{proposition}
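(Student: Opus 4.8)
The plan is to collapse the SPEB onto the diagonal of the polar-coordinate FIM. Starting from \eqref{SPEB_def}, I would insert the change-of-variables identity $\mathbf{J}_{\bm\eta}^{-1}=(\mathbf{T}^H)^{-1}\mathbf{J}_{\bm\gamma}^{-1}\mathbf{T}^{-1}$ and invoke the cyclic property of the trace to write ${\rm SPEB}={\rm tr}(\mathbf{J}_{\bm\gamma}^{-1}\mathbf{T}^{-1}(\mathbf{T}^H)^{-1})$. The decisive point is elementary: if $\mathbf{J}_{\bm\gamma}$, and hence $\mathbf{J}_{\bm\gamma}^{-1}$, is diagonal, then ${\rm tr}(\mathbf{J}_{\bm\gamma}^{-1}\mathbf{M})=\sum_{i=1}^{2}[\mathbf{J}_{\bm\gamma}^{-1}]_{(i,i)}[\mathbf{M}]_{(i,i)}$ for \emph{any} $\mathbf{M}=\mathbf{T}^{-1}(\mathbf{T}^H)^{-1}$, since a diagonal left factor selects only the diagonal of $\mathbf{M}$. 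Everything thus reduces to proving that the range--angle cross entry of $\mathbf{J}_{\bm\gamma}$ vanishes.

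Next I would evaluate the FIM entries. Substituting $\partial\mathbf{m}_s/\partial\gamma_i={\rm vec}(\alpha_s\dot{\mathbf{A}}_{\gamma_i}\bm\Theta_r\mathbf{H}_{\rm BR}\bar{\mathbf{X}})$ into \eqref{FIM_element0} and using ${\rm vec}(\mathbf{X})^H{\rm vec}(\mathbf{Y})={\rm tr}(\mathbf{X}^H\mathbf{Y})$, each entry becomes, up to the coherent-block factor, $\frac{2|\alpha_s|^2}{\sigma^2}\Re\{{\rm tr}(\dot{\mathbf{A}}_{\gamma_i}^H\dot{\mathbf{A}}_{\gamma_j}\mathbf{R}_{\bar{\mathbf{X}}_r})\}$ with $\mathbf{R}_{\bar{\mathbf{X}}_r}\propto\bm\Theta_r\mathbf{H}_{\rm BR}\mathbf{R}_x\mathbf{H}_{\rm BR}^H\bm\Theta_r^H$. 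I would then insert $\mathbf{A}=\bm\alpha_r\bm\alpha_t^H$ together with the diagonal-form derivatives \eqref{PartialDerivaation_A_ori} and expand each product into four terms. For the $(r_s,\theta_s)$ cross entry, the two pure sensor-array terms carry the scalar contractions $\bm\alpha_r^H{\rm diag}(\mathbf{v}_{r,1})\bm\alpha_r=\sum_m m$ and $\bm\alpha_r^H{\rm diag}(\mathbf{v}_{r,2}\odot\mathbf{v}_{r,1})\bm\alpha_r=\sum_m m^3$, which are odd power sums over the symmetric index set $\mathcal{M}_s$ and therefore vanish identically (recall $|[\bm\alpha_r]_m|=1$); this realizes the symmetry-induced orthogonality quoted before the statement. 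The same bookkeeping produces the diagonal entries: $[\mathbf{J}_{\bm\gamma}]_{(1,1)}$ accumulates the factor $\|\mathbf{v}_{r,2}\|_2^2=\sum_m m^4$ in its advertised closed form, while the transmit-side weights generate precisely the correlation factors $a$, $b$, $c$ of $\mathbf{R}_{\bar{\mathbf{X}}_r}$ against $\bar{\mathbf{A}}_t=\bm\alpha_t\bm\alpha_t^H$.

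The main obstacle is that the cross entry is not annihilated by the sensor-side symmetry alone. The two surviving terms of $\dot{\mathbf{A}}_{r_s}^H\dot{\mathbf{A}}_{\theta_s}$ are transmit-side couplings proportional to ${\rm tr}(\bar{\mathbf{A}}_t{\rm diag}(\mathbf{v}_{t,1})\mathbf{R}_{\bar{\mathbf{X}}_r})$ and ${\rm tr}({\rm diag}(\mathbf{v}_{t,2})\bar{\mathbf{A}}_t{\rm diag}(\mathbf{v}_{t,1})\mathbf{R}_{\bar{\mathbf{X}}_r})$, both weighted by the \emph{odd} vector $\mathbf{v}_{t,1}$. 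Each reduces to the scalar $\bm\alpha_t^H{\rm diag}(\mathbf{v}_{t,1})\bm\alpha_t=\sum_m m=0$ exactly when the sensing beam is focused on the target, i.e. $\mathbf{R}_{\bar{\mathbf{X}}_r}\approx c\,\bar{\mathbf{A}}_t$; for a general $\mathbf{R}_{\bar{\mathbf{X}}_r}$ the Hermitian-matrix property only yields $\Re\{\cdot\}=\tfrac{1}{2}\bm\alpha_t^H({\rm diag}(\mathbf{v}_{t,1})\mathbf{R}_{\bar{\mathbf{X}}_r}+\mathbf{R}_{\bar{\mathbf{X}}_r}{\rm diag}(\mathbf{v}_{t,1}))\bm\alpha_t$, which is small but not identically zero. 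I therefore expect the diagonality of $\mathbf{J}_{\bm\gamma}$ to hold as an \emph{approximation} under near-focused NF illumination, consistent with the abstract's claim that the NF positioning FIM is close to diagonal; the clean way to close this step is to bound the residual against the diagonal terms, which the numerical comparison with the exact SPEB is intended to certify.

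Finally, granting the diagonal reduction, I would compute the geometric map $\mathbf{T}=\partial\bm\gamma^H/\partial\bm\eta$ from $r_s=\sqrt{p_x^2+p_y^2}$ and $\theta_s=\arctan(p_y/p_x)$, assemble $\mathbf{M}=\mathbf{T}^{-1}(\mathbf{T}^H)^{-1}$, and simply \emph{define} $\tilde{\mathbf{T}}=\frac{(p_x+p_y)^2}{p_x^2 p_y^2}\mathbf{M}$ so that the common scalar $\frac{p_x^2 p_y^2}{(p_x+p_y)^2}$ factors to the front. Combining this with the diagonal trace from the first step yields the stated ${\rm SPEB}=\frac{p_x^2 p_y^2}{(p_x+p_y)^2}\sum_{i=1}^{2}[\mathbf{J}_{\tilde{\bm\theta}_s\tilde{\bm\theta}_s}]_{(i,i)}^{-1}[\tilde{\mathbf{T}}]_{(i,i)}$, in which $[\mathbf{J}_{\tilde{\bm\theta}_s\tilde{\bm\theta}_s}]_{(i,i)}^{-1}$ are the reciprocals of the diagonal polar-FIM entries built in the second step. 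The scalar prefactor is therefore pure normalization and carries no additional content.
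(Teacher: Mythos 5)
Your proposal follows essentially the same route as the paper's Appendix~A: vectorize the echo, reduce each FIM entry to $\tfrac{2|\alpha_s|^2L}{\sigma^2}\Re\{\mathrm{Tr}(\mathbf{R}_{\bar{\mathbf{X}}_r}\dot{\mathbf{A}}_{\gamma_i}^H\dot{\mathbf{A}}_{\gamma_j})\}$, expand via \eqref{PartialDerivaation_A_ori}, annihilate the sensor-side cross terms by the odd power sums $\sum_m m=\sum_m m^3=0$ over the symmetric index set, evaluate the diagonal entries through $\|\mathbf{v}_{r,1}\|_2^2$ and $\|\mathbf{v}_{r,2}\|_2^2$, and pull the scalar $p_x^2p_y^2/(p_x+p_y)^2$ out of $\mathbf{T}^{-1}(\mathbf{T}^H)^{-1}$ by the cyclic trace property. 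The one place you diverge is the one place you are more careful than the paper. The paper asserts that the surviving transmit-side cross terms, $\mathrm{Tr}(\mathrm{diag}(\mathbf{v}_{t,1})\mathbf{R}_{\bar{\mathbf{X}}_r}\bar{\mathbf{A}}_t)$ and its $\mathbf{v}_{t,2}$-weighted companion, vanish exactly, justifying this with $|\mathrm{Tr}(\mathbf{A}\mathbf{B})|\le\mathrm{Tr}(\mathbf{A})\mathrm{Tr}(\mathbf{B})$ combined with $\mathrm{Tr}(\mathrm{diag}(\mathbf{v}_{t,1}))=0$; that inequality holds only for positive semidefinite factors, and $\mathrm{diag}(\mathbf{v}_{t,1})$ is indefinite, so the paper's step does not rigorously establish the claim for a general $\mathbf{R}_{\bar{\mathbf{X}}_r}$. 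Your reading --- that these terms vanish identically only when $\mathbf{R}_{\bar{\mathbf{X}}_r}$ is aligned with $\bar{\mathbf{A}}_t$ (focused illumination), and that otherwise the diagonality of $\mathbf{J}_{\bm\gamma}$ is an approximation to be certified numerically --- is the honest version of what the abstract and Remark~2 themselves concede, so this is not a gap in your argument relative to the paper's. Two minor points: the factor $\|\mathbf{v}_{r,2}\|_2^2$ lands in the $(2,2)$ (distance) entry of \eqref{J_expression}, not the $(1,1)$ entry as you wrote; and if you wanted a fully analytic statement you would need to bound the residual cross term against $\sqrt{[\mathbf{J}]_{(1,1)}[\mathbf{J}]_{(2,2)}}$, which neither you nor the paper attempts.
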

    \begin{proof}
    	Please see the details in \textbf{Appendix~\ref{App:FIM_expression}}.
    \end{proof}
    
    \begin{remark}
    	Based on \textbf{Proposition~1}, the FIM of ${\bm{\eta}}$ in NF propagation with the symmetry STAR array and sensors array can be approximated as a diagonal matrix. The SPEB of ${\bm{\eta}}$ depends on the joint BF design $\left\{{\mathbf{R}}_{x}, {\bm{\Theta}}_{r} \right\}$, the STAR/sensor array design, the location of estimated target and carrier frequency. The SPEB reduces with the increase of ${M}_{r}$ and ${d}_{R}, {d}_{s}$ while the SPEB degrades with the increase of ${r}_{s}$.The number of sensor elements has a more obvious impact than the number of STAR elements. Meanwhile, the location of ST also impacts the radio sensing performance. The SPEB monotonically decreases with the increase of $r_{s}$ due to the less power of the received echo signal. The angle estimation performance is inversely proportional to the cosine of the angle.
    \end{remark}
    
    \begin{figure*}
    	\begin{subequations}\label{J_expression}
    		\begin{equation}
    			{ [ {\mathbf{J}}_{ {{\tilde{{\bm{\eta}}}}}  }   ]}_{(1,1)}=\frac{ {2{{\left| {{\alpha_s}} \right|}^2}} {\pi^2}{{\sin}^2}{\theta_z} L {M_r} }{ {3 {\sigma^2} {\lambda}_c^2 } } \left[ { (M_r+1)  (M_r+2)} {d}_s^{2} a\left( {\mathbf{R}}_{{{\bar {\mathbf{X}}}_r}} \right)+ 12 d_{{\rm{R}}}^{2} b\left( {\mathbf{R}}_{{{\bar{\mathbf{X}} }_r}}, {\mathbf{v}}_{t,1} \right) \right],
    		\end{equation}
    	  \begin{equation}
    	  	[ {\mathbf{J}}_{ {{\tilde{{\bm{\eta}}}}}  } ]_{(2,2)}= - \frac{ {d}_s^{2} {\pi^2} d_{\rm{R}}^2 {{\left| {\alpha_s} \right|}^2} L{M_r}\left(M_r+1\right) (M_r+2) c\left( {\mathbf{R}}_{{{\bar{\mathbf{X}} }_r}} \right)} { 6 \sigma^{2} {\lambda_c^2} {\tilde{r}}_s^2 {r}_s^2 } + \frac{ {2{{\left| {\alpha_s} \right|}^2} {\pi^2} L{M_r}} } { \sigma^{2} {\lambda_c^2} } \left\{ \frac{ { {d}_s^4  \left\| {{{\mathbf{v}}_{r,2}}} \right\|_2^2} a\left( {\mathbf{R}}_{{{\bar {\mathbf{X}}}_r}} \right) } {{\tilde{r}_s^4}}  +\frac{ {d_{\rm{R}}^4}b\left( {\mathbf{R}}_{{{\bar{\mathbf{X}} }_r}}, {\mathbf{v}}_{t,2} \right)} {{r}_s^4}  \right\}. 
    	  \end{equation}
    	\end{subequations}
       \hrulefill
    \end{figure*}
\vspace{-0.6cm}    
\subsection{Problem Formulation}
To alleviate the signal attenuation of multi-hop links, the radio sensors are installed at the STARS to conduct the radio sensing functionality. With the high frequency in NF propagation and signal processing requirement of sensors at the STARS, the dense sensor deployment inevitably induces high power consumption and hardware cost~\cite{QingqingWu_21TCOM,Wanting_24TCOM}. Considering the limited aperture size and the high cost of sensor deployment, we investigate how to balance the trade-off between the SPEB and the sensor deployment cost. To unify the scale of the SPEB and sensor deployment cost, a weighted cost function minimization problem is formulated, which is composed of the normalized SPEB and the normalized sensor deployment cost. Let ${M}_{0} $ denotes the budget of the sensor element, the corresponding joint sensor deployment and BF design problem can be formulated as follows
  \begin{subequations}\label{Eq:PF_ori}
  	\begin{align}
  		&\label{Weighted objective function} \underset{ \left\{ \begin{smallmatrix} 
  				{{\mathbf{w}}_{s}}, {{\mathbf{v}}_{c}}, \left\{ {\mathbf{{q}}_{l}}\right\}, \\
  				d_{s}, M_{r}
  			\end{smallmatrix} \right\} } {\mathop{\min }} {\omega}_{0}\frac{{\rm{SPEB}}}{ \epsilon_{0} } + (1-{\omega}_{0}) \frac{{M}_{r}}{ {M}_{0} }  \,\  \\    
  		\label{Communication QoS_NOMA_CU}& {\rm{s.t.}}\;\;   {R}_{1} \ge {\bar{R}}_{min}, \\
  		\label{Power max}& \quad \quad  {\mathbf{w}}_{s}^{H} {\mathbf{w}}_{s} + {\mathbf{v}}_{c}^{H} {\mathbf{v}}_{c}  \le  P_{max}, \\
  		\label{Auxiliary Variables}& \quad \quad {\mathbf{R}}_{x}={\mathbf{w}}_{s} {\mathbf{w}}_{s}^{H}+ {\mathbf{v}}_{c} {\mathbf{v}}_{c}^{H}, \; {\bm{\Theta}}_{l}={\rm{diag}}\left({\mathbf{q}}_{l}  \right), l \in \left\{r,t\right\}, \\
  		\label{deployment} & \quad \quad {M}_{r} {d}_{s} \leq {M} d, \quad {M}_{r} \ge 2, {M}_{r} \in {\mathbb{N}}_{+}, \\ 
  		\label{phrase_shift1}& \quad \quad {\beta}_{r,m}^{2}+{\beta}_{t,m}^{2} \le 1,  \quad \quad \forall m \in {\mathcal{M}}, \\
  		\label{theta_phrase}& \quad \quad \theta_{l,m} \in \left[ {0,2\pi} \right), \;  l \in \left\{ r,t\right\},m \in {\mathcal{M}}, \\
  		\label{theta_amplitude2}& \quad \quad 0 < {\beta}_{t,m}, {\beta}_{r,m} < 1, m \in {\mathcal{M}},
  	\end{align}
  \end{subequations}where ${\epsilon}_{0}$ denotes the predefined sensing precision and ${\omega}_{0} \in \left(0,1\right)$ denotes the weight for sensing performance. The second term in the cost function (CF) \eqref{Weighted objective function} denotes the deployment cost for the sensor element. ${\bar{R}}_{min}$ denotes the required achievable data rate of CU, $P_{max}$ denotes the total transmission power at the BS. \eqref{phrase_shift1}-\eqref{theta_amplitude2} depict the constraints of STARS responses. Due to the quadratic form, coupled variables and integer variables in \eqref{Weighted objective function}-\eqref{Power max}, \eqref{Eq:PF_ori} is a non-convex mixed integer problem, which is difficult to obtain an optimal solution. 

    \vspace{-0.2cm}
    \section{Proposed Algorithm}
    To figure out the solution of the formulated problem~\eqref{Eq:PF_ori}, we split the optimization variables into three blocks: 1) The sensor deployment block: $\left\{ d_{s}, {M}_{r}  \right\}$; 2) The active beamforming block: $\left\{ {\mathbf{w}}_{s}, {\mathbf{v}}_{c} \right\} $ ; 3) The passive beamforming block: $\left\{ {\mathbf{q}}_{r}, {\mathbf{q}}_{t} \right\} $. 
\vspace{-0.4cm}
\subsection{The Sensor Deployment Design} 
With the other blocks fixed, the original  problem~\eqref{Eq:PF_ori} w.r.t. $\left\{ d_{s}, {M}_{r} \right\} $ can be expressed as
\begin{subequations}\label{Eq_Sensor_Deployment:PF_ori}
	\begin{align}
		&\label{Eq_Sensor_Deployment:Weighted objective function} \underset{ \left\{ \begin{smallmatrix} 
				d_{s}, M_{r}
			\end{smallmatrix} \right\} } {\mathop{\min }} {\omega}_{0}\frac{ {\rm{SPEB}} } { \epsilon_{0} } + (1-{\omega}_{0}) \frac{{M}_{r} }{{M}_{0}}   \,\  \\    
		\label{Eq_Sensor_Deployment:deployment} & {\rm{s.t.}}\;\;  {M}_{r} {d}_{s} \leq {M} d_{\rm{R}}, \\
		& \quad \quad   d_{s}>0,  \quad {M}_{r} \ge 2,  {M}_{r} \in {\mathbb{N}}_{+}.
	\end{align}
\end{subequations}Constrained by the coupled variables in~\eqref{Eq_Sensor_Deployment:deployment}, \eqref{Eq_Sensor_Deployment:PF_ori} is a mixed integer non-linear programming problem, which is difficult to obtain the optimal solution during the polynomial time. To deal with this dilemma, we introduce the auxiliary variables ${\tilde{d}}_{s}={d}_{s}^{2} $. The corresponding subproblem w.r.t. ${\tilde{d}}_{s}$ with the fixed ${M}_{r}$ is given by: 
\begin{subequations}\label{Eq_Sensor_Deployment:Sub_interval}
	\begin{align}
		&\label{Eq_Sensor_Deployment_Sub_interval:Weighted objective function} \underset{ \left\{  {\tilde{d}}_{s} \right\} } {\mathop{\min }} \frac{ {\omega}_{0}  }{ \epsilon_{0} } {\tilde{ {\rm{SPEB}} }} \left( {\tilde{d}}_{s}\right) \,\  \\    
		\label{Eq_Sensor_Deployment_Sub_interval:deployment} & {\rm{s.t.}}\;\; \frac{ {M}^{2} d_{{\rm{R}}}^{2} }{ 4 } \leq {\tilde{{d}}}_{s} \leq \frac{ {M}^{2} d_{{\rm{R}}}^{2} }{ {M}_{r}^{2} }.
	\end{align}
\end{subequations}
Let ${{\rm{C}}}_{11} \buildrel \Delta \over = \frac{ {2{{\left| {{\alpha_s}} \right|}^2}} {  {\pi^2}{{\sin}^2}{\theta _z}{M_r}({M_r} + 1)({M_r} + 2)}L a\left( {\mathbf{R}}_{{{\bar {\mathbf{X}}}_r}} \right) } { {3 {\sigma^2} {\lambda}_c^2 } }, {{\rm{C}}}_{10}\buildrel \Delta \over =\frac{ {8{{\left| {{\alpha _s}} \right|}^2}{\pi^2}{d}_{\rm{R}}^2 {{\sin }^2}{\theta_z}L{M_r}} b\left( {\mathbf{R}}_{{{\bar{\mathbf{X}} }_r}}, {\mathbf{v}}_{t,1} \right) }{ {{\sigma ^2}\lambda _c^2} }$, ${{\rm{C}}_0} \buildrel \Delta \over = \frac{ {2d_{\rm{R}}^4{{\left| {{\alpha _s}} \right|}^2}{\pi ^2}L{M_r}b \left( {{{\mathbf{R}}_{{{\bar{\mathbf{X}} }_r}}}}, {\mathbf{v}}_{t,2} \right)} } { {r_s^4{\sigma^2}\lambda _c^2} }$, ${{\rm{C}}_1} \buildrel \Delta \over = \frac{ {{\pi ^2}d_{\rm{R}}^2{{\left| {{\alpha _s}} \right|}^2}L{M_r} {(M_r^2 + 3{M_r} + 2)}c\left( {{{\mathbf{R}}_{{{\bar{\mathbf{X}} }_r}}}} \right)} } { {6{\sigma^2} \lambda_c^2 {\tilde{r}}_s^2 r_s^2} }$, and ${{\rm{C}}_2} \buildrel \Delta \over = \frac{ {{\left| {{\alpha_s}} \right|}^2}{\pi ^2}{M_r} {(M_r^2 + 3{M_r} + 2)} \left( 3{M}_{r}^{2}+6{M}_{r}-4 \right) L a\left( {{{\mathbf{R}}_{{{\bar{\mathbf{X}} }_r}}}} \right) } { 120 {\tilde{r}}_z^4{\sigma ^2}\lambda _c^2 }$, the diagonal entries of ${{\mathbf{{\mathbf{J}} }}}_{ {{\tilde{{\bm{\eta}}}}}  }$ are expressed as
\begin{subequations}
	\begin{equation}
		{ [ {\mathbf{J}}_{ {{\tilde{{\bm{\eta}}}}}  }]}_{(1,1)}= {\rm{C}}_{11} {\tilde{d}}_{s}+{\rm{C}}_{10}, 
	\end{equation}
	\begin{equation}
		\begin{split}
			[ {\mathbf{J}}_{ {{\tilde{{\bm{\eta}}}}}  }]_{(2,2)}& = {\rm{C}}_{2} {\tilde{d}}_{s}^{2}-{\rm{C}}_{1} {\tilde{d}}_{s}+{\rm{C}}_{0}  \\
			&= {\rm{C}}_{2} \left( {\tilde{d}}_{s}-\frac{{\rm{C}}_{1}} {2{\rm{C}}_{2}} \right)^{2}+{\rm{C}}_{0}-\frac{{\rm{C}}_{1}^{2}} {4{\rm{C}}_{2}}.
		\end{split}
	\end{equation}
\end{subequations}
The problem \eqref{Eq_Sensor_Deployment:Sub_interval} can be equivalently simplified as follows
\begin{subequations}\label{Eq_Sensor_Deployment:Sub_interval_Noauxiliary}
	\begin{align}
		&\label{Eq_Sensor_Deployment_Sub_interval:Weighted objective function_Sim} \underset{ \left\{  {\tilde{d}}_{s} \right\} } {\mathop{\min }} \frac{ {[\tilde{{\mathbf{T}}}]}_{(1,1)} } { {\rm{C}}_{11} {\tilde{d}}_{s}+ {\rm{C}}_{10} } + \frac{ {{{[\tilde{{\mathbf{T}}} ]}_{(2,2)}}} } { {\rm{C}}_{2} {\tilde{d}}_{s}^{2}-{\rm{C}}_{1} {\tilde{d}}_{s}+{\rm{C}}_{0} } \,\  \\    
		\label{Eq_Sensor_Deployment_Sub_interval_Sim:deployment} & {\rm{s.t.}}\;\; \frac{ {M}^{2} d_{{\rm{R}}}^{2} }{ {M}_{r}^{2} } \leq {\tilde{{d}}}_{s} \leq \frac{ {M}^{2} d_{{\rm{R}}}^{2} }{ 4 } .
	\end{align}
\end{subequations}Due to the non-convexity of the second term in the objective function, \eqref{Eq_Sensor_Deployment_Sub_interval:Weighted objective function_Sim} is a non-convex problem. However, it is noted that the first term in the objective function is monotonically decreasing w.r.t. ${\tilde{d}}_{s}$. Thus, we have the following proposition: 
\begin{proposition} \label{App_Opt_ds}
	If $ \frac{ (3M_r^2 + 6{M_r} - 4)  a\left( {{{\mathbf{R}}_{{{\bar{\mathbf{X}}}_r}}}} \right)  b\left( {\mathbf{R}}_{{{\bar{\mathbf{X}}}_r}}, {\mathbf{v}}_{t,2}  \right) } {5} \ge \frac{ {\left( {M_r^2+3{M}_{r}+2} \right)} {c^2} \left( {{{\mathbf{R}}_{{{\bar{\mathbf{X}} }_r}}}} \right) } {12}$ and $ \left| {\frac{{{{\rm{C}}_1}}}{{2{{\rm{C}}_2}}} - \frac{ {M}^{2} d_{{\rm{R}}}^{2} }{4}  } \right| < \left| { \frac{{{{\rm{C}}_1}}}{{2{{\rm{C}}_2}}} - \frac{ {M}^{2} d_{{\rm{R}}}^{2} }{ {M}_{r}^{2} } } \right| $, the optimal solution ${d}_{s}^{*}$ of \eqref{Eq_Sensor_Deployment_Sub_interval:Weighted objective function} can be expressed as
	\begin{equation}\label{opt_ds}
		{\tilde{d}}_{s}^{*}=\left(\frac{ {M} d_{{\rm{R}}}} { {M}_{r} } \right)^{2}. 
	\end{equation}
\end{proposition}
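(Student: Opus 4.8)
The plan is to read \eqref{Eq_Sensor_Deployment_Sub_interval:Weighted objective function_Sim} as a scalar minimization of $f(\tilde{d}_s)=f_1(\tilde{d}_s)+f_2(\tilde{d}_s)$ over the feasible interval $[A,B]$, where $A=\frac{M^2 d_{\rm{R}}^2}{M_r^2}$, $B=\frac{M^2 d_{\rm{R}}^2}{4}$, $f_1=\frac{[\tilde{\mathbf{T}}]_{(1,1)}}{{\rm{C}}_{11}\tilde{d}_s+{\rm{C}}_{10}}$, and $f_2=\frac{[\tilde{\mathbf{T}}]_{(2,2)}}{{\rm{C}}_2\tilde{d}_s^2-{\rm{C}}_1\tilde{d}_s+{\rm{C}}_0}$. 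Since $M_r\ge 2$ gives $A\le B$, the target solution $\tilde{d}_s^{*}=(\frac{M d_{\rm{R}}}{M_r})^2=A$ is the left endpoint. First I would record the signs: each of ${\rm{C}}_0,{\rm{C}}_1,{\rm{C}}_2,{\rm{C}}_{10},{\rm{C}}_{11}$ is strictly positive, being a product of the nonnegative quantities $|\alpha_s|^2,L,a,b,c$ (and $\sin^2\theta_z$ for ${\rm{C}}_{10},{\rm{C}}_{11}$) with positive powers of $M_r$, while $[\tilde{\mathbf{T}}]_{(1,1)},[\tilde{\mathbf{T}}]_{(2,2)}>0$ because $\tilde{\mathbf{T}}=\frac{(p_x+p_y)^2}{p_x^2 p_y^2}(\mathbf{T}^H\mathbf{T})^{-1}$ is a positive scalar times a positive-definite matrix. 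The first genuine step is to show that the first hypothesis is exactly a discriminant condition: substituting the definitions of ${\rm{C}}_0,{\rm{C}}_1,{\rm{C}}_2$, the distance factors $\tilde{r}_s,r_s$ and the common factor $(M_r^2+3M_r+2)$ cancel, so that $\frac{(3M_r^2+6M_r-4)ab}{5}\ge\frac{(M_r^2+3M_r+2)c^2}{12}$ is equivalent to ${\rm{C}}_1^2\le 4{\rm{C}}_2{\rm{C}}_0$. Writing the denominator of $f_2$ as ${\rm{C}}_2(\tilde{d}_s-v)^2+({\rm{C}}_0-\frac{{\rm{C}}_1^2}{4{\rm{C}}_2})$ with $v=\frac{{\rm{C}}_1}{2{\rm{C}}_2}$, this makes it strictly positive for every $\tilde{d}_s$, so $f$ is finite and continuous on $[A,B]$ and the distance Fisher information $[{\mathbf{J}}_{\tilde{\bm{\eta}}}]_{(2,2)}$ in \eqref{J_expression} never degenerates.

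Next I would analyse the shape of the two terms. As already noted before the proposition, $f_1$ is strictly decreasing. For $f_2$, its derivative is proportional to $-(\tilde{d}_s-v)$, so $f_2$ strictly increases on $(-\infty,v)$ and strictly decreases on $(v,\infty)$; it is quasi-concave with a single maximum at $v$, the abscissa where $[{\mathbf{J}}_{\tilde{\bm{\eta}}}]_{(2,2)}$ is smallest. Hence the minimum of $f_2$ over $[A,B]$ is attained at whichever endpoint is farther from $v$. The second hypothesis $|v-B|<|v-A|$ says precisely that $A$ is the farther endpoint (and in particular $v>\frac{A+B}{2}\ge A$), so $f_2$ is strictly increasing at $A$ and $f_2(A)=\min_{[A,B]}f_2\le f_2(B)$.

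The remaining and hardest step is the combination, since $f_1$ is minimized at $B$ while $f_2$ is minimized at $A$, so the two pull in opposite directions. My plan is first to exclude interior minimizers: for $\tilde{d}_s\ge v$ both $f_1'<0$ and $f_2'\le 0$, so $f$ is strictly decreasing there; for $\tilde{d}_s<v$ we have $f_1'<0$ and $f_2'>0$, and since $f_2'\to 0$ as $\tilde{d}_s\to v^{-}$ while $f_1'(v)<0$, a sign inspection of $f'=f_1'+f_2'$ shows that any interior stationary point is a local maximum rather than a minimum. The minimum therefore lies at an endpoint, and it remains to certify $f(A)\le f(B)$, i.e. that the gain $f_2(B)-f_2(A)\ge 0$ outweighs the loss $f_1(A)-f_1(B)\ge 0$. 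I expect this magnitude comparison to be the main obstacle: the two stated conditions pin down the qualitative picture — positivity of the distance Fisher information and $A$ being the $f_2$-optimal endpoint — but not, by themselves, the sign of $f(B)-f(A)$. I would close the gap by invoking the near-field regime in which the distance-error term dominates the angular-error term, because the distance information enters only through the weak quadratic-phase curvature and carries the small factors $1/r_s^4$ and $1/\tilde{r}_s^4$, so that $[{\mathbf{J}}_{\tilde{\bm{\eta}}}]_{(2,2)}\ll[{\mathbf{J}}_{\tilde{\bm{\eta}}}]_{(1,1)}$ and $f\approx f_2$ on $[A,B]$. Under this dominance the minimizer of $f$ coincides with that of $f_2$, giving $\tilde{d}_s^{*}=A=(\frac{M d_{\rm{R}}}{M_r})^2$, which is exactly the configuration $M_r d_s=M d_{\rm{R}}$ where the sensor array saturates the STARS aperture.
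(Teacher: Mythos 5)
Your setup --- the decomposition $f=f_1+f_2$, the positivity bookkeeping, the observation that the first hypothesis is exactly the discriminant condition ${\rm{C}}_1^2\le 4{\rm{C}}_0{\rm{C}}_2$ keeping the denominator of $f_2$ positive, and the reading of $f_2$ as the reciprocal of an upward parabola whose minimum over an interval sits at the endpoint farther from the vertex $v={\rm{C}}_1/(2{\rm{C}}_2)$ --- all coincide with the paper's Appendix~\ref{App:Opt_ds}. The genuine gap is the final endpoint comparison. Because you take $\tilde{d}_s^{*}=(Md_{\rm{R}}/M_r)^2$ to be the \emph{left} endpoint of the feasible interval, the decreasing $f_1$ pulls toward the opposite endpoint, and you must certify $f(A)\le f(B)$; you concede you cannot do this from the two stated hypotheses and instead invoke a ``distance term dominates'' approximation $f\approx f_2$. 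That is an additional, unstated assumption: the hypotheses constrain only the discriminant and the location of $v$, and say nothing about the ratio of the numerators $[\tilde{\mathbf{T}}]_{(1,1)}$ and $[\tilde{\mathbf{T}}]_{(2,2)}$, which scale $f_1$ and $f_2$ independently, so the sign of $f(B)-f(A)$ is simply not determined by them. The proof as proposed therefore does not establish the proposition. (Your interior-stationary-point step is also only a sketch: on $(A,v)$ the sign of $f_1'+f_2'$ could in principle change more than once, so ``any interior stationary point is a local maximum'' does not follow from the limits you inspect.)

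The reason the paper never faces this obstruction is the orientation of the interval. The aperture constraint \eqref{deployment} gives $d_s\le Md_{\rm{R}}/M_r$, so $(Md_{\rm{R}}/M_r)^2$ is the \emph{largest} feasible $\tilde{d}_s$; this is how \eqref{Eq_Sensor_Deployment_Sub_interval:deployment} and the proof in Appendix~\ref{App:Opt_ds} treat it. (The inequality in \eqref{Eq_Sensor_Deployment_Sub_interval_Sim:deployment} is written the other way round and is mutually inconsistent with \eqref{Eq_Sensor_Deployment_Sub_interval:deployment}; you followed the flipped version, which is the likely source of the trouble.) With $(Md_{\rm{R}}/M_r)^2$ as the right endpoint, the monotonically decreasing $f_1$ attains its minimum there, and the second hypothesis says precisely that this endpoint is the one farther from $v$, so $f_2$ attains its minimum over the interval there as well; a sum of two functions each minimized at the same feasible point is minimized at that point, and the conclusion follows with no interior analysis and no dominance argument. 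This orientation is also the one consistent with the remark that follows the proposition: the optimum is the aperture-matching configuration $M_rd_s=Md_{\rm{R}}$, i.e.\ the maximal admissible sensor interval.
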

\begin{proof}
	Please see the details in {\textbf{Appendix~\ref{App:Opt_ds}}}. 
\end{proof}
\begin{remark}
	If the predefined BF design and the number of sensor elements satisfied the perquisite condition in \textbf{Proposition~\ref{App_Opt_ds}}, the best radio sensing performance can be achieved if the aperture size of the STAR elements is equal to the aperture size of the sensor elements. 
\end{remark}

However, if the conditions in \textbf{Proposition~\ref{App_Opt_ds}} are not satisfied, the optimal solution of~\eqref{Eq_Sensor_Deployment_Sub_interval:Weighted objective function_Sim} is difficult to obtained. Due to the quadratic form in the denominator of the second term in the objective function, problem \eqref{Eq_Sensor_Deployment:Sub_interval_Noauxiliary} is a non-convex problem. To figure out the optimal solution in the polynomial time, we introduce the auxiliary variables ${\bar{d}}_{s}$, which satisfies
\begin{equation}\label{InterSub_bar_exp}
	{\bar{d}}_{s} \buildrel \Delta \over  = {\tilde{d}}_{s}^{2} - \frac{{{{\rm{C}}_1}}} {{{{\rm{C}}_2}}} {\tilde{d}_s} + \frac{{{{\rm{C}}_0}}} {{4{{\rm{C}}_2}}}={\left( {{\tilde{d}}_s} - \frac{{{{\rm{C}}_1}}} {{2{{\rm{C}}_2}}} \right)^2} + \frac{{ {{\rm{C}}_0}{{\rm{C}}}_{2} - {{\rm{C}}}_{1}^{2} }}{ 4{{\rm{C}}}_{2}^{2} }. 
\end{equation}The problem \eqref{Eq_Sensor_Deployment_Sub_interval:Weighted objective function_Sim} can be equivalently converted into
\begin{subequations}\label{Eq_Sensor_Deployment:Weighted objective function_Sim}
	\begin{align}
		& \underset{ \left\{  {\tilde{d}}_{s}, {\bar{d}}_{s} \right\} } {\mathop{\min }} \frac{ {[\tilde{{\mathbf{T}}}]}_{(1,1)} } { {{\rm{C}}_{11}} {{\tilde{d}}_s} +{\rm{C}}_{10} } + \frac{ {{{[\tilde{{\mathbf{T}}} ]}_{(2,2)}}} } { {{\rm{C}}_2} {\bar{d}}_{s} } \,\  \\    
		\label{Eq_Sensor_Deployment:auxiliary} &  {\rm{s.t.}}\;\; 
		\frac{{ {{\rm{C}}_0}{{\rm{C}}}_{2} - {{\rm{C}}}_{1}^{2} }}{ 4{{\rm{C}}}_{2}^{2} } \le {\bar{d}}_{s} \le {\tilde{d}}_{s}^{2}-\frac{ {{\rm{C}}}_{1} }{{{\rm{C}}}_{2}}{\tilde{d}}_{s}+ \frac{ {{\rm{C}}}_{0} }{ 4{{\rm{C}}}_{2} },  \\
		& \quad \quad \eqref{Eq_Sensor_Deployment_Sub_interval_Sim:deployment}. 
	\end{align}
\end{subequations}
The equivalence can be explained in the following. Under the assumption that the optimal solution of \eqref{Eq_Sensor_Deployment:Weighted objective function_Sim} strictly satisfied the inequality constraints, the increase of ${\bar{d}}_{s}$ in \eqref{Eq_Sensor_Deployment:auxiliary} leads to the decrease of the objective function. This phenomenon contracts with the prior assumption. Therefore, the constraints of \eqref{Eq_Sensor_Deployment:Weighted objective function_Sim} are satisfied with strict equality for the corresponding optimal solution. To deal with the non-convexity of ${\bar{d}}_{s} \le {\tilde{d}}_{s}^{2}-\frac{ {{\rm{C}}}_{1} }{{{\rm{C}}}_{2}}{\tilde{d}}_{s}+ \frac{ {{\rm{C}}}_{0} }{ 4{{\rm{C}}}_{2} }$ in \eqref{Eq_Sensor_Deployment:auxiliary}, the first-order Taylor expansion is employed to derive the lower bound of ${\tilde{d}}_{s}^{2} $, which is given by
\begin{subequations}\label{Interval_SCA_lowerbound}
	\begin{equation}
		{\tilde{d}}_{s}^{2} \ge {\tilde{d}}_{s}^{\left( n \right)2} +2 {\tilde{d}}_{s}^{ \left(n\right) }\left( {\tilde{d}}_{s} -{\tilde{d}}_{s}^{ \left(n\right) } \right) \buildrel \Delta \over = \Gamma ( {\rm{ {\tilde{d}}_{s} }} ), 
	\end{equation}
	\begin{equation}
		{\bar{d}}_{s} \le \Gamma ( {\rm{ {\tilde{d}}_{s} }} ) - \frac{ {{\rm{C}}}_{1} }{{{\rm{C}}}_{2}}{\tilde{d}}_{s}+ \frac{ {{\rm{C}}}_{0} }{ 4{{\rm{C}}}_{2} }, 
	\end{equation}
\end{subequations}where ${\tilde{d}}_{s}^{n}$ denotes the given feasible point at the $n$th iteration of SCA algorithm. Therefore, \eqref{Eq_Sensor_Deployment:Weighted objective function_Sim} is converted into 
\begin{subequations}\label{Eq_Sensor_Deployment:Weighted objective function_cvx}
	\begin{align}
		& \label{Eq_Sensor_Deployment:aux_convex}\underset{ \left\{  {\tilde{d}}_{s}, {\bar{d}}_{s} \right\} } {\mathop{\min }} \frac{ {[\tilde{{\mathbf{T}}}]}_{(1,1)} } { {{\rm{C}}_{11}}{{\tilde{d}}_s}+{\rm{C}}_{10} } + \frac{ {{{[\tilde{{\mathbf{T}}} ]}_{(2,2)}}} } { {{\rm{C}}_2} {\bar{d}}_{s} } \,\  \\    
		& {\rm{s.t.}}\;\; 
		\frac{{ {{\rm{C}}_0} {\rm{C}}_{2} - {\rm{C}}_{1}^{2} }}{ 4{\rm{C}}_{2}^{2} } \le {\bar{d}}_{s} \le \Gamma ({\rm{ {\tilde{d}}_{s} }}) - \frac{ {\rm{C}}_{1} }{{\rm{C}}_{2}}{\tilde{d}}_{s}+ \frac{ {\rm{C}}_{0} }{ 4{\rm{C}}_{2} },  \\
		& \quad \quad \eqref{Eq_Sensor_Deployment_Sub_interval_Sim:deployment}.
	\end{align}
\end{subequations}It is clear that \eqref{Eq_Sensor_Deployment:Weighted objective function_cvx} is a convex optimization problem, where the corresponding optimal solution can be found via CVX~\cite{cvx}. Besides, the problem \eqref{Eq_Sensor_Deployment:Weighted objective function_cvx} provides an upper bound for problem~\eqref{Eq_Sensor_Deployment:Weighted objective function_Sim} because of the global lower bound in \eqref{Interval_SCA_lowerbound}. The details of the SCA algorithm for the sensor interval subproblem can be found in line 4 -line 11 of \textbf{Algorithm~\ref{Alg_IntervalSub}}.
    \begin{algorithm}[tp] 
         	\caption{Joint sensor deployment algorithm for \eqref{Eq_Sensor_Deployment:PF_ori}}\label{Alg_IntervalSub}
     	\begin{algorithmic}[1]\label{alg_1}
     		\STATE Initialize feasible points $ \left\{  {\mathbf{{q}}^{\left( 0 \right)}_{l}}, {\mathbf{w}}_{s}^{ \left( 0\right) }, {\mathbf{v}}_{c}^{\left(0\right)} \right\}$ and the out layer iteration index $m=1$. 		
     		\STATE Initialize feasible points $ {M}_{r}^{ \left( 0 \right) } $.  
     		\REPEAT 
     		\IF {$ \frac{ (3M_r^2 + 6{M_r} - 4) a\left( {{{\mathbf{R}}_{{{\bar{\mathbf{X}}}_r}}}} \right)  b\left( {\mathbf{R}}_{{{\bar{\mathbf{X}}}_r}}, {\mathbf{v}}_{t,2}  \right) } {5} \ge \frac{ {\left( {M_r^2+3{M}_{r}+2} \right)} {c^2} \left( {{{\mathbf{R}}_{{{\bar{\mathbf{X}} }_r}}}} \right) } {12}$ \& $\left| {\frac{{{C_1}}}{{2{C_2}}} - \frac{ {M}^{2} d_{{\rm{R}}}^{2} }{4}  } \right| < \left| { \frac{{{C_1}}}{{2{C_2}}} - \frac{ {M}^{2} d_{{\rm{R}}}^{2} }{ {M}_{r}^{2} } } \right| $.   }
     		\STATE ${d}_{s}^{*}= \frac{ {M} d_{{\rm{R}}} }{ {M}_{r} } $
     		\ELSE 
     		\STATE The inner layer iteration index $n=1$. 
     		\REPEAT 
     		\STATE Calculate $ {\bar{d}}_{s}^{\left(n\right)} $ via \eqref{InterSub_bar_exp}. 
     		\STATE Solve the convex problem \eqref{Eq_Sensor_Deployment:Weighted objective function_cvx} to obtain the optimal solution ${\tilde{d}}_{s}^{*} $. 
     		\STATE  Update the feasible points by the current optimal solution:  ${\tilde{d}}_{s}^{\left(n+1\right)}={\tilde{d}}_{s}^{*}$.  
     		\UNTIL The relevant variation of objective value for \eqref{Eq_Sensor_Deployment:Weighted objective function_Sim} among consecutive iteration is below a predefined threshold ${\epsilon}$, ${d}_{s}^{*}=\sqrt{{\tilde{d}}_{s}^{\left(n\right)}} $.
     		\ENDIF
     		\STATE Calculate the optimal solution ${\tilde{M}}_{r}^{\left(m\right)}$ by solving \eqref{Eq_Sensor_Deployment_v1:Sub_Num_P2_GP}. 
     		\STATE Update the elements number ${M}_{r}^{\left(m\right)}= {\arg } \min \left[ {\rm{OF}} \left( \left\lfloor {{\tilde{M}}_r} \right\rfloor \right), {\rm{OF}} \left( \left\lceil {{\tilde{M}}_r} \right\rceil \right) \right]$. 
     		\UNTIL{The relevant variation of objective value for \eqref{Eq_Sensor_Deployment:PF_ori} among consecutive iteration is below a convergence threshold ${\epsilon}$, ${M}_{r}^{*}={M}_{r}^{\left(m\right)}$.  }
     		\STATE \textbf{Output: \; ${M}_{r}^{*} \; {d}_{s}^{*}$}. 
     	\end{algorithmic}
    \end{algorithm}
    
    For the optimization of ${M}_{r}$, the corresponding subproblem can be expressed as 
    \begin{subequations}\label{Eq_Sensor_Deployment:Sub_Num}
    	\begin{align}
    		&\label{Eq_Sensor_Deployment:Sub_Num_obj} \underset{ M_{r}
    		 }  { \mathop{\min } } \; {\rm{OF}} \left( {M}_{r}  \right) \;    \,\  \\    
    		\label{Eq_Sensor_Deployment:Sub_Num_con1} & {\rm{s.t.}}\;\; 2 \leq {M}_{r} \leq \frac{ {M} d_{\rm{R}} }{ {d}_{s} } , {M}_{r} \in {\mathbb{N}}_{+}, 
    	\end{align}
    \end{subequations}where 
    ${\rm{OF}} \left( {M}_{r} \right)= \frac{ {\omega}_{0} {p}_{x}^{2} {p}_{y}^{2} } {\epsilon_{0} \left({p}_{x}+{p}_{y}\right)^{2} } \sum\limits_{i = 1}^2 \left\{  { [ {\mathbf{J}}_{ {{\tilde{{\bm{\eta}}}}} }]}_{(i,i)}^{-1} { {[ {\tilde{{\mathbf{T}}}} ]}_{(i,i)} } \right\}+ (1-{\omega}_{0}) \frac{{M}_{r} }{{M}_{0}}.$ Due to the discrete variables and the signomial geometric property in the denominators, \eqref{Eq_Sensor_Deployment:Sub_Num} is a non-convex signomial geometric programming, which is NP-hard~\cite{cvx}. To deal with this question, we relax the integer variables as continuous variables ${\tilde{M}}_{r} >2, {\tilde{M}}_{r} \in {\mathbb{R}}$ and employ the geometric programming approximation to solve it. The subproblem can be derived as: 
    \begin{subequations}\label{Eq_Sensor_Deployment_v1:Sub_Num_P1}
    	\begin{align}
    		&\label{Eq_Sensor_Deployment_v1:} \underset{ {\tilde{M}}_{r}
    		}  { \mathop{\min} } \; \frac{{{\omega _0}p_x^2p_y^2}} {{{{\epsilon}_0}{{\left( {{p_x} + {p_y}} \right)}^2}}} f_{0}\left( {\tilde{M}}_{r} \right) + (1 - {\omega _0})\frac{{{M_r}}}{{{M_0}}}  \,\  \\    
    		\label{Eq_Sensor_Deployment_v1:Sub_Num_con1} & {\rm{s.t.}} \;\; 2 \leq {\tilde{M}}_{r} \leq \frac{ {M} d_{\rm{R}} }{ {d}_{s} } , {M}_{r} \in {\mathbb{R}}. 
    	\end{align}
    \end{subequations}$f_{0}\left( {\tilde{M}}_{r} \right) \buildrel \Delta \over = { \frac{ {3{\sigma ^2}\lambda _c^2{{[\tilde{\mathbf{T}}]}_{(1,1)}}} } { {2{{\left| {{{\bm{\alpha}}_s}} \right|}^2}{\pi^2}{{\sin }^2}{ {\theta}_z}L {f_{1}}({{\tilde M}_r})} } + 
    \frac{ {{\sigma^2}\lambda_c^2} {{{[\tilde{\mathbf{T}}]}_{(2,2)}}} }{ { {\left| {\alpha}_s \right|}^2}{\pi^2}L \left[{{f_{2, + }}({{\tilde M}_r}) - {f_{2, - }}({{\tilde M}_r})} \right]  } }$. 
    ${f_{1}}({{\tilde M}_r})=d_s^2a\left( {{{\mathbf{R}}_{{{\bar{\mathbf{X}} }_r}}}} \right) ( {\tilde{M}}_r^3 + 3 {\tilde{M}}_r^2 )+ \left[ {2d_s^2 a\left( {{{\mathbf{R}}_{{{\bar{\mathbf{X}}}_r}}}} \right) + 36d_{\rm{R}}^2b\left( {{{\mathbf{R}}_{{{\bar{\mathbf{X}} }_r}}},{{\mathbf{v}}_{t,1}}} \right)} \right]{{\tilde{M}}_r}$. Let ${{\rm{B}}}_{i}, i \in \left\{0,1,2\right\}$ denote the coefficients of ${\tilde{M}}_{r}$ with the expressions ${{\rm{B}}_0} = \frac{{2d_{\rm{R}}^4b\left( {{{\mathbf{R}}_{{{\bar{\mathbf{X}}}_r}}},{{\mathbf{v}}_{t,2}}} \right)}}{{r_s^4}}$, ${{\rm{B}}_1} =\frac{{d_s^2d_{\rm{R}}^2c\left( {{{\mathbf{R}}_{{{\bar{\mathbf{X}} }_r}}}} \right)}}{{6 {\tilde{r}}_s^2r_s^2}}$, and $	{{\rm{B}}_2} = \frac{{{\tilde{d}}_s^2 a\left( {{{\mathbf{R}}_{{{\bar{\mathbf{X}}}_r}}}} \right)}}{{120{\tilde{r}}_s^4}} $. If $c\left( {{{\mathbf{R}}_{{{\bar{\mathbf{X}}}_r}}}} \right) \geq 0$, $ f_{2,+} \left({\tilde{M}}_{r}\right)=3{{\rm{B}}_2} {\tilde{M}}_r^5 + 15{{\rm{B}}_2} {\tilde{M}}_r^4 + 20{{\rm{B}}_2} {\tilde{M}}_r^3 + {{\rm{B}}_0} {\tilde{M}}_r$ and $f_{2,-} \left({\tilde{M}}_{r}\right)={{\rm{B}}_1} {\tilde{M}}_r^3 + 3{{\rm{B}}_1} {\tilde{M}}_r^2 + (8{{\rm{B}}_2} + 2{{\rm{B}}_1}){\tilde{M}}_r $. When $c\left( {{{\mathbf{R}}_{{{\bar{\mathbf{X}}}_r}}}} \right) <0$, the ${f}_{2,+}\left( {\tilde{M}}_{r}\right)$ and $ {f}_{2,-} \left( {\tilde{M}}_{r}\right)$ are given by ${f}_{2,+}\left( {\tilde{M}}_{r}\right)=3{{\rm{B}}_2} {\tilde{M}}_r^5 + 15{{\rm{B}}_2} {\tilde{M}}_r^4 + ({{\rm{{\bar{B}}}}_1} + 20{{\rm{B}}_2}) {\tilde{M}}_r^3 + 3{{\rm{{\bar{B}}}}_1} {\tilde{M}}_r^2 + (2{{\rm{{\bar{B}}}}_1} + {{\rm{B}}_0}) {\tilde{M}}_r$ and ${f}_{2,-} \left( {\tilde{M}}_{r}\right)= 8{{\rm{B}}_2}{\tilde{M}}_r $. Recall that ${\tilde{M}}_{r} \in \left[2, \frac{ {M} d_{{\rm{R}}} }{ d_{s} } \right] $ and ${{\rm{B}}}_{i}>0, i \in \left\{1,2,3\right\}$, it's clear that $f_{2,+}\left( {\tilde{M}}_{r} \right)-f_{2,-}\left( {\tilde{M}}_{r} \right) > 0 $ and $f_{2,+}\left( {\tilde{M}}_{r} \right)>0, f_{2,-}\left(  {\tilde{M}}_{r} \right)>0$. Because of the term in $f_{1}\left({\tilde{M}}_{r}\right)$ and the signomial term of $f_{2,+}\left( {\tilde{M}}_{r} \right)-f_{2,-}\left({\tilde{M}}_{r}\right)$, the problem~\ref{Eq_Sensor_Deployment_v1:Sub_Num_P1} is not a standard GP. To deal with the signomial property in $f_{0}\left( {\tilde{M}}_{r} \right)$ and problem \eqref{Eq_Sensor_Deployment_v1:Sub_Num_P1}, we introduce the slack variables $x_{2}$ as: 
    \begin{equation}
    	f_{2,+}\left( {\tilde{M}}_{r}\right)- f_{2,-}\left( {\tilde{M}}_{r} \right) \ge x_{2} >0. 
    \end{equation}Thus, $\frac{ { {\sigma^2} {\lambda}_c^2} {{{[\tilde{\mathbf{T}}]}_{(2,2)}}}} { {{{\left| {{\alpha _s}} \right|}^2}{\pi ^2}L} \left[ f_{2,+}\left({\tilde{M}}_{r}\right) - f_{2,-} \left({\tilde{M}}_{r} \right) \right] }  \leq \frac{ { {\sigma^2} {\lambda}_c^2} {{{[\tilde{\mathbf{T}}]}_{(2,2)}}}} { {{{\left| {{\alpha _s}} \right|}^2}{\pi ^2}L}  {x}_{2} } $. The problem \eqref{Eq_Sensor_Deployment_v1:Sub_Num_P1} can be equivalently converted into 
    \begin{subequations}\label{Eq_Sensor_Deployment_v1:Sub_Num_P2}
    	\begin{align}
    		&\label{Eq_Sensor_Deployment_v1_P2:} \underset{ {\tilde{M}}_{r}, x_{2}
    		}  { \mathop{\min} } \; \frac{{{\omega _0}p_x^2p_y^2}} {{{{\epsilon}_0}{{\left( {{p_x} + {p_y}} \right)}^2}}} {f}_{0}^{'} \left( {\tilde{M}}_{r}, {x}_{2}\right) + (1 - {\omega _0})\frac{{{{\tilde{M}}_r}}}{{{M_0}}}  \,\  \\    
    		\label{Eq_Sensor_Deployment_v1_P2:Sub_Num_con1} & {\rm{s.t.}} \;\; \frac{ f_{2,-}\left( {\tilde{M}}_{r} \right)+x_{2} } { f_{2,+} \left( {\tilde{M}}_{r} \right) } \le 1, \;  \\
    		& \quad \quad \eqref{Eq_Sensor_Deployment_v1:Sub_Num_con1}, 
    	\end{align}
    \end{subequations}where ${f}_{0}^{'} \left( {\tilde{M}}_{r}, {x}_{2}\right)= \frac{ {3{\sigma^2} {\lambda}_c^2{{[\tilde{\mathbf{T}}]}_{(1,1)}}} } { {2{{\left| {{{\bm{\alpha}}_s}} \right|}^2}{\pi^2}{{\sin }^2}{ {\theta}_z}L} {f_1}({{\tilde M}_r}) } + \frac{ { {\sigma^2} {\lambda}_c^2} {{{[\tilde{\mathbf{T}}]}_{(2,2)}}}} { {{{\left| {{\alpha _s}} \right|}^2}{\pi ^2}L}  {x}_{2} }$. The equivalence is shown in \textbf{Proposition~\ref{App_GP_Equivalence}}. 
    \begin{proposition}\label{App_GP_Equivalence}
       If $ {\tilde{M}}_{r}^{*} $ is the optimal solution of~\eqref{Eq_Sensor_Deployment_v1:Sub_Num_P1}, $\left\{ {\tilde{M}}_{r}^{*},x_{2}^{*} \right\} $ is the optimal solution of~\eqref{Eq_Sensor_Deployment_v1:Sub_Num_P2}, if and only if $x_{2}^{*}$ satisfy:
       \begin{equation}\label{GP_slack_variables}
		   x_{2}^{*} = f_{2,+}\left( {\tilde{M}}_{r}^{*}\right)- f_{2,-}\left( {\tilde{M}}_{r}^{*} \right).
       \end{equation}
    \end{proposition}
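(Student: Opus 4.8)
The plan is to read \eqref{Eq_Sensor_Deployment_v1:Sub_Num_P2} as the partial-minimization (slack-variable) reformulation of \eqref{Eq_Sensor_Deployment_v1:Sub_Num_P1} and to prove the equivalence by exploiting the monotone dependence of the P2 objective on $x_2$. The key observation I would start from is that $x_2$ enters ${f}_{0}^{'}$ only through the single term $\frac{\sigma^2\lambda_c^2 [\tilde{\mathbf{T}}]_{(2,2)}}{|\alpha_s|^2\pi^2 L\, x_2}$, whose coefficient is strictly positive because $[\tilde{\mathbf{T}}]_{(2,2)}>0$ (a diagonal entry of a positive-definite matrix) and all physical constants are positive. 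Consequently, for every fixed feasible $\tilde{M}_r$ the objective of \eqref{Eq_Sensor_Deployment_v1:Sub_Num_P2} is strictly decreasing in $x_2$ on $x_2>0$, while the deployment term $(1-\omega_0)\tilde{M}_r/M_0$ is independent of $x_2$.

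First I would rewrite the coupling constraint \eqref{Eq_Sensor_Deployment_v1_P2:Sub_Num_con1} equivalently as $0<x_2\le f_{2,+}(\tilde{M}_r)-f_{2,-}(\tilde{M}_r)$, a nonempty interval since the text has already shown $f_{2,+}(\tilde{M}_r)-f_{2,-}(\tilde{M}_r)>0$ on $[2, M d_{\rm{R}}/d_s]$. For the ``only if'' direction I would argue by contradiction: if $\{\tilde{M}_r^*,x_2^*\}$ solves \eqref{Eq_Sensor_Deployment_v1:Sub_Num_P2} but $x_2^*<f_{2,+}(\tilde{M}_r^*)-f_{2,-}(\tilde{M}_r^*)$, then holding $\tilde{M}_r^*$ fixed and increasing $x_2^*$ slightly remains feasible yet strictly lowers the objective by the monotonicity above, contradicting optimality; hence the constraint is active and \eqref{GP_slack_variables} holds.

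For the ``if'' direction I would substitute $x_2^*=f_{2,+}(\tilde{M}_r^*)-f_{2,-}(\tilde{M}_r^*)$ into ${f}_{0}^{'}$ and note that the $x_2$-term collapses exactly onto the second summand of $f_0$, so that ${f}_{0}^{'}(\tilde{M}_r^*,x_2^*)=f_0(\tilde{M}_r^*)$. More generally, the inner minimization over $x_2$ yields $\min_{x_2}{f}_{0}^{'}(\tilde{M}_r,x_2)=f_0(\tilde{M}_r)$ for each feasible $\tilde{M}_r$, so the optimal value of \eqref{Eq_Sensor_Deployment_v1:Sub_Num_P2} coincides with that of \eqref{Eq_Sensor_Deployment_v1:Sub_Num_P1}; since $\tilde{M}_r^*$ attains the latter and $x_2^*$ attains the inner minimum, the pair $\{\tilde{M}_r^*,x_2^*\}$ is optimal for \eqref{Eq_Sensor_Deployment_v1:Sub_Num_P2}.

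The hard part is not any heavy calculation but the sign and feasibility bookkeeping that makes the monotonicity argument valid: I would verify carefully that the coefficient multiplying $1/x_2$ is strictly positive and that the feasible interval for $x_2$ is nonempty, since the whole equivalence collapses if either fails. I would also be careful to keep the deployment term out of the inner minimization, which is precisely what guarantees that \eqref{Eq_Sensor_Deployment_v1:Sub_Num_P1} and \eqref{Eq_Sensor_Deployment_v1:Sub_Num_P2} share the same optimal $\tilde{M}_r^*$.
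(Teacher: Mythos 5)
Your proof is correct and is essentially the argument the paper relies on: the paper's own ``proof'' is only a pointer to \cite[Lemma 2]{SGP_20TWC}, and that lemma is established by exactly your monotonicity/active-constraint reasoning --- the objective of \eqref{Eq_Sensor_Deployment_v1:Sub_Num_P2} is strictly decreasing in $x_{2}$ because the coefficient of $1/x_{2}$ is positive (as $[\tilde{\mathbf{T}}]_{(2,2)}>0$, being a diagonal entry of the positive-definite matrix proportional to ${\mathbf{T}}^{-1}({\mathbf{T}}^{H})^{-1}$), so \eqref{Eq_Sensor_Deployment_v1_P2:Sub_Num_con1} must be tight at any optimum, and tightness collapses \eqref{Eq_Sensor_Deployment_v1:Sub_Num_P2} back onto \eqref{Eq_Sensor_Deployment_v1:Sub_Num_P1}. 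The two bookkeeping points you single out (nonemptiness of the interval $0<x_{2}\le f_{2,+}-f_{2,-}$, which follows from the paper's earlier observation that $f_{2,+}(\tilde{M}_{r})-f_{2,-}(\tilde{M}_{r})>0$ on the feasible range, and the exclusion of the deployment term from the inner minimization) are indeed the only substantive checks, and both hold as you state.
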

    \begin{proof}
    	Please refer to \cite[Lemma 2]{SGP_20TWC}.
     \end{proof}
    To deal with the posynomial terms in \eqref{Eq_Sensor_Deployment_v1_P2:} and \eqref{Eq_Sensor_Deployment_v1_P2:Sub_Num_con1}, we adopt the arithmetic geometric mean to approximate $f_{1} \left( {\tilde{M}}_{r} \right), \; f_{2,+} \left( {\tilde{M}}_{r} \right) $ by its monomial lower bounds ${{\hat{f}}_{1}}({{\tilde{M}}_r})$ or ${{\hat{f}}_{2,+}}({{\tilde{M}}_r}) $. To simplify the presentation, we take $f_{1} \left({\tilde{M}}_{r}\right) $ as an example to illustrate the approximation. By defining ${u}_{1} = d_s^2 a\left( {{{\mathbf{R}}_{{{\bar{\mathbf{X}}}_r}}}} \right) {\tilde{M}}_r^2$, ${u}_{2}=3 d_s^2 a\left( {\mathbf{R}}_{{{\bar{\mathbf{X}}}_r}} \right) {{\tilde M}_r}$, $ {u}_{3}=\left[ {2d_s^2a\left( {{{\mathbf{R}}_{{{\bar{\mathbf{X}}}_r}}}} \right) + 36d_{\rm{R}}^2b\left( {{{\mathbf{R}}_{{{\bar{\mathbf{X}} }_r}}},{{\mathbf{v}}_{t,1}}} \right)} \right]$, ${f_{1}}({{\tilde{M}}_r}) $ can be expressed as 
    \begin{equation}
    		{f_{1}}({{\tilde{M}}_r})={\tilde{M}}_r \left[ {u}_{1}  + {u}_{2} + {u}_{3}\right]  \ge  {\tilde{M}}_r \prod\limits_{i = 1}^3 {{{\left( {\frac{{{u_i}}}{{{\lambda _i}}}} \right)}^{{\lambda _i}}}} \buildrel \Delta \over = {{\hat{f}}_{1}}({{\tilde{M}}_r}) 
    \end{equation}where ${\lambda}_{i}=\frac{ {u}_{i} \left( {\tilde{M}}_{r} \right) }{ \sum\limits_{i = 1}^3 {{u_i}\left( {\tilde{M}}_{r} \right)}  }, i \in \left\{1,2,3\right\}.$ Similarly, $f_{2,+}\left( {\tilde{M}}_{r} \right)$ is approximated by ${f_{2,+}}({{\tilde{M}}_r})={\tilde{M}}_r \left[ {\bar{{u}}}_{1} + {\bar{{u}}}_{2} + {\bar{{u} }}_{3}+ {\bar{u }}_{4} \right] \ge {\tilde{M}}_r \prod\limits_{i = 1}^4 {{{\left( {\frac{ {{\bar{u}}_i} }{{{{\bar{\lambda}}_i}}}} \right)}^{{{\bar{\lambda }}_i}}}} \buildrel \Delta \over = {{\hat{f}}_{2,+}}({{\tilde{M}}_r})$, where ${{\bar{u}}_1}=3{{\rm{B}}_2} {\tilde{M}}_r^4$, ${{\bar{u}}_2}=15{{\rm{B}}_2}\tilde M_r^3$, ${{\bar{u}}_3}=20{{\rm{B}}_2}\tilde M_r^2 $, and ${{\bar{u}}_4}={{\rm{B}}_0}$. ${\bar{\lambda}}_{i}$ are given by ${\lambda}_{i}=\frac{ {\bar{u}}_{i} \left( {\tilde{M}}_{r} \right) }{ \sum\limits_{i = 1}^3 {{{\bar{u}}_i}\left( {\tilde{M}}_{r} \right)}  }, i \in \left\{1,2,3\right\} $. Therefore, $\frac{ f_{2,-}\left( {\tilde{M}}_{r} \right)+x_{2} } { f_{2,+} \left( {\tilde{M}}_{r} \right) } \le 1 $ can be approximated by 
    \begin{equation}\label{Approximate_f}
    	\frac{ f_{2,-}\left( {\tilde{M}}_{r} \right)+x_{2} } { f_{2,+} \left( {\tilde{M}}_{r} \right) } \le \frac{ f_{2,-}\left( {\tilde{M}}_{r} \right)+x_{2} } { {\hat{f}}_{2,+} \left( {\tilde{M}}_{r} \right) } \le 1. 
    \end{equation} The problem \eqref{Eq_Sensor_Deployment_v1:Sub_Num_P2} is derived as
    \begin{subequations}\label{Eq_Sensor_Deployment_v1:Sub_Num_P2_GP}
    	\begin{align}
    		&\label{Eq_Sensor_Deployment_v1_P2_GP:} \underset{ {\tilde{M}}_{r}, x_{2}
    		}  { \mathop{\min} } \; \frac{{{\omega _0}p_x^2p_y^2}} {{{{\epsilon}_0}{{\left( {{p_x} + {p_y}} \right)}^2}}} {\bar{f}}_{0} \left( {\tilde{M}}_{r}, {x}_{2}\right) + (1 - {\omega _0})\frac{{{{\tilde{M}}_r}}}{{{M_0}}}  \,\  \\    
    		\label{Eq_Sensor_Deployment_v1_P2_GP:Sub_Num_con1} & {\rm{s.t.}} \;\;  \eqref{Eq_Sensor_Deployment_v1:Sub_Num_con1}, \eqref{Approximate_f},
    	\end{align}
    \end{subequations}where ${\bar{f}}_{0} \left( {\tilde{M}}_{r}, {x}_{2}\right) = \frac{ {3{\sigma^2} {\lambda}_c^2{{[\tilde{\mathbf{T}}]}_{(1,1)}}} } { {2{{\left| {{{\bm{\alpha}}_s}} \right|}^2}{\pi^2}{{\sin }^2}{ {\theta}_z}L} {{\hat{f}}_1}({{\tilde M}_r}) } + \frac{ { {\sigma^2} {\lambda}_c^2} {{{[\tilde{\mathbf{T}}]}_{(2,2)}}}} { {{{\left| {{\alpha _s}} \right|}^2}{\pi ^2}L}  {x}_{2} } $.
    Problem~\eqref{Eq_Sensor_Deployment_v1:Sub_Num_P2_GP} is a standard geometric programming (GP), which can be solved efficiently. Based on the continuous optimal solution ${\tilde{M}}_{r}^{*} $ of \eqref{Eq_Sensor_Deployment_v1:Sub_Num_P2}, the original integer solution $ {M}_{r}$ can be decided via the comparison of $ {\rm{OF}} \left( \left\lfloor {{\tilde{M}}_r} \right\rfloor \right)$ and ${\rm{OF}} \left( \left\lceil {{\tilde{M}}_r} \right\rceil \right)$. The details can be seen in line 13 - line 14 in \textbf{Algorithm~\ref{alg_1}}. 
    
    \subsection{The Beamforming Design}
    \subsubsection{The Active Beamforming Design}
    With the fixed $\left\{{M}_{r}, {d}_{s}, {\mathbf{q}}_{r}, {\mathbf{q}}_{t} \right\}$, the subproblem about the active BF can be equivalently simplified as
    \begin{subequations}\label{ActBF_Pro_Ori}
    	\begin{align}
    		& \label{ActBF_Pro_Ori:Obj}\mathop {\min } \limits_{ {{{\mathbf{w}}_s},{{\mathbf{v}}_c}} }  \frac{{{\omega_0}p_x^2p_y^2}} {{{{\epsilon}_0}{{\left( {{p_x} + {p_y}} \right)}^2}}} \sum\limits_{i = 1}^2 {\left\{ { [{{\mathbf{J}}_{ {{\tilde{{\bm{\eta}}}}}  }}]_{(i,i)}^{ - 1}  {[{\tilde{\mathbf{T}}}]}_{(i,i)} } \right\}} \\
    		& \;  {\rm{s.t.}}  \quad \eqref{Communication QoS_NOMA_CU}-\eqref{Auxiliary Variables}. 
    	\end{align}
    \end{subequations}Recall that ${\bm{\alpha}}_{t}^{H} {\bm{\Theta}}_{l}={\mathbf{q}}_{l}^{H} {\rm{diag}}\left( {\bm{\alpha}}_{t}^{H} \right)$, ${\rm{Tr}} \left( {\mathbf{R}}_{ {\bar{{\mathbf{X}}}}_{r} } {\bar{{\mathbf{A}}}}_{t} \right)={\rm{Tr}} \left( {\bm{\alpha}}_{t}^{H} {\mathbf{\Theta}}_{r} {\mathbf{H}}_{\rm{BR}} {\mathbf{R}}_{x} {\mathbf{H}}_{\rm{BR}}^{H} {\mathbf{\Theta}}_{r}^{H}     {\bm{\alpha}}_{t} \right)={\rm{Tr}} \left( {\rm{diag}}({\bm{\alpha}}_{t}^{H}) {\mathbf{H}}_{\rm{BR}} {\mathbf{R}}_{x} {\mathbf{H}}_{\rm{BR}}^{H} {\rm{diag}}({\bm{\alpha}}_{t}) {\mathbf{Q}}_{r} \right)$, where ${\mathbf{Q}}_{l}={\mathbf{q}}_{l} {\mathbf{q}}_{l}^{H}, l \in \left\{r,t\right\} $. The expressions of $ \left\{a\left( {\mathbf{R}}_{{{\bar {\mathbf{X}}}_r}} \right), b\left( {\mathbf{R}}_{{{\bar{\mathbf{X}} }_r}}, {\mathbf{v}}_{t,i} \right), c\left( {\mathbf{R}}_{{{\bar{\mathbf{X}} }_r}} \right)   \right\}$ can be derived as
    \begin{subequations}
    	\begin{equation}
    		a\left( {\mathbf{R}}_{{{\bar {\mathbf{X}}}_r}} \right)={\rm{Tr}} \left[ {{\mathbf{R}}_x} {{\bm{\Gamma}}_s} ( {\bar{{\mathbf{A}}}_t} ) \right] ,
    	\end{equation}
    	\begin{equation}
    		b\left( {\mathbf{R}}_{{{\bar{\mathbf{X}} }_r}}, {\mathbf{v}}_{t,i} \right)={\rm{Tr}} \left[ {{\mathbf{R}}_x} {{\bm{\Gamma}}_s} ( {{\bar{{\mathbf{A}}}}_t},{\mathbf{v}}_{t,i} ) \right],
    	\end{equation}
    	\begin{equation}
    		c\left( {\mathbf{R}}_{{{\bar{\mathbf{X}} }_r}} \right)={\rm{Tr}} \left[ {{\mathbf{R}}_x} {{{\bm{\Gamma}}^{'}}_s}( {\bar{{\mathbf{A}}}_t} ,{{\mathbf{v}}_{t,2}}) \right],
    	\end{equation}
    \end{subequations}where ${{\bm{\Gamma}}_s}({\bar{\mathbf{A}}_t}) = {\mathbf{H}}_{{\rm{BR}}}^H{\rm{diag}}( {{\bm{\alpha}}_t} ){{\mathbf{Q}}_r} {\rm{diag}}{\left( {{\bm{\alpha}}_t^H} \right)}{{\mathbf{H}}_{{\rm{BR}}}} \in {\mathbb{C}}^{N \times N}$, ${{{\bm{\Gamma}}^{'}}_s}( {\bar{{\mathbf{A}}}_t} , {{\mathbf{v}}_{t,2}})={\mathbf{H}}_{{\rm{BR}}}^H {\rm{diag}}({{\bm{\alpha}}_t}){{\mathbf{Q}}_r} {\rm{diag}} {\left( {{\mathbf{v}}_{t,2}} \odot {{\bm{\alpha}}_t}  \right)^H} {{\mathbf{H}}_{{\rm{BR}}}} \in {\mathbb{C}}^{N \times N} $. By denoting $ {{{\rm{\tilde{C}}}}_0} \buildrel \Delta \over = \frac{2{{\left| {{\alpha}_s} \right|}^2} {\pi^2} {{\sin }^2}{\theta_z}L{M_r}} {{3{\sigma ^2} {\lambda} _c^2}} $, ${{{\rm{\tilde C}}}_1} = \frac{{2d_s^4\left\| {{{\mathbf{v}}_{r,2}}} \right\|_2^2{{\left| {{\alpha _s}} \right|}^2}{\pi^2}L{M_r}}}{{\tilde{r}_s^4{\sigma^2}{\lambda}_c^2}} $, ${{{\rm{\tilde C}}}_2} = \frac{{2d_{\rm{R}}^4{{\left| {{\alpha _s}} \right|}^2}{\pi ^2}L{M_r}}}{{r_s^4{\sigma ^2}{\lambda}_c^2}} $, and ${{{\rm{\tilde C}}}_3} = \frac{{d_s^2 d_{\rm{R}}^2{{\left| {{\alpha_s}} \right|}^2}{\pi^2}L{M_r}\left( {{M_r} + 1} \right)({M_r} + 2)}} {{6 {\tilde{r}}_s^2 r_s^2{\sigma ^2} {\lambda}_c^2}}$, the objective function of \eqref{ActBF_Pro_Ori} can be expressed as $ [{{\mathbf{J}}_{ {{\tilde{{\bm{\eta}}}}}  }}]_{(1,1)} = {{\rm{\tilde{C}}}_0}  ({M_r} + 1)( {M_r} + 2) d_s^2 {\rm{Tr}} \left[ {{\mathbf{R}}_x} {{\bm{\Gamma}}_s} ( {\bar{{\mathbf{A}}}_t} ) \right] +{{\rm{\tilde{C}}}_0} 36 d_{\rm{R}}^2 {\rm{Tr}} \left[ {{\mathbf{R}}_x} {{\bm{\Gamma}}_s} ( {{\bar{{\mathbf{A}}}}_t},{\mathbf{v}}_{t,1} ) \right] $, and $	[{{\mathbf{J}}_{ {{\tilde{{\bm{\eta}}}}} }}]_{(2,2)}  = {{{\rm{\tilde{C}}}}_1} {\rm{Tr}} \left[ {{\mathbf{R}}_x}{{\bm{\Gamma}}_s}( {\bar{{\mathbf{A}}}_t} )  \right] + {{{\rm{\tilde{C}}}}_2} {\rm{Tr}}\left[ {{\mathbf{R}}_x} {{\bm{\Gamma}} _s} ( {\bar{{\mathbf{A}}}_t}, {{\mathbf{v}}_{t,2}}) \right] -{{{\rm{\tilde{C}}}}_3} {\rm{Tr}} \left[ {{\mathbf{R}}_x} {{{\bm{\Gamma}}^{'}}_s}( {\bar{{\mathbf{A}}}_t} ,{{\mathbf{v}}_{t,2}}) \right]  $. 
    To deal with the optimization variables in the denominators of \eqref{ActBF_Pro_Ori:Obj}, we introduce the auxiliary variable ${\mathbf{U}} \in {\mathbb{C}}^{2\times2}$ to simplify the problem into a more tractable form, which satisfies
    \begin{equation}\label{U_initial}
    	{\rm{diag}} \left( \left[{   { \frac{ [{{\mathbf{J}}_{ {{\tilde{{\bm{\eta}}}}}  }}]_{(1,1)} }{ {[{\tilde{\mathbf{T}}}]}_{(1,1)} }
    	}  }, { { \frac{ [{{\mathbf{J}}_{ {{\tilde{{\bm{\eta}}}}} }}]_{(2,2)} }{ {[{\tilde{\mathbf{T}}}]}_{(2,2)} }
    	}  } \right]  \right) \succeq {\mathbf{U}}.
    \end{equation}For the quadric form of $\left\{{\mathbf{w}}_{s}, {\mathbf{v}}_{c} \right\} $ in both the objective and constraints function, we introduce the auxiliary variables $ {\mathbf{V}}_{c}={\mathbf{v}}_{c}{\mathbf{v}}_{c}^{H} \in {\mathbb{C}}^{N \times N}, {{\mathbf{R}}_{s0}} = {{\mathbf{w}}_s}{\mathbf{w}}_s^H \in {\mathbb{C}}^{N \times N} $. The subproblem~\eqref{ActBF_Pro_Ori} can be equivalently converted into
    \begin{subequations}\label{ActBF_Pro_Ori_Maxtrix}
    	\begin{align}
    		& \mathop {\min} \limits_{ \left\{ {{{\mathbf{V}}_c}}, {{\mathbf{R}}_{s0}}, {\mathbf{U}} \right\}} {\rm{Tr}}\left( { {{ { {\mathbf{U}}}^{- 1}}} } \right) \\
    		\label{ActBF_Pro_Ori_Maxtrix_C1}&  {\rm{s.t.}} \;\; {\rm{Tr}} \left( {{{\mathbf{V}}_c} {{\bm{\Gamma}}_c}} \right) - ({2^{{{\bar{R}}_{min}}}} - 1) [{\rm{Tr}}\left( {{{\mathbf{R}}_{s0}} {{\bm{\Gamma}}_c}} \right) + {\sigma^2}] \ge 0, \\
    		\label{ActBF_Pro_Ori_Maxtrix_C2}& \quad \quad {\rm{Tr}}\left( {\mathbf{R}}_{x}\right) \le {P_{max}},\\
    		\label{ActBF_Pro_Ori_Maxtrix_C3}& \quad  \quad {{\mathbf{R}}_x} - {{\mathbf{V}}_c} \succeq {\mathbf{0}}, \quad {\mathbf{V}}_{c}  \succeq {\mathbf{0}},\\
    		& \quad \quad {\rm{rank}} \left( {\mathbf{V}}_{c} \right)=1, \\
    		& \quad \quad \eqref{U_initial}, 
    	\end{align}
    \end{subequations}where ${{\bm{\Gamma}}_c} = {\mathbf{H}}_{{\rm{BR}}}^H 
    {\rm{diag}}({\mathbf{g}}_c){{\mathbf{Q}}_t} {\rm{diag}}({\mathbf{g}}_c^H)
    {{\mathbf{H}}_{{\rm{BR}}}} \in {\mathbb{C}}^{N \times N} $. It is worth noting that the problem \eqref{ActBF_Pro_Ori_Maxtrix} is an equivalent transformation of \eqref{ActBF_Pro_Ori}. To prove this equivalence, we first assume that $ {\mathbf{U}}$ holds strict inequality for \eqref{U_initial}. With the increased ${\mathbf{U}}$, the constraint still holds while the objective function increases until \eqref{U_initial} reaches the equality. It is clear that the constraints \eqref{ActBF_Pro_Ori_Maxtrix_C1}-\eqref{ActBF_Pro_Ori_Maxtrix_C3} in problem~\eqref{ActBF_Pro_Ori_Maxtrix} are linear matrix inequalities of $\left\{{\mathbf{R}}_{s0}, {\mathbf{V}}_{c} \right\} $ and the objective function is mono-decreasing with ${\mathbf{U}}$~\cite{boyd_CVX}. To deal with the non-convex rank one constraints, we employ the SDR approach by omitting the rank constraints. The problem~\eqref{ActBF_Pro_Ori_Maxtrix} can be converted into the convex semidefinite programming (SDP) problem~\eqref{ActBF_Pro_Ori_Maxtrix_SDR}, which can be solved via CVX~\cite{cvx}.
    \begin{subequations}\label{ActBF_Pro_Ori_Maxtrix_SDR}
    	\begin{align}
    		& \mathop {\min} \limits_{ \left\{ {{{\mathbf{R}}_{s0}},{{\mathbf{V}}_c}},{\mathbf{U}} \right\}} {\rm{Tr}}\left( { {{ { {\mathbf{U}}}^{- 1}}} } \right) \\
    		& {\rm{s.t.}} \; \; \eqref{ActBF_Pro_Ori_Maxtrix_C1}-\eqref{ActBF_Pro_Ori_Maxtrix_C3}.
    	\end{align}
    \end{subequations}The rank one optimal solution can be achieved as shown in the {\textbf{Proposition~\ref{Rank_one_solution} }}. 
    \begin{proposition}\label{Rank_one_solution}
    	With the global optimal solution $\left\{ {\mathbf{{\bar{R}}}}_{s0}, {\mathbf{{\bar{V}}}_{c}} \right\} $ of problem~\eqref{ActBF_Pro_Ori_Maxtrix_SDR}, there is a optimal solution $\left\{ {\mathbf{R}}_{s0}^{*},  {\mathbf{V}}_{c}^{*} \right\} $) of problem~\eqref{ActBF_Pro_Ori_Maxtrix}, which can be expressed as
    	\begin{equation}\label{Rank_One_Solution}
    		{\mathbf{R}}_{s0}^{*}={\mathbf{{\bar{R}}}}_{s0}, \quad  {\mathbf{v}}_{c}^{*}=\left( {\mathbf{u}}_{c}^{H} {\mathbf{{\bar{V}}}}_{c} {\mathbf{u}}_{c} \right)^{-1/2} {\mathbf{{\bar{V}}}}_{c} {\mathbf{u}}_{c},
    	\end{equation}where ${\mathbf{u}}_{c}={\mathbf{H}}_{{\rm{BR}}}^{H} {\mathbf{\Theta}}_{t}^{H} {\mathbf{g}}_{c} \in {\mathbb{C}}^{N \times 1}$ denotes the equivalent channel vector of the CU.
    \end{proposition}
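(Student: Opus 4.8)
The plan is to establish that the semidefinite relaxation \eqref{ActBF_Pro_Ori_Maxtrix_SDR} is tight, i.e. that the explicit rank-one reconstruction \eqref{Rank_One_Solution} produces a point that is feasible for \eqref{ActBF_Pro_Ori_Maxtrix} and attains the same objective value as the relaxed optimum $\{{\bar{\mathbf{R}}}_{s0}, {\bar{\mathbf{V}}}_{c}\}$. First I would record the two structural facts that drive the whole argument: (i) the sensing objective ${\rm{Tr}}({\mathbf{U}}^{-1})$ and the power budget \eqref{ActBF_Pro_Ori_Maxtrix_C2} depend on the beamformers only through the aggregate covariance ${\mathbf{R}}_{x} = {\mathbf{R}}_{s0} + {\mathbf{V}}_{c}$, since every FIM entry $[{\mathbf{J}}_{\tilde{\bm{\eta}}}]_{(i,i)}$ enters through the linear maps $a({\mathbf{R}}_{{\bar{\mathbf{X}}}_r})$, $b(\cdot)$, $c(\cdot)$, each a trace of ${\mathbf{R}}_{x}$ against a fixed positive semidefinite matrix; and (ii) the communication constraint \eqref{ActBF_Pro_Ori_Maxtrix_C1} couples to ${\mathbf{V}}_{c}$ only through the scalar ${\rm{Tr}}({\mathbf{V}}_{c} {\bm{\Gamma}}_{c}) = {\mathbf{u}}_{c}^{H} {\mathbf{V}}_{c} {\mathbf{u}}_{c}$, because ${\bm{\Gamma}}_{c} = {\mathbf{u}}_{c} {\mathbf{u}}_{c}^{H}$ is rank one with ${\mathbf{u}}_{c} = {\mathbf{H}}_{{\rm{BR}}}^{H} {\mathbf{\Theta}}_{t}^{H} {\mathbf{g}}_{c}$ the effective CU channel. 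Hence it suffices to replace ${\bar{\mathbf{V}}}_{c}$ by a rank-one ${\mathbf{V}}_{c}^{*}$ that (a) leaves ${\mathbf{u}}_{c}^{H} {\mathbf{V}}_{c}^{*} {\mathbf{u}}_{c}$ unchanged and (b) keeps ${\mathbf{R}}_{x}$ invariant.

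Next I would verify that the candidate ${\mathbf{v}}_{c}^{*} = ({\mathbf{u}}_{c}^{H} {\bar{\mathbf{V}}}_{c} {\mathbf{u}}_{c})^{-1/2} {\bar{\mathbf{V}}}_{c} {\mathbf{u}}_{c}$, equivalently ${\mathbf{V}}_{c}^{*} = {\bar{\mathbf{V}}}_{c} {\mathbf{u}}_{c} {\mathbf{u}}_{c}^{H} {\bar{\mathbf{V}}}_{c} / ({\mathbf{u}}_{c}^{H} {\bar{\mathbf{V}}}_{c} {\mathbf{u}}_{c})$, satisfies (a): a direct substitution gives ${\mathbf{u}}_{c}^{H} {\mathbf{V}}_{c}^{*} {\mathbf{u}}_{c} = {\mathbf{u}}_{c}^{H} {\bar{\mathbf{V}}}_{c} {\mathbf{u}}_{c}$, so the QoS term in \eqref{ActBF_Pro_Ori_Maxtrix_C1} is preserved exactly. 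The crucial feasibility ingredient is the positive-semidefinite residual ${\bar{\mathbf{V}}}_{c} - {\mathbf{V}}_{c}^{*} \succeq {\mathbf{0}}$, which I would prove by a Cauchy--Schwarz argument in the semi-inner product $\langle {\mathbf{a}},{\mathbf{b}}\rangle = {\mathbf{a}}^{H} {\bar{\mathbf{V}}}_{c} {\mathbf{b}}$ (valid since ${\bar{\mathbf{V}}}_{c} \succeq {\mathbf{0}}$): this yields $|{\mathbf{z}}^{H} {\bar{\mathbf{V}}}_{c} {\mathbf{u}}_{c}|^{2} \le ({\mathbf{z}}^{H} {\bar{\mathbf{V}}}_{c} {\mathbf{z}})({\mathbf{u}}_{c}^{H} {\bar{\mathbf{V}}}_{c} {\mathbf{u}}_{c})$ for every ${\mathbf{z}}$, i.e. ${\mathbf{z}}^{H} ({\bar{\mathbf{V}}}_{c} - {\mathbf{V}}_{c}^{*}) {\mathbf{z}} \ge 0$.

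With the residual in hand I would keep ${\mathbf{R}}_{x}$ fixed by letting the dedicated-sensing covariance absorb it, i.e. ${\mathbf{R}}_{s0}^{*} = {\bar{\mathbf{R}}}_{s0} + ({\bar{\mathbf{V}}}_{c} - {\mathbf{V}}_{c}^{*})$, which collapses to ${\bar{\mathbf{R}}}_{s0}$ precisely when ${\bar{\mathbf{V}}}_{c}$ is already rank one, thereby recovering \eqref{Rank_One_Solution}. Then ${\mathbf{R}}_{s0}^{*} \succeq {\bar{\mathbf{R}}}_{s0} \succeq {\mathbf{0}}$, ${\mathbf{V}}_{c}^{*} \succeq {\mathbf{0}}$ with ${\rm{rank}}({\mathbf{V}}_{c}^{*}) = 1$, and ${\mathbf{R}}_{x}^{*} = {\mathbf{R}}_{s0}^{*} + {\mathbf{V}}_{c}^{*} = {\bar{\mathbf{R}}}_{s0} + {\bar{\mathbf{V}}}_{c} = {\bar{\mathbf{R}}}_{x}$. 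I would then check each constraint: \eqref{ActBF_Pro_Ori_Maxtrix_C2} and the objective hold because ${\mathbf{R}}_{x}$ is unchanged; ${\mathbf{R}}_{x}^{*} - {\mathbf{V}}_{c}^{*} = {\mathbf{R}}_{s0}^{*} \succeq {\mathbf{0}}$ gives \eqref{ActBF_Pro_Ori_Maxtrix_C3}; and in \eqref{ActBF_Pro_Ori_Maxtrix_C1} the term ${\rm{Tr}}({\mathbf{R}}_{s0}^{*} {\bm{\Gamma}}_{c}) = {\rm{Tr}}({\bar{\mathbf{R}}}_{s0} {\bm{\Gamma}}_{c})$, since the absorbed residual contributes ${\rm{Tr}}(({\bar{\mathbf{V}}}_{c} - {\mathbf{V}}_{c}^{*}){\bm{\Gamma}}_{c}) = 0$ by step (a). As the reconstructed point is feasible for \eqref{ActBF_Pro_Ori_Maxtrix} and shares the relaxed optimal value, it is optimal, proving tightness.

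The main obstacle I anticipate is the positive-semidefinite residual ${\bar{\mathbf{V}}}_{c} - {\mathbf{V}}_{c}^{*} \succeq {\mathbf{0}}$: it is exactly what makes ${\mathbf{R}}_{s0}^{*}$ a legitimate covariance while simultaneously guaranteeing that the QoS term is preserved rather than merely lower-bounded. A secondary point requiring care is confirming ${\bm{\Gamma}}_{c} = {\mathbf{u}}_{c} {\mathbf{u}}_{c}^{H}$ directly from its definition ${\bm{\Gamma}}_{c} = {\mathbf{H}}_{{\rm{BR}}}^{H} {\rm{diag}}({\mathbf{g}}_{c}) {\mathbf{Q}}_{t} {\rm{diag}}({\mathbf{g}}_{c}^{H}) {\mathbf{H}}_{{\rm{BR}}}$ with ${\mathbf{Q}}_{t} = {\mathbf{q}}_{t} {\mathbf{q}}_{t}^{H}$, so that the QoS coupling is genuinely one-dimensional and the single rank-one restriction on ${\mathbf{V}}_{c}$ costs no optimality.
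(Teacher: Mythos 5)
Your proposal is correct, but it is worth noting that the paper itself contains no proof of this proposition: it simply defers to [Theorem~1] of the cited reference \cite{XiangLiu_20TSP}. What you have written is essentially a self-contained reconstruction of that theorem's argument, correctly adapted to the single-user setting here: you identify that the sensing objective and power budget see the beamformers only through ${\mathbf{R}}_x$, that the QoS constraint sees ${\mathbf{V}}_c$ only through the scalar ${\mathbf{u}}_c^H{\mathbf{V}}_c{\mathbf{u}}_c$ because ${\bm{\Gamma}}_c$ is rank one, and you supply the one nontrivial ingredient — ${\bar{\mathbf{V}}}_c - {\mathbf{V}}_c^* \succeq \mathbf{0}$ via the generalized Cauchy--Schwarz inequality in the semi-inner product induced by ${\bar{\mathbf{V}}}_c$ — which is exactly the engine of the cited proof. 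One point in your favor that deserves emphasis: your construction sets ${\mathbf{R}}_{s0}^* = {\bar{\mathbf{R}}}_{s0} + ({\bar{\mathbf{V}}}_c - {\mathbf{V}}_c^*)$ so that ${\mathbf{R}}_x$ (and hence the FIM entries and the constraint \eqref{U_initial}) is preserved, whereas the proposition as literally stated keeps ${\mathbf{R}}_{s0}$ itself unchanged; since shrinking ${\mathbf{V}}_c$ without compensating would alter ${\mathbf{R}}_x$ and there is no monotonicity guarantee in \eqref{J_expression} (the $c(\cdot)$ term enters with a negative sign), your version is the one that actually closes the argument and is the faithful analogue of the cited theorem, where it is the \emph{total} covariance that is held fixed. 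In short, your blind proof not only fills in the argument the paper omits but also quietly repairs a small imprecision in how the proposition transcribes the reconstruction.
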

    \begin{proof}
    	Please refer to~\cite[Theorem 1]{XiangLiu_20TSP}.
    \end{proof}
    
    \subsubsection{The Passive Beamforming} 
    With the fixed active BF $\left\{{\mathbf{V}}_{c}, {\mathbf{R}}_{s0} \right\}$ and the auxiliary variable ${\mathbf{U}}$, the subproblem w.r.t. passive BF can be equivalently expressed as
    \begin{subequations}\label{PasBF_Pro_Ori}
    	\begin{align}
    		& \quad \quad  {\rm{Find}} \quad \left\{ {{{\mathbf{Q}}_r}}, {{\mathbf{Q}}_{t}} \right\}   \\
    		\label{PasBF_Pro_Ori_SINR} & {\rm{s.t.}} \; \; {\rm{Tr}}\left( {{{\bm{\Upsilon}}_c}({{\mathbf{g}}_c}){{\mathbf{Q}}_t}} \right) \ge ({2^{{{\bar{R}}_{min}}}} - 1) [{\sigma^2} + {\rm{Tr}}\left( {{\bm{\Upsilon}} _c^{'}({{\mathbf{g}}_c}){ {\mathbf{Q}}_t}} \right)],\\
    		&\label{PasBF_Pro_Ori_amplitude} \quad \quad {\left[ {{{\mathbf{Q}}_t}} \right]_{(m,m)}} + {\left[ {{{\mathbf{Q}}_r}} \right]_{(m,m)}} = 1, \\
    		&\label{PasBF_Pro_Ori_rank} \quad \quad {\rm{rank}} \left( {{\mathbf{Q}}_t} \right) = 1,  {\rm{rank}} \left( {{\mathbf{Q}}_r} \right) =1,\\
    		& \quad \quad \eqref{U_initial}, 
    	\end{align}
    \end{subequations}where ${\bm{\Upsilon}}\left( {{\mathbf{g}}_c} \right) = {\rm{diag}} \left( {\mathbf{g}}_{c}^{H} \right) {{\mathbf{H}}_{{\rm{BR}}}} {{\mathbf{R}}_x} {\mathbf{H}}_{{\rm{BR}}}^{H} {\rm{diag }}\left( {\mathbf{g}}_{c} \right) \in {\mathbb{C}}^{ M_{t} \times M_{t} }$, ${\bm{\Upsilon}}^{'} \left( {{\mathbf{g}}_c} \right) =  {\rm{diag}} \left( {\mathbf{g}}_{c}^{H} \right) {{\mathbf{H}}_{{\rm{BR}}}} {{\mathbf{R}}_{s0}} {\mathbf{H}}_{{\rm{BR}}}^{H} {\rm{diag }}\left( {\mathbf{g}}_{c} \right) \in {\mathbb{C}}^{{M} \times {M} }.$ Let ${ {\bm{\Upsilon}}_s}( {\bm{\alpha}}_t) \buildrel \Delta \over = {\rm{diag}}( {\bm{\alpha}} _t^H){{\mathbf{H}}_{{\rm{BR}}}}{{\mathbf{R}}_x}{\mathbf{H}}_{{\rm{BR}}}^{H}{\rm{diag}}({\bm{\alpha}}_t) \in {\mathbb{C}}^{ M_{t} \times M_{t}} $, and ${\bm{\Upsilon}}^{'}({ {\mathbf{v}}_{t,2}} \odot {{\bm{\alpha}}_t}) \buildrel \Delta \over =  {\rm{diag}}\left( {{{\mathbf{v}}_{t,2}} \odot {\bm{\alpha}}_t^H} \right) {{\mathbf{H}}_{{\rm{BR}}}} {{\mathbf{R}}_x}{\mathbf{H}}_{{\rm{BR}}}^{H} {\rm{diag}}\left( {{\bm{\alpha}}_t} \right) \in {\mathbb{C}}^{M_{t} \times M_{t} }$, the expressions of ${{ \left[ {\mathbf{J}}_{ {{\tilde{{\bm{\eta}}}}}} \right] }_{(i,i)}}, i \in \left\{1,2\right\}$ are shown in \eqref{J_expression_pasBF}.
    \begin{figure*}
    	\begin{subequations}\label{J_expression_pasBF}
    		\begin{equation}
    			{{ \left[ {\mathbf{J}}_{ {\tilde{\bm{\theta}}}_{z} {{\tilde{\bm{\theta}} }_z} }  \right] }_{(1,1)}}= {{{\rm{\tilde{D}}}}_0} \{ ({M_r} + 1)({M_r} + 2)d_s^2\Re \left[ {{\rm{Tr}}({{\bm{\Upsilon}}_s}\left( {\bm{\alpha}}_t \right)  {{\mathbf{Q}}_r})} \right] + 12 d_R^2\Re \left[ {{\rm{Tr}}({ {\bm{\Upsilon}}_s}( {{\mathbf{v}}_{t,2}} \odot {{\bm{\alpha}}_t}){ {\mathbf{Q}}_r})} \right]\}, 
    		\end{equation}
    	    \begin{equation}
    	    	\begin{split}
    	    		{{ \left[ {\mathbf{J}}_{ {\tilde{\bm{\theta}}}_{z} {{\tilde{\bm{\theta}} }_z} }  \right] }_{(2,2)}}&= {{{\rm{\tilde{C}}}}_0} \Re \left[ {{\rm{Tr}}({{\bm{\Upsilon }}_s}( {\bm{\alpha}}_t) {{\mathbf{Q}}_r})} \right] + {{{\rm{\tilde{C}}}}_1} \Re \left[ {\rm{Tr}} \left( {{{\bm{\Upsilon}}_s}( {{\mathbf{v}}_{t,2}} \odot {{\bm{\alpha }}_t}){{\mathbf{Q}}_r}} \right)  \right]  \\
    	    		&- {{{\rm{\tilde{C}}}}_2} \Re \left[ {\rm{Tr}} \left({{\bm{\Upsilon}}_s^{'}({{\mathbf{v}}_{t,2}} \odot {{\bm{\alpha }}_t}){{\mathbf{Q}}_r}} \right) \right].  
    	    	\end{split}
    	    \end{equation}
    	\end{subequations}
       \hrulefill
    \end{figure*}Due to the rank one constraint \eqref{PasBF_Pro_Ori_rank}, the problem \eqref{PasBF_Pro_Ori} is still a non-convex problem, which is hard to solve. To deal with this dilemma, we take the equivalence transformation of \eqref{PasBF_Pro_Ori_rank} as $ \left\|\mathbf{Q}_{l}\right\|_* -\left\|\mathbf{Q}_{l}\right\|_2=0,  l \in\{t, r\}$, where $\left\|\mathbf{Q}_{l}\right\|_*=\sum_i {\sigma}_{i}\left(\mathbf{Q}_{l}\right)$ and $ \left\|\mathbf{Q}_{l}\right\|_2={\sigma }_{1} \left(\mathbf{Q}_{l} \right)$ denote the nuclear and spectral norm of ${\mathbf{Q}}_{l}$, respectively. $\sigma_{i}$ denotes the $i$th largest singular value of matrix ${\mathbf{{Q}}}$. Considering $ \forall \; {\mathbf{Q}}_{l} \in {\mathbb{H}}^{M}, {\mathbf{Q}}_{l} \succeq 0 $, the inequality $ \left\|\mathbf{Q}_{l}\right\|_* -\left\|\mathbf{Q}_{l}\right\|_2 \ge 0 $ always holds and the equation can be reached if and only if ${\rm{rank}} \left( {\mathbf{Q}}_{l} \right)=1 $. To evaluate the constraints violation, the penalty term can be expressed as $ \frac{1}{\rho} \sum\limits_{l \in {\left\{r,t\right\}}} \left( \left\|\mathbf{Q}_{l}\right\|_*- \left\|\mathbf{Q}_{l}\right\|_2 \right)  $, where $\rho$ denotes the penalty factor. By taking the penalty term as the objective function, the problem \eqref{PasBF_Pro_Ori} can be expressed as follows:
    \begin{subequations}\label{PasBF_Pro_Ori_penalty}
    	\begin{align}
    		& \mathop {\min} \limits_{ \left\{ {{{\mathbf{Q}}_r}}, {{\mathbf{Q}}_{t}} \right\}}  \;  \frac{1}{\rho} \sum\limits_{l \in {\left\{r,t\right\}}} \left( \left\|\mathbf{Q}_{l}\right\|_*- \left\|\mathbf{Q}_{l}\right\|_2 \right)  \\
    		& \quad {\rm{s.t.}} \; \quad  \eqref{U_initial}, \eqref{PasBF_Pro_Ori_SINR}-\eqref{PasBF_Pro_Ori_amplitude}.  
    	\end{align}
    \end{subequations}where $ {\rho}> 0$ decreases among consecutively iterations to increase the value of the objective function when the rank one solution cannot be guaranteed. To deal with the non-convexity of the penalty term, the SCA approach is adopted to derive a convex upper bound of the penalty term, which is given by
   \begin{algorithm}[!t]
   	\caption{The penalty-based joint BF Algorithm}\label{Alg_BF}
   	\begin{algorithmic}[1]\label{alg_2}
   		\STATE {\textbf{Input:}} 
   		\STATE {\textbf{Initialize:}} The feasible points $ \left\{  {\mathbf{{q}}}_{r}^{\left( 0 \right)}, {\mathbf{{q}}}_{t}^{\left( 0 \right)},  {\mathbf{v}}_{c}^{\left(0\right)}, {\mathbf{R}}_{s0}^{\left(0\right)} \right\}$, ${\rho}^{(0)}\ge 0$, and $0 <c <1$.  
   		\STATE The out layer iteration index $i=1$. 		 
   		\REPEAT 
   		\STATE The inner layer iteration index $n=1$.
   		\REPEAT 
   		\STATE Update $\left\{ {\mathbf{U}}^{\left(n\right)}, {\mathbf{V}}_{c}^{\left(n\right)}, {\mathbf{R}}_{s0}^{\left(n\right)} \right\}$ by solving \eqref{ActBF_Pro_Ori_Maxtrix_SDR}.
   		\STATE Update the penalty parts by \eqref{penalty_update}.
   		\STATE Update $\left\{ {\mathbf{Q}}_{r}^{\left(n\right)}, {\mathbf{Q}}_{t}^{\left(n\right)} \right\} $ by solving \eqref{PasBF_Pro_Ori_penalty_final}.
   		\UNTIL The relevant variation of objective value for \eqref{PasBF_Pro_Ori} among consecutive iterations is below a predefined threshold ${\epsilon}$ or the maximum number of inner iterations is reached. 
   		\STATE ${\rho}^{\left(i+1\right)}=c{\rho}^{\left(i\right)}.$
   		\STATE Reconstruct the diagonal matrix solution $\left\{ {\bm{\Theta}}_{r}^{\left(n\right)}, {\bm{\Theta}}_{t}^{\left(n\right)} \right\} $ via SVD.
   		\UNTIL{The relevant variation of objective value for \eqref{ActBF_Pro_Ori} among consecutive iteration is below a predefined threshold ${\epsilon}_{1}$.  }
   		\STATE \textbf{Output: \; $\left\{ {\mathbf{V}}_{c}^{*}, {\mathbf{R}}_{s0}^{*}, {\bm{\Theta}}_{r}^{*}, {\bm{\Theta}}_{t}^{*} \right\}$}. 
   	\end{algorithmic}
   \end{algorithm}
   
   \begin{algorithm}[tp]
   	\caption{AO algorithm for problem~\eqref{Eq:PF_ori}}\label{Alg_Overall}
   	\begin{algorithmic}[1]\label{alg_3}
   		\STATE Initialize feasible points $   \left\{ {M}_{r}^{\left(0\right)},d_{s}^{\left(0\right)} {\bm{\Theta}}_{r}^{\left( 0 \right)}, {\bm{\Theta}}_{t}^{\left( 0 \right)},  {\mathbf{V}}_{c}^{\left(0\right)}, {\mathbf{R}}_{s0}^{\left(0\right)} \right\}$ and the out layer iteration index $i=1$. 		
   		\REPEAT 
   		\STATE Update $\left\{ {M}_{r}^{\left(i\right)},d_{s}^{\left(i\right)}\right\} $ by \textbf{Algorithm~\ref{Alg_IntervalSub}}.
   		\STATE Update $\left\{ {\bm{\Theta}}_{r}^{\left( i \right)}, {\bm{\Theta}}_{t}^{\left( i \right)},  {\mathbf{V}}_{c}^{\left(i\right)}, {\mathbf{R}}_{s0}^{\left(i\right)}\right\} $ by \textbf{Algorithm~\ref{Alg_BF}}.
   		\UNTIL{The relevant variation of objective value for \eqref{Eq:PF_ori} among consecutive iteration is below a predefined threshold ${\epsilon}$.  }
   		\STATE \textbf{Output: \; $\left\{ {M}_{r}^{*},d_{s}^{*}, {\mathbf{v}}_{c}^{*}, {\mathbf{R}}_{x}^{*}, {\mathbf{q}}_{r}^{*}, {\mathbf{q}}_{t}^{*} \right\}$}. 
   	\end{algorithmic}
   \end{algorithm}

   \begin{equation}\label{penalty_update}
   	\left\|\mathbf{Q}_{l} \right\|_*-\left\|\mathbf{Q}_{l}\right\|_2 \leq \left\|\mathbf{Q}_{l}\right\|_*-\overline{\mathbf{Q}}_{l}^{n},
   \end{equation}where $\overline{\mathbf{Q}}_{l}^{n} \quad \triangleq \quad\left\|\mathbf{Q}_{l}^{n}\right\|_2+{\rm{Tr }} \left[\overline{\mathbf{u}} \left(\mathbf{Q}_{l}^{n}\right) \left(\overline{\mathbf{u}}\left(\mathbf{Q}_{l}^{n}\right) \right)^H\left(\mathbf{Q}_{l}-\mathbf{Q}_{l}^{n}\right)\right] $, $\overline{\mathbf{u}} \left(\mathbf{Q}_{l}^{n}\right)$ is the largest eigenvector corresponding to ${\mathbf{Q}}_{l}$. Then, problem \eqref{PasBF_Pro_Ori_penalty} can be transformed into a convex optimization as follows: 
   \begin{subequations}\label{PasBF_Pro_Ori_penalty_final}
   	\begin{align}
   		& \mathop {\min} \limits_{ \left\{ {{{\mathbf{Q}}_r}}, {{\mathbf{Q}}_{t}} \right\}}  \;  \frac{1}{\rho} \sum\limits_{l \in {\left\{r,t\right\}}} \left( \left\|\mathbf{Q}_{l}\right\|_*-\overline{\mathbf{Q}}_{l}^{n}  \right)  \\
   		& \quad {\rm{s.t.}} \; \quad  \eqref{U_initial}, \eqref{PasBF_Pro_Ori_SINR}-\eqref{PasBF_Pro_Ori_amplitude},
   	\end{align}
   \end{subequations}which can be solved via CVX~\cite{cvx}. To sum up, the penalty-based two-layer iterative algorithm is proposed as shown in~\textbf{Algorithm~\ref{Alg_BF}}. For the inner layer, the convex optimization for active BF $ \left\{ {\mathbf{V}}_{c}, {\mathbf{R}}_{s0}\right\}$ and passive BF are consecutive solved until the termination condition is satisfied as $ \sum_{l \in {\left\{r,t\right\}}} \left( \left\|\mathbf{Q}_{l}\right\|_*-\overline{\mathbf{Q}}_{l}^{n }\right) \le {\epsilon}.$ For the outer layer, the penalty factor $\rho$ is updated as ${\rho}^{\left(i+1\right)}=c{\rho}^{\left(i\right)}, c<1 $ to guarantee the rank one solution. To guarantee the convergence of~\textbf{Algorithm~\ref{Alg_Overall}}, the penalty term is set as a large value at the initial which decreased with the increased iteration. In a nutshell, the overall algorithm to optimize the sensor deployment and BF design is summarised in~\textbf{Algorithm~\ref{Alg_Overall}}. The proposed sensor deployment and beamforming design algorithms are alternatively optimized until the convergence has been achieved.  
    
\subsection{Overall Algorithm: Convergence and Complexity}
\begin{figure}[!t]
	\centering
	\includegraphics[width=3in]{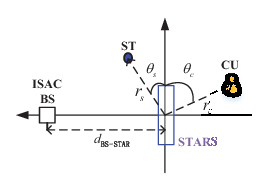}
	\caption{The setup of the STARS-assisted NF ISAC system}
\end{figure}
The convergence performance of \textbf{Algorithm~\ref{Alg_Overall}} is summarized in the following. It is noted that the recovery for discrete variables with the comparison of adjacent integers can generally guarantee the convergence of the proposed joint sensor deployment algorithm in practice. For the convergence of \textbf{Algorithm~\ref{Alg_BF}}, the objective function of \eqref{ActBF_Pro_Ori} is non-increasing among each iteration in the inner layer. With the decreased penalty parts among the consecutive out layer iteration, \textbf{Algorithm~\ref{Alg_BF}} converges to a Karush-Kuhn-Tucker point of \eqref{ActBF_Pro_Ori} when the penalty parts approach zero~\cite{Qingjiang_20TSP_1}. During the alternative optimization, the objective value is non-increasing within one iteration. Considering the objective function is lower bounded by the limited transmit power and aperture constraints, the proposed \textbf{Algorithm~\ref{Alg_Overall}} is guaranteed to converge to a stationary point of~\eqref{Eq:PF_ori}. The complexity of \textbf{Algorithm~\ref{Alg_Overall}} is analysed as follows. The main complexity of \textbf{Algorithm~\ref{Alg_Overall}} arise from \textbf{Algorithm~\ref{Alg_BF}} due to the high dimension of $\left\{ {\mathbf{\Theta}}_{l}, {\mathbf{V}}_{c}, {\mathbf{R}}_{s0} \right\}, l \in \left\{r,t\right\}$. By employing the inter-point method, the complexity of solving \eqref{ActBF_Pro_Ori_Maxtrix_SDR} and \eqref{PasBF_Pro_Ori_penalty_final} are given by ${\mathcal{O}}\left( \left(N^{3.5}+2^{3.5} \right) \log \left( \frac{1}{\epsilon}\right)\right) $ and ${\mathcal{O}}\left( \left(M^{3.5}2^{3.5} \right)\log \left( \frac{1}{\epsilon}\right) \right)$, respectively. Let $I_{2,i}$ and $I_{2,o}$ denote the inner layer and out layer iteration number, the complexity of \textbf{Algorithm~\ref{Alg_BF}} is given by ${\mathcal{O}}\left( I_{2,o} I_{2,i}\left(N^{3.5}+2^{3.5}+2^{3.5}M^{3.5} \right) \log \left( \frac{1}{\epsilon}\right) \right)$. By assuming $ I_{3}$ denotes the iteration number of \textbf{Algorithm~\ref{Alg_Overall}}, the complexity of \textbf{Algorithm~\ref{Alg_Overall}} is in the order of ${\mathcal{O}}\left( I_{3} I_{2,o} I_{2,i}\left(N^{3.5}+2^{3.5}+2^{3.5}M^{3.5} \right) \log \left( \frac{1}{\epsilon}\right) \right)$.

\section{Numerical Results}
\begin{figure}[!t]
	\centering
	\includegraphics[width=3.5in]{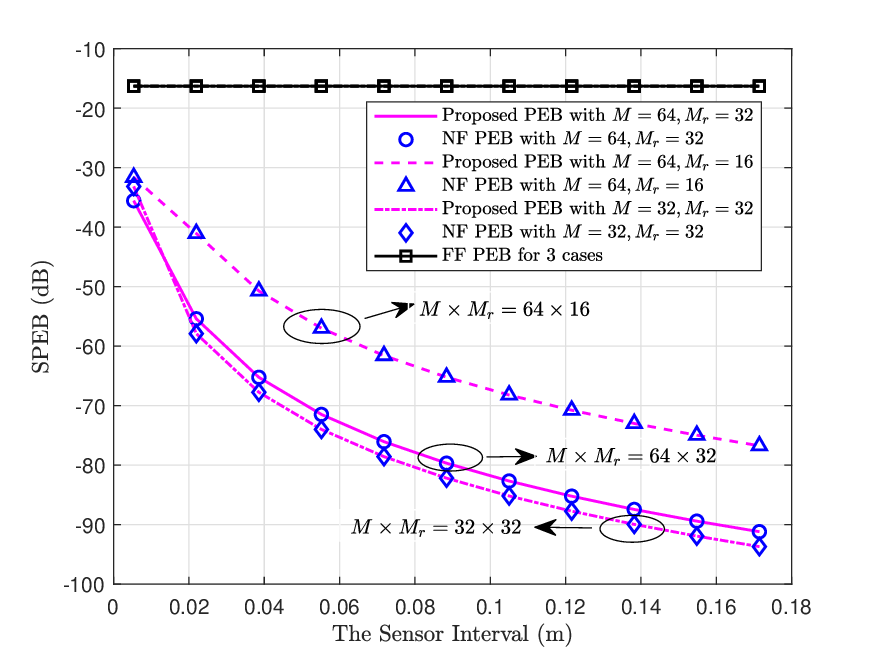}\\
	\caption{The impact of sensor interval and sensor numbers on SPEB.}\label{Fig_Val_1}
\end{figure}
\begin{figure}[!t]
	\centering
	\includegraphics[width=3.5in]{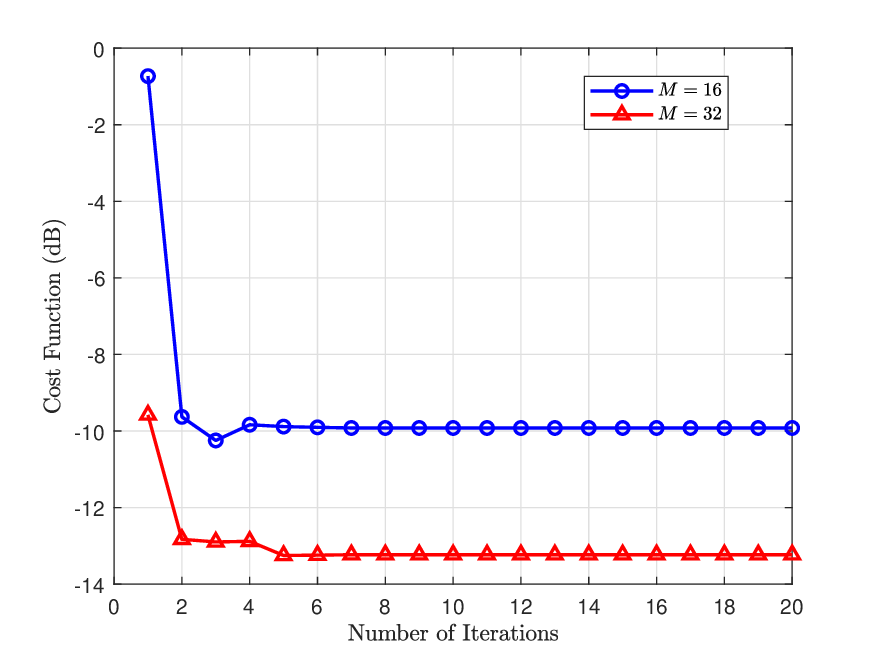}\\
	\caption{The convergence performance of \textbf{Algorithm~\ref{Alg_Overall}}.}\label{Convergence}
\end{figure}

In this section, numerical results are provided via Monte Carlo simulations to verify the benefits of the proposed algorithms under the NF propagation surroundings. In particular, the BS equipped with a ULA $N=128$ antennas is assumed to operate at ${f}_{c}=28$GHz $({\lambda}_{c}=0.0107m)$. The adjacent antennas maintain a half-wavelength interval and the antenna aperture at the BS is ${D}_{B}=\frac{\left(N-1\right) {\lambda}_{c}}{2} \approx 0.68m$. The STARS is assumed to equip with a linear uniformly distributed $M=32$ STAR elements with the spacing $d_{R}=\frac{\lambda_{c}}{2}$, where the aperture of STAR elements is ${D}_{R}=\frac{ \left(M-1\right) \lambda_{c}}{2}\approx 0.17m$. There is one sensing target and one communication user in each half-space. During each channel realization, the CU and ST are randomly located among the STARS where its angles are randomly distributed in $\left[45^{\circ},135^{\circ}\right]$. The weight of normalized SPEB is set as ${w}_{0}=0.5$ and the most affordable number of sensor elements is set as ${M}_{0}=M$. The predefined sensing precision is set as ${\epsilon}_{0}=10^{-5}$. Without loss of generality, the algorithm initialization is set to ${\mathbf{R}}_{x}^{0}=\frac{P_{max}}{N} {\mathbf{I}}_{N}, {\bm{\Theta}}_{r}^{0}=\frac{\sqrt{2}}{2}{\mathbf{I}}_{N}, {M}_{r}^{0}=M$, and ${d}_{s}^{0}=0.5{\lambda}_{c}$. We set the convergence parameters ${\epsilon}=10^{-5}$ and ${\epsilon}_{1}=10^{-8}$. To evaluate the performance of the proposed algorithms, three benchmark schemes are considered as follows: 
\vspace{-0.1cm}
\begin{itemize}
	\item \textbf{Random Sensor Deployment (RandM)}: In this scheme, the interval among two radio sensors is fixed as $d_{{\rm{R}}}=0.5{\lambda}$. The number of radio sensors is randomly distributed among $\left[2, {M}\right]$. The active and passive BF designs are optimized via \textbf{Algorithm~\ref{Alg_BF}}. 
	
	\item \textbf{Conventional RIS (conRIS)}: In this scheme, the full space radio sensing and communication functionalities are implemented via one reflecting-only RIS and one transmitting-only RIS. For the fair comparison, there are $\frac{ M_{t} }{2}$ elements in each separate RIS, where $M_{t}$ is assumed to be an even number for simplicity. Thus, the corresponding optimization can be solved via \textbf{Algorithm ~\ref{Alg_BF}}. 
	
	\item \textbf{Conventional ISAC (conISAC)}: In this scheme, the communication and radio sensing functionalities are both conducted via the joint C\&S beams as shown in~\cite{FanLiu_22TSP}. We adopt the block coordinate descent (BCD) approach where the active BF is solved via SDR and the passive BF is solved via \textbf{Algorithm~\ref{Alg_BF}}. 
\end{itemize}
\vspace{-0.4cm}
\subsection{Validation of the Derived SPEB Expression} 
\begin{figure}[!t]
	\centering
	\subfigure[The cost function value versus ${P}_{max}$.]{\label{CF_Pmax}
		\includegraphics[scale=0.5]{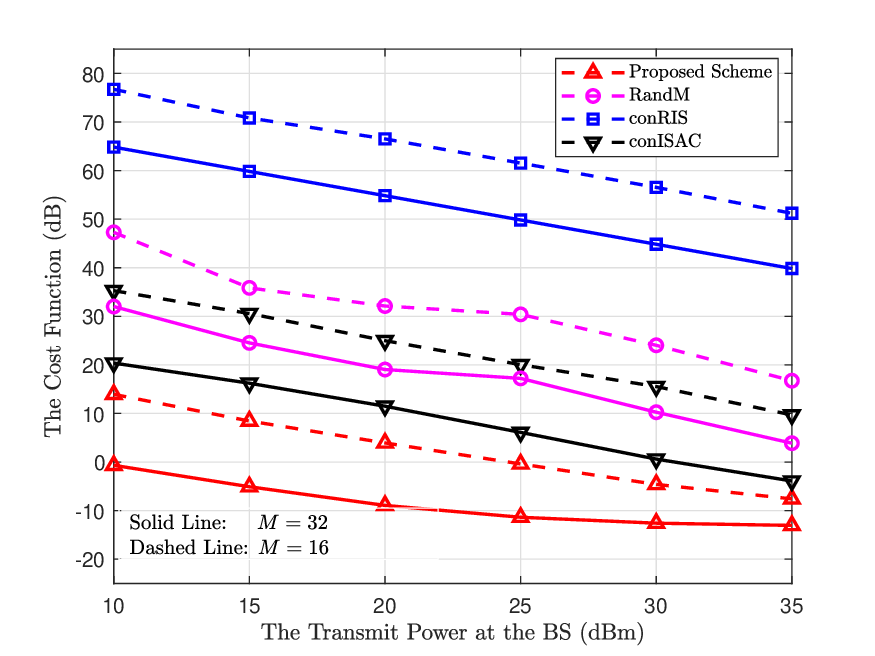}}
	\subfigure[The SPEB versus ${P}_{max}$.]{\label{PEB_Pmax}
		\includegraphics[scale=0.5]{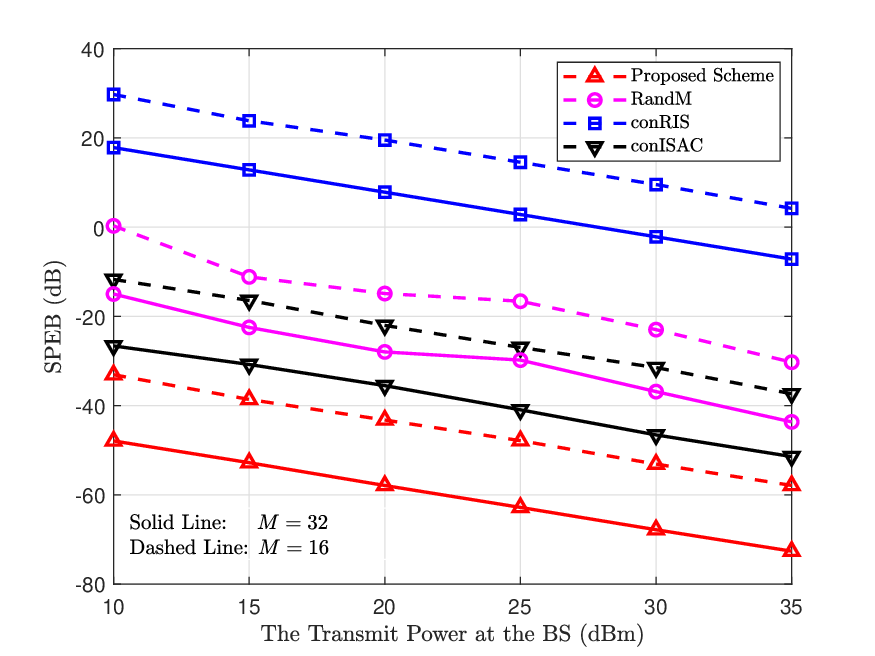} }
	\caption{The performance comparison versus ${P}_{max}$.}
	\label{PEBCF_Pmax}
\end{figure}
To validate the quantity relationship between the SPEB and the sensor deployment, we compare the SPEB performance with the numerical SPEB~\cite{Zhaolin23_NF}, which has no approximation for the NF steering vector. With the fixed active and passive BF, the impact of the sensor number on SPEB has been explored in Fig.~\ref{Fig_Val_1} with different numbers of STAR and sensor elements. From Fig.~\ref{Fig_Val_1}, it is clear that our proposed SPEB derivation is close to the numerical SPEB. Compared with the FF estimation, the NF radio sensing estimation is much more sensitive to the sensor interval and the number of sensor elements, where the SPEB decreases with the increase of the sensor interval. This is because the FF estimation employs the planar wavefronts to model the received signal at the phased array, which mainly depends on the aperture size of the antenna array. As the NF estimation employs spherical wavefronts to depict the received signal, the SPEB performance is decided by the number of sensor elements and the sensor interval, which validate the importance of our work. Compared with the NF radio sensing among three cases (i.e., ${M} \times {M}_{r}=64 \times 32$, ${M} \times {M}_{r}=32 \times 32 $, and ${M} \times {M}_{r}=64 \times 16$), it is clear that the SPEB performance is always limited by the minimum number among the STAR and sensor elements. 
\vspace{-0.3cm}
\subsection{Convergence Performance}
In Fig.~\ref{Convergence}, we compare the out-layer convergence behavior of the proposed AO with different STARS elements $M_{t}$. We set $N =128$, ${R}_{min}=0$dB, and ${M}_{0} = M_{t}$. The STARS is equipped with ${M}=16$ and ${M}=32$, respectively. The results are obtained via one channel realization. As shown in Fig. 4, the CF of \textbf{Algorithm~\ref{Alg_Overall}} decreases quickly with the outer layer iteration increased and converge at around 6 iterations for both scenarios, which validates the effectiveness of our proposed algorithms.
\vspace{-0.2cm}
\subsection{Radio Sensing Performance}

\subsubsection{The Radio Sensing Performance Versus Different ${P}_{max}$}

\begin{figure}[!t]
	\centering
	\subfigure[The cost function value versus $M$.]{\label{CF_M}
		\includegraphics[scale=0.5]{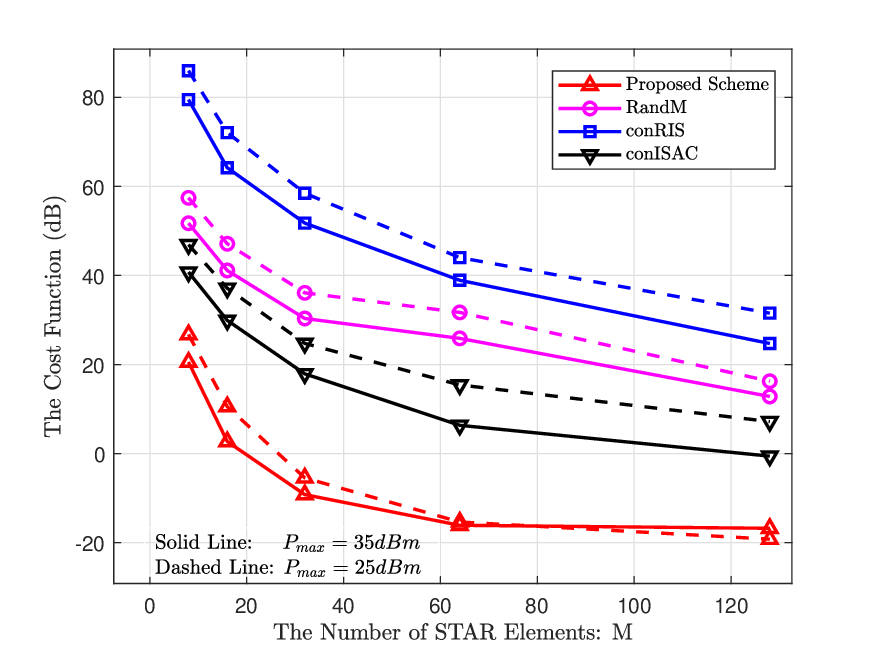}}
	\subfigure[The SPEB versus $M$.]{\label{PEB_M}
		\includegraphics[scale=0.5]{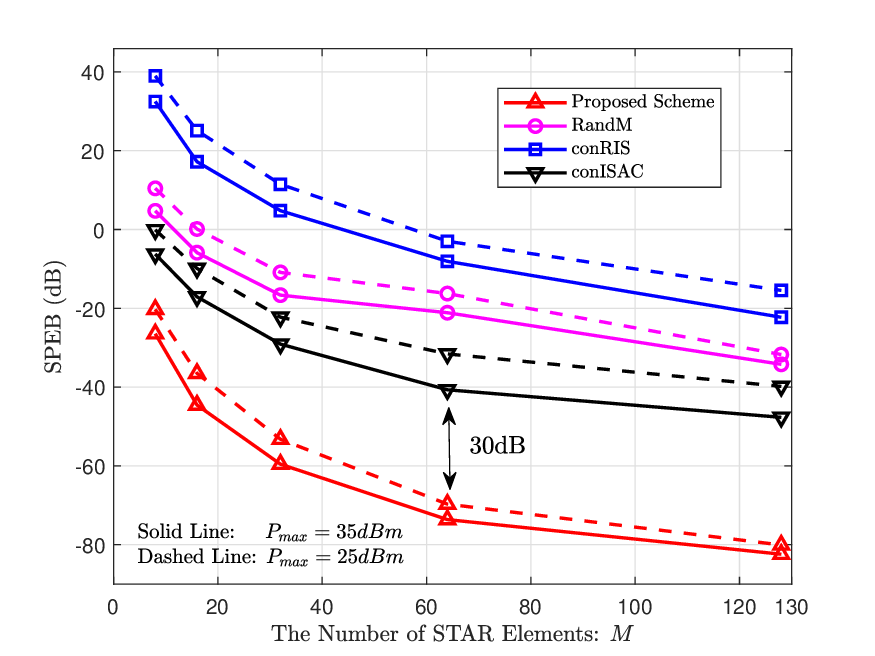} }
	\caption{The performance comparison versus ${M}$.}
	\label{PEBCF_M}
\end{figure}
In Fig.~\ref{PEBCF_Pmax}, we compare the CF and SPEB performance versus the transmit power $P_{max}$ at the BS for $M=32$ and $M=16$. We set ${\bar{R}}_{min}=0$dB and ${\epsilon_{0}}=10^{-5}$. The STs are randomly distributed around  ${\mathbf{p}}_{s}=\left[0,20,0\right]^{{\rm{T}}}$ with ${r}_{sen}=3$. It can be seen that our proposed algorithms achieve the least SPEB and CF compared with the other schemes. The reason can be explained as follows. First, the proposed scheme improves the radio sensing performance with the aid of more STAR elements compared with the conRIS scheme. Second, the dedicated sensing beam in our proposed algorithm is prone to focus on the ST compared with the conISAC scheme. Thirdly, our proposed algorithms save the deployment cost to achieve a better radio sensing performance as well as a lower CF compared with the RandM scheme. Meanwhile, the SPEB and CF are both decreased with the increased ${P}_{max}$. This is because the strength of the received echo signal depends on the transmit power, where more transmission power enables the stronger signal strength to keep a less estimation error. 

\subsubsection{The Radio Sensing Performance Versus Different $M$}
In Fig.~\ref{PEBCF_M}, we compare the radio sensing performance versus the number of STAR elements with $P_{max}=25$dBm and $P_{max}=35$dBm. We set $N=128$, ${\bar{R}}_{min} = 0$dB and ${\epsilon}_{0}=10^{-5}$. The STs are randomly distributed around ${\mathbf{p}}_{s}=\left[0, 10,0 \right]^{{\rm{T}}}$ with ${r}_{sen}=10$. The proposed algorithms achieve the least CF and SPEB compared with the other schemes, which validates the superiority of employing the dedicated sensing beam. The CF and SPEB of all four algorithms decrease with the increased STAR elements in Fig~\ref{PEBCF_M}. This is because more STAR elements facilitate a greater beampattern gain at the ST benefiting from the focus on both the distance and angle domain via NF propagation. Compared with the RandM scheme, our proposed algorithm achieves at least 20dB gain via the optimization of sensor deployment, which strengthens the necessity of our work. Meanwhile, the gap between our proposed scheme and the benchmark schemes enlarged with the increased STAR number. When the number of STAR elements is greater than $64$, the proposed algorithms with two scenarios reach the close performance in Fig~\ref{CF_M}, which shows the superiority in larger scale STARS with limited energy consumption. The proposed algorithm reaches at least $30$dB gain compared with the other benchmark schemes if $M \ge 64$ in Fig.~\ref{PEB_M}. 

\subsubsection{The Radio Sensing Performance Comparison With ML and MUSIC}
\begin{figure}[!t]
	\centering
	\includegraphics[width=3in]{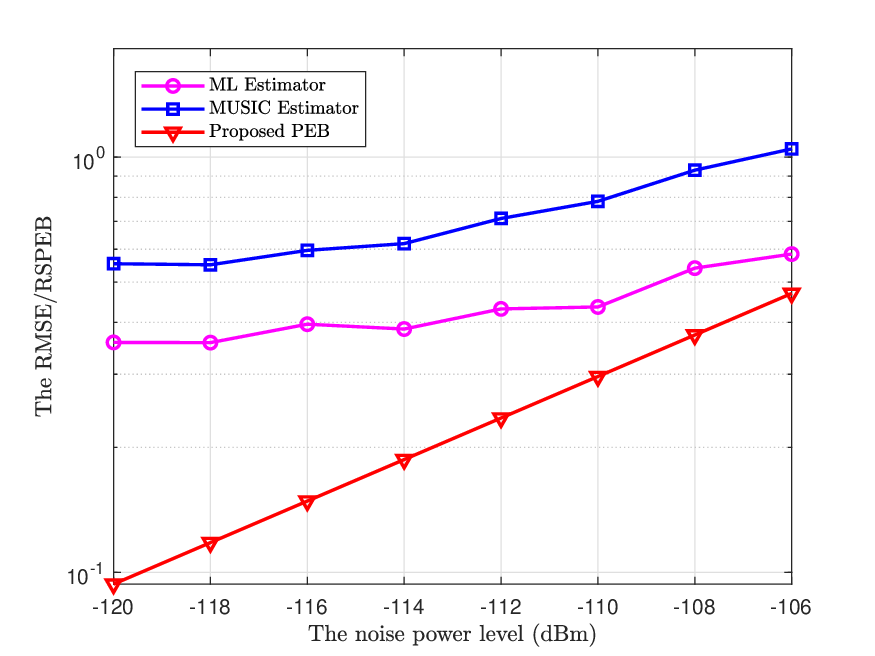}\\
	\caption{The RMSE/RSPEB for the position estimation versus noise power at the sensors.}\label{RMSE}
\end{figure}
In Fig.~\ref{RMSE}, we compare the root mean square error (RMSE)/root SPEB between the proposed SPEB approximation and the classic ML and MUSIC estimation algorithm. We set $P_{max}=30$dBm, $N=64$, $M={M}_{r}=16$. The intervals between the adjacent STAR and sensor elements are set to $\frac{\lambda}{2}$. The RMSE/RSPEB increases with the increase of the noise power. With the increase of the noise power, the received power of the useful echo signal decreased, which makes it harder to distinguish the useful signal and the noise. Meanwhile, the proposed SPEB approximation is the lower bound of the classical ML and MUSIC estimation algorithm. The gap between the proposed SPEB and ML/MUSIC decreased with the smaller estimation resolution. As the steering vectors of the same angle among different distances in NF propagation cannot hold the orthogonal, the distance estimation with ML and MUSIC algorithm cannot reach a close performance to ${\rm{CRB}}_{r}$, where the classical estimation algorithm should be modified to adapt to the NF propagation.

\section{Conclusion} 
In this paper, a STARS-enabled NF ISAC framework has been investigated, where the `sensing-at-STARS' architecture was adopted to combat the signal attenuation due to multi-hop sensing links. Benefiting from the spherical wavefronts model in NF propagation environment, the estimation on distance was invoked to further elevate the radio sensing performance. A new expression of SPEB was derived to quantitatively reveal the dependence of the joint BF design and sensor deployment, where the radio sensing performance is improved with the increment in sensor numbers and/or sensor interval. A conditional optimal solution for sensor interval was given while the SCA and GP were employed to optimize the number of radio sensors. To solve the highly-coupled joint BF design problem, a penalty-based two-layer iterative algorithm was proposed by exploiting the Schur complement, SDR and SCA technologies. Numerical results validated that the derived SPEB expression has a good approximation to the exact SPEB, where the FIM of the estimated position in the NF propagation can be approximated by its diagonal matrix. With the same joint BF design, the best SPEB performance will be reached if the aperture size of the STAR elements is equal to the sensor elements. Meanwhile, our proposed joint sensor deployment and BF algorithm achieved the best radio sensing performance accompanied by the lowest deployment cost compared to the benchmarks.

\appendix

\begin{figure*}
	\begin{equation}\label{FIM_element_Joint}
		\begin{split}
			{\mathbf{J}}_{{{\tilde{{\bm{\eta}}}}} } &= \frac{{2{{\left| {{\alpha_s}} \right|}^2}L}} {{{\sigma ^2}}} [ \begin{array}{*{20}{c}}
				{\begin{array}{*{20}{c}}
						{\Re [{\rm{Tr}} ( {{\mathbf{R}}_{ {\bar{{\mathbf{X}}}}_{r} }}  {\dot{\mathbf{A}}}_{{\theta _z}}^H{{\dot{\mathbf{A}}}_{{\theta_z}}})] }\\
						{\Re [{\rm{Tr}} ( {{\mathbf{R}}_{ {\bar{{\mathbf{X}}}}_{r} }}  {\dot{\mathbf{A}}}_{{r _z}}^H{{\dot{\mathbf{A}}}_{{\theta_z}}})]}
				\end{array}}&{\begin{array}{*{20}{c}}
						{\Re [{\rm{Tr}} ( {{\mathbf{R}}_{ {\bar{{\mathbf{X}}}}_{r} }} {\dot{\mathbf{A}}}_{{\theta_z}}^H{{\dot{\mathbf{A}}}_{{r_z}}})] }\\
						{\Re [{\rm{Tr}} ( {{\mathbf{R}}_{ {\bar{{\mathbf{X}}}}_{r} }}  {\dot{\mathbf{A}}}_{{r_z}}^H{{\dot{\mathbf{A}}}_{{r_z}}})] }, 
				\end{array}}
			\end{array} ]
		\end{split}
	\end{equation} 
	\hrulefill
\end{figure*}

\subsection{Proof of {\textbf{Proposition~\ref{Pro:FIM_expression}}} } \label{App:FIM_expression}
\renewcommand{\theequation}{A.\arabic{equation}}
\setcounter{equation}{0}

Let ${\mathbf{R}}_{ {\bar{{\mathbf{X}}}}_{r}}=\frac{ {\bar{{\mathbf{X}}}}_{r} {\bar{{\mathbf{X}}}}_{r}^{H} }{L} $, where ${\bar{{\mathbf{X}}}}_{r} =\left[ {\bar{{\mathbf{x}}}}_{r}\left[1\right], ..., {\bar{{\mathbf{x}}}}_{r}\left[l\right]\right] \in {\mathbb{C}}^{M \times L}$, the entries of ${\mathbf{J}}_{{{\tilde{{\bm{\theta}}}}_s} {{\tilde{{\bm{\theta}}}}_s}} $ are expressed as~\eqref{FIM_element_Joint}, where ${\bar{{\mathbf{A}}}}_{t}={{\bm{\alpha}}_t} {\bm{\alpha}}_t^H \in {\mathbb{C}}^{M \times M}$. As the entries are determined by $ \left\{ {\dot{\mathbf{A}}}_{{\theta _z}}^H{{\dot{\mathbf{A}}}_{{\theta_z}}}, {\dot{\mathbf{A}}}_{{\theta _z}}^H{{\dot{\mathbf{A}}}_{{r_z}}}, {\dot{\mathbf{A}}}_{{r_z}}^H {{\dot{\mathbf{A}}}_{{r_z}}}\right\}$, we first exploit the symmetry property of ${\mathbf{v}}_{l,1}$ and the Hermitian property of $ {{\mathbf{R}}_{ {\bar{{\mathbf{X}}}}_{r} }}, {{\bar {\mathbf{A}}}_t}$, where we have the following findings: 
	\begin{itemize}\label{V_finding}
		\item 1):  ${\bm{\alpha}}_{r}^{H} {\rm{diag}}\left({\mathbf{v}}_{r,1}\right) {\rm{diag}} \left( {\mathbf{v}}_{r,1} \right) {\bm{\alpha}}_{r}=\frac{ M_{r} \left({M}_{r}+1\right) \left({M}_{r}+2\right) }{12} $; \\ 
		\item 2): $ {\rm{Tr}}( {\rm{diag}} \left( {{\mathbf{v}}_{t,1}} \right) {\mathbf{R}}_{{{\bar{\mathbf{X}}}_r}} {\bar {\mathbf{A}}_t} ) = 0 $; \\ 
		\item 3): ${{\mathbf{A}}^H}{\rm{diag}}({{\mathbf{v}}_{r,2}} \odot {{\mathbf{v}}_{r,2}}){\mathbf{A}}=\frac{{M_r({M_r} + 1)({M_r} + 2)(3M_r^2 + 6{M_r} - 4){{{\mathbf{\bar{A}}}}_t}}} {{240}} $.  
	\end{itemize}	The first finding can be proved via~\eqref{App3_Pro}, which can be derived as
	\begin{equation}
		\begin{split}
			& \quad {\bm{\alpha}}_{r}^{H} {\rm{diag}}\left({\mathbf{v}}_{r,1}\right) {\rm{diag}} \left( {\mathbf{v}}_{r,1} \right) {\bm{\alpha}}_{r} = 
			\left\| { {{{\mathbf{v}}_{r,1}}} } \right\|_{2}^{2} {\bm{\alpha}}_{r}^{H} {\bm{\alpha}}_{r} \\
			& = \left\| { {{{\mathbf{v}}_{r,1}}} } \right\|_{2}^{2} = 2 \sum\limits_{i = 1}^{{{\tilde M}_r}} {{i^2}} = \frac{ M_{r} \left({M}_{r}+1\right) \left({M}_{r}+2\right) }{12}. 
		\end{split}
	\end{equation}Similarly, the third finding can be derived as:
	\begin{equation}
		\begin{split}
			& {{\mathbf{A}}^H} {\rm{diag}}({{\mathbf{v}}_{r,2}} \odot {{\mathbf{v}}_{r,2}}) {\mathbf{A}}={\bm{\alpha}}_{t} {\bm{\alpha}}_{r}^{H} {\rm{diag}}({{\mathbf{v}}_{r,2}} \odot {{\mathbf{v}}_{r,2}}) {\bm{\alpha}}_{r} {\bm{\alpha}}_{t}^{H}  \\
			& =\left[{\bm{\alpha}}_{r}^{H} {\rm{diag}}({{\mathbf{v}}_{r,2}} \odot {{\mathbf{v}}_{r,2}}) {\bm{\alpha}}_{r} \right] {\bm{\alpha}}_{t}{\bm{\alpha}}_{t}^{H} \\
			& = \left\| { {{{\mathbf{v}}_{r,2}}} } \right\|_{2}^{2} {{{\mathbf{\bar{A}}}}_t} = 2 \sum\limits_{i = 1}^{{{\tilde M}_r}} {{i^4}} {{{\mathbf{\bar{A}}}}_t} \\
			& = \frac{ {M_r({M_r} + 1)({M_r} + 2)(3M_r^2 + 6{M_r} - 4) {{{\mathbf{\bar{A}}}}_t} }} {{240}}.
		\end{split}
	\end{equation}Due to the symmetry of ${\mathbf{v}}_{l,1}, \quad  l \in \left\{r,t\right\} $, we have ${\rm{Tr}}\left({\rm{diag}} \left({\mathbf{v}}_{l,1}\right) \right)= \sum\limits_{i = 1}^{{M_l}} {{{\left[ {{{\mathbf{v}}_{l,1}}} \right]}_i}}=0 $. Recalling that $\left| { {\rm{Tr}}({\mathbf{A}}{\mathbf{B}})} \right| \le {\rm{Tr}}({\mathbf{A}}) {\rm{Tr}}({\mathbf{B}}), \forall {\mathbf{A}} , {\mathbf{B}} \in {\mathbb{F}}^{n}  $~\cite{MatrixTheory_Trace_2}, it can be inferred that $ \left|{ {\rm{Tr}}\left( {{\mathbf{v}}_{t,1}}  {\mathbf{R}}_{{{\bar{\mathbf{X}}}_r}} {\bar{\mathbf{A}}_t} \right) }\right| \le {\rm{Tr}}\left( {{\mathbf{v}}_{t,1}} \right) {\rm{Tr}} \left( {\mathbf{R}}_{{{\bar{\mathbf{X}}}_r}} {\bar{\mathbf{A}}_t} \right)=0$, which indicates that ${\rm{Tr}}\left( {{\mathbf{v}}_{t,1}}  {\mathbf{R}}_{{{\bar{\mathbf{X}}}_r}} {\bar{\mathbf{A}}_t} \right) =0$. Therefore, $\left[ {\mathbf{J}}_{ {\tilde{{\bm{\theta}}}}_{s} {\tilde{{\bm{\theta}}}}_{s}} \right]_{\left(1,2\right)}= \left[ {\mathbf{J}}_{ {\tilde{{\bm{\theta}}}}_{s} {\tilde{{\bm{\theta}}}}_{s}} \right]_{\left(2,1 \right)}=0 $, where  ${\mathbf{J}}_{{{\tilde{{\bm{\theta}}}}_s}{{\tilde{{\bm{\theta}}}}_s}} $ degrades into a diagonal matrix. The ${\mathbf{J}}_{{{\tilde{{\bm{\theta}}}}_s}{{\tilde{{\bm{\theta}}}}_s}} ^{-1} $ is given by $ {\mathbf{J}}_{{{\tilde{{\bm{\theta}}}}_s}{{\tilde{{\bm{\theta}}}}_s}} ^{-1}={\rm{diag}} \left( \left[ \left[ {\mathbf{J}}_{{{\tilde{{\bm{\theta}}}}_s}{{\tilde{{\bm{\theta}}}}_s}} \right]_{\left(1,1\right)}^{-1} ; \left[ {\mathbf{J}}_{{{\tilde{{\bm{\theta}}}}_s}{{\tilde{{\bm{\theta}}}}_s}} \right]_{\left(2,2\right)}^{-1} \right] \right)$. By substituting the mentioned properties into \eqref{FIM_element_Joint}, the expression of $\left[{\mathbf{J}}_{{{\tilde{{\bm{\theta}}}}_s}{{\tilde{{\bm{\theta}}}}_s}} \right]_{\left(i,i\right)}, i \in \left\{1,2\right\} $ can be further expressed as~\eqref{J_expression}. The expression of SPEB can be recast as:
	
    \begin{small}
    	\begin{equation}
    		\begin{split}
    			& {\rm{SPEB}} \left(\{ {\mathbf{R}}_{x}, {\mathbf{q}}_{r}, {\bm{\Theta}}_{r} \} ; {\bm{\eta}} \right) = {\rm{tr}}\left( { \left[ {{ { ( {\mathbf{T}}  {\mathbf{J}}_{ {{\tilde{{\bm{\theta}}}}_s} {\tilde{{\bm{\theta}}}}_{s} }  {{\mathbf{T}}^H} )}^{-1}}} \right] } \right) \\
    			=& {\rm{tr}}\left( {\mathbf{J}}_{ {{\tilde{{\bm{\theta}}}}_s} {\tilde{{\bm{\theta}}}}_{s} }^{-1} { \left[ {{   {{\mathbf{T}}^H} {\mathbf{T}} }} \right]^{-1} } \right) = \frac{ {p}_{x}^{2} {p}_{y}^{2} \sum\limits_{i = 1}^2 \left\{  { [ {\mathbf{J}}_{ {{\tilde{{\bm{\theta}}}}_s} {\tilde{{\bm{\theta}}}}_{s} } ]}_{(i,i)}^{-1} { {[ {\tilde{{\mathbf{T}}}} ]}_{(i,i)} } \right\}  }{ \left({p}_{x}+{p}_{y}\right)^{2} } ,
    		\end{split}
    	\end{equation}
    \end{small}

 \noindent The proof is complete.

\subsection{Proof of {\textbf{Proposition~\ref{App_Opt_ds}}} } {\label{App:Opt_ds}}
\renewcommand{\theequation}{B.\arabic{equation}}
\setcounter{equation}{0}
Let $f\left( {\tilde{d}}_{s} \right)= f_{1} \left( {\tilde{d}}_{s} \right)+ f_{2} \left( {\tilde{d}}_{s} \right)$, where $f_{1} \left( {\tilde{d}}_{s} \right) =\frac{ {[\tilde{{\mathbf{T}}}]}_{(1,1)} } { { {\rm{C}}_{11}} {{\tilde{d}}_s}+{\rm{C}}_{10} } $, $f_{2} \left( {\tilde{d}}_{s} \right)=\frac{{{{[\tilde{{\mathbf{T}}} ]}_{(2,2)}}}} { {\rm{C}_2} {\tilde{d}}_s^2 - {\rm{C}_1} {{\tilde{d}}_s} + {{\rm{C}_0}} }=\frac{{{{[\tilde{{\mathbf{T}}} ]}_{(2,2)}}}}  { {\rm{C}_2}\left( {\tilde{d}}_s-\frac{{\rm{C}_{1}}} {2{\rm{C}}_{2}} \right)^{2} + {\rm{C}_0} - \frac{ {\rm{C}}_{1}^{2} } { 4{{\rm{C}}_{2} } } } $. As ${\rm{C}}_{11}>0 $, it's clear that $f_{1}\left({\tilde{d}}_{s}\right) $ is monotonically decreasing, which indicates $\left[ f_{1}\left( {\tilde{d}}_{s} \right) \right]_{min} =f_{1} \left( \frac{ {M}^{2} d_{{\rm{R}}}^{2} }{ {M}_{r}^{2} } \right) $. With the different relationship among ${{\rm{C}}_0} - \frac{ {\rm{C}}_{1}^{2} }{ 4{{\rm{C}}}_{2} } \& 0 $ and $ \left| {\frac{{{{\rm{C}}_1}}}{{2{{\rm{C}}_2}}} - \frac{ {M}^{2} d_{{\rm{R}}}^{2} }{4}  } \right| \& \left| { \frac{{{{\rm{C}}_1}}}{{2{{\rm{C}}_2}}} - \frac{ {M}^{2} d_{{\rm{R}}}^{2} }{ {M}_{r}^{2} } } \right| $, $ f_{2} \left( {\tilde{d}}_{s} \right)$ not maintains the monotonic decreasing property on $ {\tilde{d}}_{s}$. If $ {{\rm{C}}_0} - \frac{ {\rm{C}}_{1}^{2} }{ 4{{\rm{C}}}_{2} } >0$,  $f_{2} \left( {\tilde{d}}_{s} \right) >0, f\left( {\tilde{d}}_{s} \right)>f_{1} \left( {\tilde{d}}_{s} \right) \ge f_{1}\left( \frac{ {M}^{2} d_{{\rm{R}}}^{2} }{{M}_{r}^{2}} \right) , \forall {\tilde{d}}_{s} \in \left[\frac{ {M}^{2} d_{{\rm{R}}}^{2} }{ 4 }, \frac{ {M}^{2} d_{{\rm{R}}}^{2} }{ {M}_{r}^{2} } \right] $. Under this condition, the optimal value of $f_{2}\left( {\tilde{d}}_{s} \right)$ can also be reached at ${\tilde{d}}_{s}=\left(\frac{ {M}^{2} d_{{\rm{R}}}^{2} }{ {M}_{r}^{2} } \right) $ if $\left| {\frac{{{{\rm{C}}_1}}}{{2{{\rm{C}}_2}}} - \frac{ {M}^{2} d_{{\rm{R}}}^{2} }{4}  } \right| < \left| { \frac{{{{\rm{C}}_1}}}{{2{{\rm{C}}_2}}} - \frac{ {M}^{2} d_{{\rm{R}}}^{2} }{ {M}_{r}^{2} } } \right| $. The proof is complete. 
		
	\vspace{-0.2cm}
	\bibliographystyle{IEEEtran}
	\bibliography{IEEEabrv,ref_JointLocalization}   
\end{document}